\documentclass[11pt]{article}
\usepackage[latin9]{inputenc}
\usepackage{geometry}
\geometry{verbose,tmargin=1in,bmargin=1in,lmargin=1in,rmargin=1in}
\usepackage{verbatim}
\usepackage{textcomp}
\usepackage{tipa}
\usepackage{amsmath}
\usepackage{amsthm}
\usepackage{amssymb}
\usepackage{graphicx}
\usepackage{setspace}
\usepackage[authoryear]{natbib}
\setstretch{1.5}

\makeatletter

\InputIfFileExists{t2aenc.def}{}{%
  \errmessage{File `t2aenc.def' not found: Cyrillic script not supported}}
\DeclareRobustCommand{\cyrtext}{%
  \fontencoding{T2A}\selectfont\def\encodingdefault{T2A}}
\DeclareRobustCommand{\textcyr}[1]{\leavevmode{\cyrtext #1}}

\DeclareTextSymbolDefault{\textquotedbl}{T1}

\theoremstyle{definition}
\newtheorem{defn}{\protect\definitionname}
\theoremstyle{plain}
\newtheorem{lem}{\protect\lemmaname}
\theoremstyle{plain}
\newtheorem{prop}{\protect\propositionname}
\theoremstyle{remark}
\newtheorem{rem}{\protect\remarkname}
\theoremstyle{plain}
\newtheorem{assumption}{\protect\assumptionname}
\theoremstyle{plain}
\newtheorem{thm}{\protect\theoremname}
\theoremstyle{remark}
\newtheorem*{rem*}{\protect\remarkname}
\theoremstyle{plain}
\newtheorem*{cor*}{\protect\corollaryname}
\theoremstyle{plain}
\newtheorem*{lem*}{\protect\lemmaname}
\theoremstyle{plain}
\newtheorem*{prop*}{\protect\propositionname}
\theoremstyle{remark}
\newtheorem*{claim*}{\protect\claimname}

\AtBeginDocument{

}

\providecommand{\corollaryname}{Corollary}
\providecommand{\definitionname}{Definition}
\providecommand{\lemmaname}{Lemma}
\providecommand{\propositionname}{Proposition}

\usepackage{tikz}

\providecommand{\corollaryname}{Corollary}
\providecommand{\definitionname}{Definition}
\providecommand{\lemmaname}{Lemma}
\providecommand{\propositionname}{Proposition}

\providecommand{\corollaryname}{Corollary}
\providecommand{\definitionname}{Definition}
\providecommand{\lemmaname}{Lemma}
\providecommand{\propositionname}{Proposition}

\providecommand{\corollaryname}{Corollary}
\providecommand{\definitionname}{Definition}
\providecommand{\lemmaname}{Lemma}
\providecommand{\propositionname}{Proposition}

\usepackage[bottom]{footmisc}

\makeatother

\providecommand{\assumptionname}{Assumption}
\providecommand{\claimname}{Claim}
\providecommand{\corollaryname}{Corollary}
\providecommand{\definitionname}{Definition}
\providecommand{\lemmaname}{Lemma}
\providecommand{\propositionname}{Proposition}
\providecommand{\remarkname}{Remark}
\providecommand{\theoremname}{Theorem}

\begin{document}
\title{Policy with stochastic hysteresis }
\author{Georgii Riabov and Aleh Tsyvinski\thanks{Riabov: Institute of Mathematics, NAS of Ukraine; Tsyvinski: Yale
University. We thank Fernando Alvarez, Lint Barrage, Job Boerma, Jaroslav
Borovi\v{c}ka, Eduardo Faingold, Felix Kubler, Francesco Lippi, Ernest
Liu, Erzo Luttmer, Giuseppe Moscarini, Alessandro Pavan, Florian Scheuer,
Simon Scheidegger, Michael Sockin, Stefanie Stantcheva, Kjetil Storesletten,
Philipp Strack, Ted Temzelides, Pierre Yared, and Nicolas Werquin.}}
\maketitle
\begin{abstract}
The paper develops a general methodology for analyzing policies with
path-dependency (hysteresis) in stochastic models with forward looking
optimizing agents. Our main application is a macro-climate model with
a path-dependent climate externality. We derive in closed form the
dynamics of the optimal Pigouvian tax, that is, its drift and diffusion
coefficients. The dynamics of the present marginal damages is given
by the recently developed functional Itô formula. The dynamics of
the conditional expectation process of the future marginal damages
is given by a new total derivative formula that we prove. The total
derivative formula represents the evolution of the conditional expectation
process as a sum of the expected dynamics of hysteresis with respect
to time, a form of a time derivative, and the expected dynamics of
hysteresis with the shocks to the trajectory of the stochastic process,
a form of a stochastic derivative. We then generalize the results.
First, we propose a general class of hysteresis functionals that permits
significant tractability. Second, we characterize in closed form the
dynamics of the stochastic hysteresis elasticity that represents the
change in the whole optimal policy process with an introduction of
small hysteresis effects. Third, we determine the optimal policy process. 
\end{abstract}
\pagebreak{}

\section{Introduction}

Modern macroeconomics is stochastic and is built on recursive methods.
The actions in these models are history dependent but the past is
represented by a small number of finite-dimensional state variables.
For example, Ljungqvist and Sargent (2018) write that ``finding a
recursive way to handle history dependence is a major achievement
of the past 35 years and an important methodological theme''. They
call these developments an ``imperialistic response of dynamic programming''
to Prescott's (1977) critique of the impossibility of using dynamic
programming in government policy design. In this paper, we aim to
study environments in which it may in general be difficult or impossible
to write a recursive representation. We use the term hysteresis to
describe settings in which the trajectory of policies or actions may
generally affect the structure of the environment. That is, the effects
of policies are path-dependent.

Our main economic application is \textcyr{\cyra} generalization of
the macroeconomy-climate model of Golosov, Hassler, Krusell, and Tsyvinski
(2014) to the path-dependent climate externalities. There are two
main reasons to consider hysteresis in such models. First, there is
important recent evidence from climate sciences that a number of climate
variables show significant hysteresis behavior. Intergovernmental
Panel on Climate Change (IPCC) lists, for example, hysteresis that
is present in (1) vegetation change; (2) changes in the ice sheets;
(3) ocean acidification, deep ocean warming and associated sea level
rise; (4) models of the feedback between the ocean and the ice sheets;
and (5) models of the Atlantic meridional overturning circulation
(Collins, et al. 2013 in IPCC Fifth Assessment Report -- AR5). While
cumulative carbon emission is an important benchmark used in macro-climate
models, recent research in climate sciences (further discussed in
Section \ref{subsec:Evidence-on-climate}) points to importance of
hysteresis in a number of climate variables that constitute a natural
science foundation of these models. Moreover, the path-dependency
in these models is often rather sophisticated as the exact sequence
of the events may matter. The IPCC Special Report further concludes
that path dependence of carbon budgets ``remains an important knowledge
gap'' (Rogelj et al. 2018). Second, one of the central themes in
the recent developments in the economics literature on climate change
is incorporating path-dependency. A recent survey by Aghion, et al.
(2019) summarizes these developments and argues that path-dependency
in which both history and expectations matter is one of the core insights
in this literature (see, for example, Acemoglu, Aghion, Bursztyn,
and Hemous (2012), Acemoglu, Akcigit, Hanley, and Kerr (2016), Acemoglu,
Aghion, Barrage and Hemous (2019), and Aghion, Dechezlepretre, Hemous,
Martin, and Van Reenen (2016)). Finally, careful consideration of
uncertainty has been another important theme of recent research on
economics of climate change. Our application thus puts at the forefront
the analysis of an economy where \textcyr{\cyra} forward looking social
planner makes emission decisions under uncertainty and faces a path-dependent
emissions externality. In contrast to the literature and, specifically
to Golosov, Hassler, Krusell, and Tsyvinski (2014), we significantly
relax the assumption of how the previous emission choices enter the
climate damages. In our setting, the whole path of the emission (hysteresis)
matters as opposed to a depreciated stock of the previous emissions.
Specifically, we analyze a particular climate hysteresis functional
that captures the most important elements of the setting with general
hysteresis.

Our main result in this macro-climate application is a closed form
characterization of the dynamics of the marginal externality damage
which is also the Pigouvian tax that implements the optimal allocation.
We start by deriving the first-order conditions for the optimum and
showing that the marginal externality damage from emissions is comprised
of the marginal contemporaneous damages and the conditional expectation
of the cumulative future damages. Both of these terms are path-dependent
and their dynamics cannot be derived using the usual Itô formula which
applies only to functions of the current state and not the functionals
of the trajectories. We use two different sets of tools to provide
the dynamics of the present and the expected cumulative damages.

The present marginal damages are already represented as an explicit
path-dependent functional and we use recently developed functional
Itô calculus for the non-anticipative functionals (Dupire 2009, 2019,
Cont and Fournie 2013) to represent this term. The functional Itô
formula uses two new concepts of derivatives introduced by Dupire
(2009, 2019). The vertical derivative is an analogue of the space
derivative and the horizontal derivative is an analogue of the time
derivative for the functionals. These derivatives evaluate, respectively,
the effects of a discontinuous bump in the underlying trajectory and
the effects of a time extension of the trajectory while fixing the
terminal position. The Dupire's functional Itô formula then gives
the dynamics of the present damages with the drift determined by the
horizontal and the second vertical derivative, and the diffusion coefficient
given by the vertical derivative of the marginal present damages.
Using the notion of the horizontal and vertical derivatives the functional
Itô formula allows us to convey the similar intuition as in the case
of no hysteresis for which the usual Itô formula is applicable. However,
in contrast with the case of no hysteresis path-dependency may lead
to significant effects on the dynamics even in the case of absent
contemporaneous damages.

The most challenging part of the paper is to characterize the dynamics
of the expected future marginal damages represented by the conditional
expectation process. The difficulty comes from the fact that with
passage of time both the damages themselves and the information set
(filtration) are changing. The main new theoretical tool that we develop
-- the total derivative formula for conditional expectation processes
-- allows us to characterize the dynamics of such processes. Moreover,
this formula is broadly applicable to a variety of macroeconomic settings
where the conditional expectations processes are central. There are
two terms in the semimartingale decomposition of the conditional expectations
process that the total derivative formula delivers. The first term
can be intuitively thought of as \textcyr{\cyra} time derivative and
represents how the conditional expectation of the future marginal
damages evolves with respect to time. The second term, given as a
conditional expectation of the Malliavin derivative of the cumulative
future marginal damages, can be thought of as a stochastic derivative
with respect to the underlying process and represents how the conditional
expectation changes with the changes in the stochastic process. It
is useful to compare the results to the case with no hysteresis. The
main difference is that in the case of no hysteresis, the policy does
not have future effects. With hysteresis, the contemporaneous effects
of the policy may be very small but the expected future effects of
the policy may be very large -- small actions today my have significant
future consequences. Moreover, these effects change with both time
and the stochastic shocks.

Combining the dynamics of the present and the conditional expectation
of the future marginal damages, we obtain in closed form the drift
and diffusion of the optimal Pigouvian tax correcting the path-dependent
externality. The dynamics are given by the semimartingale decomposition
of the marginal externality damages which can be though of in terms
of the planner changing the tax with the passage of time and with
the realization of uncertainty. Alternatively, this can be thought
of as determination of what features of the model matter to the first
order for the drift and diffusion of the optimal tax. Specifically,
those are given by the functional Itô formula and by our total derivative
formula for the conditional expectation processes. Even if there are
no direct contemporaneous effects, the dynamics of the optimal policy
may be significantly impacted by either the past choices or by the
future effects. In other words, both the past and the future non-trivially
matter.

We then show that one can significantly generalize the application
of the climate hysteresis. First, we propose a very general class
of hysteresis functionals that allow significant tractability of the
analysis of optimal policies. The Fréchet derivatives of these functionals
have two sources of variation: an instantaneous influence in the given
(current) period and an integral influence of perturbations in previous
periods. The assumption is mild and only requires that the derivative
of a functional at a given time is absolutely continuous up to that
period and may have an atom at the present time. We show that every
Fréchet differentiable functional is a pointwise limit of functionals
from this class. Importantly, these functionals have an attractive
property of conveniently separating the effects of the past and the
present. As we are considering a very general environment with path-dependence,
it may be difficult or impossible to use dynamic programming to write
a recursive representation of the problem. Instead, we directly find
the first order conditions for the optimal problem by considering
all local variations of the policies. The assumption on the derivative
of the hysteresis functional allows us to conveniently separate the
first order condition in two parts: the marginal effects of the policy
on current period and the conditional expectation of the cumulative
future marginal effects of policy paths in the future periods.

Our next goal is to explicitly find the evolution equation for the
stochastic hysteresis elasticity. This concept of stochastic hysteresis
elasticity is similar to the usual concept of elasticity as it captures
the change in the variables following a small change in policy. The
difference is that the stochastic elasticity is the change in the
whole optimal process. Alternatively, one can think of the stochastic
elasticity as the asymptotics of the optimal policy when there is
an infinitesimal additional hysteresis. Specifically, we show that
the stochastic elasticity is an Itô process and write its explicit
semimartingale decomposition. A fundamental property of the solution
is that both of the terms -- the present and the expected future
marginal effects of policy paths -- in the first order condition
for optimality and in the formula for the stochastic elasticity are
path-dependent. Similarly to our climate hysteresis application we
use the functional Itô formula and the total derivative formula to
represent the dynamics of the stochastic elasticity. The semimartingale
decomposition of the present and the cumulative marginal effects of
policies show that the stochastic elasticity, that is, the change
in the optimal policy process following introduction of hysteresis,
is an Itô process. The difference with the no-hysteresis case is that
now the drift and diffusion coefficients are themselves functionals
of the trajectory of the policy. That is, hysteresis leads to the
path-dependent stochastic process of the elasticity for which we calculate
the drift and diffusion coefficients in the explicit form. Finally,
we return to the analysis of the optimal policies with general hysteresis.
The evolution of the optimal policy can be analyzed using exactly
the same tools as the stochastic elasticity. The explicit form we
obtained for the stochastic elasticity thus can be thought of as a
characterization of the optimal policy process for small hysteresis,
or small hysteresis asymptotics. We use the functional Itô formula
to represent the current marginal effects and the total derivative
formula to represent the conditional expectation process of the future
marginal effects of policies. The drift and diffusion coefficients
of the optimal policy are then determined implicitly by the equations
having a similar form as the explicit solution for the stochastic
elasticity but also depending on the process of the optimal policy
itself.

We finally provide several examples of how to calculate the stochastic
hysteresis elasticity. All of these examples are straightforward applications
of the main semimartingale decomposition formula that we developed. 

\subsection*{Literature}

The closest to our work is a sequence of papers by Borovi\v{c}ka,
Hansen, and Scheinkman (2014), Borovi\v{c}ka, Hansen, Hendricks, and
Scheinkman (2011), and Borovi\v{c}ka and Hansen (2016). They define
the concept of the shock elasticity, relate it to impulse responses
familiar to macroeconomists, and, importantly, formalize it using
Malliavin derivatives. The shock elasticity is an impulse response
of the pricing kernel to the marginal perturbation of the underlying
stochastic cash flow process. Our work goes beyond their results and
delivers the explicit semimartingale decomposition of the evolution
of the stochastic elasticity for the path-dependent functionals. In
Section \ref{subsec:Relationship-to-the_Clarrk-Ocone} we also provide
a connection to the Clark-Ocone formula that features prominently
in Borovi\v{c}ka, Hansen, and Scheinkman (2014).

The only paper that we are aware of that uses functional Itô formula
in economic settings is an insightful work of Cvitani\'{c}, Possamaï,
and Touzi (2017). In the dynamic moral hazard setting where the agent
controls the volatility of the process, they show how that without
loss of generality the set of admissible contracts can be represented
as the path-dependent processes arising from the functional Itô formula.
In our environment, both the functional Itô formula and the total
derivative formula are applied. We discuss this in more details in
Section \ref{subsec:Why-do-we-need-two-formulas}.

The concept of stochastic elasticity is similar to the analysis of
the models of investment under uncertainty for the case of small sunk
costs or for models with small menu costs. In those models, there
is a form of hysteresis where due to the region of inaction, a temporary
change may have a permanent effect. The analysis of such models have
important implications for the large amounts of inertia that arise
in the models of small menu costs such as in Akerlof and Yellen (1985)
and Mankiw (1985). Dixit (1991) develops a method of analytical approximation
for such hysteresis models. Reis (2006) and Alvarez, Lippi, and Paciello
(2011, 2016) derive a similar analytical characterization in the models
with inattentive producers. These are essentially the same as our
calculation of stochastic elasticity for the environments that can
be characterized by Itô formula. A series of papers by Alvarez and
Lippi (2019), Alvarez, Le Bihan, and Lippi (2016) and Alvarez, Lippi,
and Oskolkov (2020) focus on the cumulative output effects of shock
in environments with the menu and adjustment costs frictions. These
cumulative output effects are the output impulse response functions\footnote{Similarly, the impulse responses play an important role in Pavan,
Segal, and Toikka (2014), Garrett and Pavan (2012), and Bergemann
and Strack (2015) in dynamic mechanism design, Makris and Pavan (2017)
in the dynamic taxation, and Makris and Pavan (2018) who connect these
two literatures. See also a review in Pavan (2017). In general, the
recursive formulation of a number of these problems may be challenging
and the tools we develop may prove to be useful in those contexts.} that are similar to the future expected marginal effects of policy
in our paper. Our analysis can be thought of as generalizing the results
to the environments in which there is path-dependence.

An important early paper by Detemple and Zapatero (1991) studies an
asset pricing model with habit formation using Malliavin calculus.\footnote{See also Serrat (2001), Bhamra and Uppal (2009) and Cvitanic and Malamud
(2009). } Huang and Sockin (2018) study optimal dynamic taxation in continuous
time and calculate impulse response to misreporting using Malliavin
derivatives. As Sannikov (2014) for the model of moral hazard they
connect their analysis to the finance literature on the sensitivity
of the ``Greeks'' of the options (see, e.g., Fournié, Lasry, Lebuchoux,
Lions, and Touzi 1999). Their analysis relates to our calculation
of Malliavin derivatives of the policy process in the last part of
Section \ref{subsec:Example Martingale}.

Hysteresis or path-dependence is important in a variety of macroeconomic
and other settings. The natural rate of unemployment may directly
depend on the previous path of unemployment and, hence, on the path
of monetary policy (Blanchard and Summers 1986). Examples of recent
explicit focus on path dependency and monetary policy is Berger, Milbradt,
Tourre, and Vavra (2018), Gali (2020), and Jorda, Singh, and Taylor
(2020). Hysteresis is also central in the models of secular stagnation
as in Summers (2014) and Eggertsson, Mehrotra, and Robbins (2019).
Adjustment costs in investment generate hysteresis (for example, see
a survey in Dixit 1992). Acemoglu, Egorov, and Sonin (2020) exposit
a wide variety of political economy models with path-dependence of
policies or institutions. Egorov and Sonin (2020) additionally discuss
models in which there is non-Markovian dependence on the sequence
of policy events. Page (2006) classifies types of history dependence
and distinguishes between the path-dependent policies (where the exact
sequence of events matters) and the ``phat'' dependent strategies
(where the events but not their sequence matters). The models of increasing
returns such as Arthur (1989) and David (1985), and a review in Arrow
(2000) feature potentially very complicated dependence on the paths.

We now briefly discuss the relationship of the economic environment
we consider with mathematical literature. The analysis of deterministic
dynamic systems with hysteresis is well established (e.g., Krasnosel'skii
and Pokrovskii 2013). In our model, path dependency arises from the
optimization of the forward looking agents under uncertainty rather
than being directly posed as a property of a dynamic system. The problem
that we study also does not fit easily into the standard framework
of the infinite-dimensional stochastic optimal control such as Fabbri,
Gozzi, and Swiech (2017), see discussion in Section \ref{subsec:Discussion-of-the-optimal control}.
The reason is that the hysteresis functional depends on the whole
path of the policy. If we treat the policy as a control process we
are obliged to take the space of paths (of variable length) as a control
space which leads to the analysis of path-dependent HJB equations
the understanding of which is in its nascency.

\section{Environment}

We describe an environment of a stochastic problem with forward looking
agents and path-dependency.

\subsection{\label{subsec:Baseline-setting}Baseline setting}

Let $w=(w_{t})_{0\leq t\leq T}$ be a Brownian motion.\footnote{All of the results can be straightforwardly extended to more general
diffusion processes.} Let the objective of the planner be described by the quadratic loss
function:
\begin{equation}
\max_{c}\mathbb{E}\int_{0}^{T}-\frac{1}{2}\left(c_{t}-w_{t}\right)^{2}dt.\label{eq:Unperturbed problem}
\end{equation}
Here maximization is taken over all (progressively measurable) policy
processes $c:[0,T]\times C([0,T])\to\mathbb{R}.$ That is, policy
$c_{t}$ depends on information about $w$ up to moment $t$. 

The basic problem is easily solved by interchanging integral and expectation
and yields the optimal process $c_{t}^{*}$ : 
\begin{equation}
c_{t}^{*}=w_{t}.\label{eq: Optimal consumption, unperturbed}
\end{equation}
The optimal policy simply tracks the Brownian motion $w_{t}$.

\subsection{General environment}

Consider a problem where there is an additional path-dependent effect
of policy $h_{t}$: 
\begin{equation}
\max_{c}\mathbb{E}\int_{0}^{T}(-\frac{1}{2}\left(c_{t}-w_{t}\right)^{2}-\epsilon h_{t}(c))dt,\label{eq:Perturbed problem}
\end{equation}
where $h:[0,T]\times C[0,T]\to\mathbb{R}$ is an adapted functional
on the space of trajectories (i.e. $h_{t}(c)$ is defined only by
the restriction of $c$ on $[0,t]$ ), and $\epsilon$ is a parameter.
One can think of the functional $h$ as an additional effect (a cost
or a benefit) that depends on the history of policy up to time $t$.
The solution of this problem is denoted $c^{\epsilon}$. We define
hysteresis as dependence of the effects of policy $h_{t}(c)$ on the
path of the previous policies $c_{[0,t]}=c^{t}$ .

\section{Climate externality}

This section generalizes the macro-climate model of Golosov, et al.
(2014) to the case where climate externality is path-dependent. Section
\ref{subsec:Planner's-problem-Climate} shows how to map some of the
key features of that paper to the framework described in the previous
section. Section \ref{subsec:Evidence-on-climate} discusses evidence
on hysteresis in the climate externality. Section \ref{subsec:Climate-hysteresis-functional}
proposes a specific path-dependent climate hysteresis functional that
captures most of the insights of the general framework that we develop
later in Section \ref{sec:General-setting}. Sections \ref{subsec:First-order-conditions-climate}-\ref{subsec:Dynamics Pigouvian Climate}
solve in closed form the dynamics of the marginal climate externality
or the Pigouvian tax implementing the optimum, that is, determine
its drift and diffusion coefficients.

\subsection{\label{subsec:Planner's-problem-Climate}Planner's problem with a
climate externality}

We first note that the structure of the problem (\ref{eq:Perturbed problem})
captures some of the main features of the macroeconomic model with
a climate externality of Golosov, et al. (2014). In its essence, a
macro-climate model consists of two primary blocks. The first is the
specification of the economy which is a standard dynamic stochastic
general equilibrium model. The planner maximizes the expected utility
of consumption (we assume no discounting)
\[
\mathbb{E}\int_{0}^{T}U\left(c_{t}^{\text{cons}}\right)dt,
\]
subject to the standard feasibility constraint with capital accumulation,
where $c_{t}^{\text{cons}}$ is a consumption good. The production
function is given by 
\[
Y_{t}=F\left(K_{t},E_{t},S_{t}\right),
\]
where $K_{t}$ is capital, $E_{t}$ is energy consumption (measured
in carbon emission units), and $S_{t}$ is a climate variable at time
$t$. The climate variable is in general a functional $S_{t}=\tilde{S_{t}}\left(E^{t}\right)$
of the path of emissions $E^{t}=E_{\left[0,t\right]}$ (equation (4)
in Golosov, et al. (2014) is, in fact, exactly this general formulation).
The climate variable $S_{t}$ affects the economy via a damage function
$D_{t}\left(S_{t}\right)$ so that the output is reduced multiplicatively
$\left(1-D_{t}\left(S_{t}\right)\right)\times\tilde{F}\left(K_{t},E_{t}\right).$
Moreover, the damages are assumed to be exponential $1-D_{t}\left(S_{t}\right)=\exp\left(-\gamma_{t}S_{t}\right)$
for some parameter $\gamma_{t}$ and the utility $U\left(x\right)=\log\left(x\right)$.
The essence of the model is how the path of emissions $E^{t}$ translates
to damages to the economy. Let us now reduce the model further to
focus on this key dimension. Assume that production does not require
capital and the feasibility constraint for the economy is then static:
\[
c_{t}^{cons}=Y_{t}=e^{-\gamma_{t}S_{t}}F\left(E_{t}\right).
\]
The planner then maximizes
\[
\max_{E_{t}}\mathbb{E}\int_{0}^{T}\left(\tilde{U}\left(E_{t}\right)-\gamma_{t}S_{t}\right)dt,
\]
which is essentially equivalent to the general problem (\ref{eq:Perturbed problem}).
We further assume that $\tilde{U}\left(E_{t}\right)=-\frac{1}{2}\left(E_{t}-w_{t}\right)^{2}$
but this is again done purely for leanness of the model and can be
immediately extended. 

The key simplification, however, that the literature makes is in simple
dependence of the climate variable $S_{t}$ on the path of the emissions.
For example, Golosov, et al. (2014)) assume that 
\[
S_{t}=\int_{0}^{t}d_{s}E_{s}ds,
\]
where $d_{s}\in\left[0,1\right]$ is a carbon depreciation rate and
then also further structure is placed on $d_{s}$. In other words,
the climate variable is equal to the stock of the depreciated emissions.

The climate-economy model in its essence reduces to how the previous
energy consumption choices and the associated carbon emissions impact
today's and future economy. In contrast to the literature, we significantly
relax the assumption of how the previous emission choices enter the
damages. In our setting, the whole path of the emissions (hysteresis)
matters through the functional $h_{t}\left(c^{t}\right)$ as opposed
to a depreciated stock of the previous emissions. We, on purpose,
stripped down the climate model to focus only on the general path-dependent
effects of the emissions and their interaction with the uncertainty.

\subsection{\label{subsec:Evidence-on-climate}Evidence on climate hysteresis}

There are two primary sets of evidence which imply that hysteresis
is important in the economic models of climate change.

The first is a large set of recent evidence on importance of hysteresis
in climate sciences. The most comprehensive and authoritative source
for such research is the Intergovernmental Panel on Climate Change
(IPCC). The working group for the physical science basis of the long-term
climate change in the Fifth Assessment Report -- AR5 (Collins, et
al. 2013), while arguing for the attractiveness of the use of the
cumulative carbon emission, notes that a number of climate variables
and models show significant hysteresis behavior. These are the models
of (1) vegetation change; (2) changes in the ice sheets; (3) ocean
acidification, deep ocean warming and associated sea level rise; (4)
models of the feedback between the ocean and the ice sheets; and (5)
models of the Atlantic meridional overturning circulation. The report
also argues that ``the concepts of climate stabilization and targets
is that stabilization of global temperature does not imply stabilization
for all aspects of the climate system. For example, some models show
significant hysteresis behaviour in the global water cycle, because
global precipitation depends on both atmospheric CO2 and temperature''.
This is also consistent with the study of Zickfield et al. (2012)
who argue that, while total cumulative emissions may be an important
approximation for many models, there are a number of exceptions where
path-dependency is important: among the variables with timescales
of several centuries, such as deep ocean temperature and sea level
rise, and for the peak responses of atmospheric CO2 or for the surface
ocean acidity. We now briefly discuss some other recent evidence of
the hysteresis behavior in individual components of the climate system.
Eliseev et al. (2014) is a comprehensive study of the permafrost behavior
that finds significant evidence of hysteresis especially for the higher
concentration of greenhouse gases in the atmosphere. Another important
aspect of ice thawing is the release of permafrost carbon which is
found to be both very significant and highly path-dependent (Gasser
et al. 2018). Garbe, et al. (2020) analysis ``reveals a strong, multi-step
hysteresis behaviour of the Antarctic Ice Sheet'' which is potentially
reinforced by a number of additional feedback mechanisms. Nordhaus
(2019) augments the DICE model with the effects of Greenland ice sheet
disintegration and finds that the baseline or the no-policy effects
are significantly different when even a simple formulation of the
hysteresis is incorporated while the optimal policy results are similar.
Boucher et al. (2012) analyzes the response of a number of models
to a significant increase in the CO2 concentration and finds that
hysteresis is particularly pronounced for the terrestrial (such as
long lived soil carbon sinks and vegetation) and marine variables
(such as the sea-level rise) as well as in global mean precipitation.
Nohara et al. (2013) finds that the hysteresis effects are important
for the regional climate change which is notable given, for example,
the recent focus on the significant differences in optimal climate
policy at the regional level by Hassler, et al. (2020). Wu et al.
(2012) describe significant hysteresis in the hydrological cycle that
leads to one of the most direct impacts of global warming affecting
droughts, floods and water supplies. This is important for the recent
economic literature that studies the impact of the floods and the
sea level (for example, Bakkensen and Barrage 2017, Barrage and Furst
2019; Hong, Wang, and Young, 2020 where hysteresis may be also compounded
with the path-dependency in beliefs). Summarizing, while the cumulative
emission and its variants are certainly a central benchmark for the
development of the climate-economy models, recent research in climate
sciences points to importance of hysteresis with sophisticated trajectory
dependence in a number of important climate variables.

The second reason for why hysteresis is important lies in the economics
part of the climate-economy models. One of the central themes in the
recent advances in this literature has been incorporating path-dependency.
A recent survey by Aghion, et al. (2019) summarize these developments:
``The core insight is that technological innovation is a path-dependent
process in which history and expectations matter greatly in determining
eventual outcome'' leading to ``important implications for climate
policy design'' and ``research and knowledge production are path-dependent,
deployment of innovations is path-dependent and the incentives for
technology adoption create path dependence''. Several papers in the
literature develop various parts of these insights. In Acemoglu, Aghion,
Bursztyn, and Hemous (2012), dirty technologies have an advantage
in the market size and in the initial productivity and, hence, there
is path-dependency in the direction of innovation and production.
One of the most important findings of Acemoglu, Akcigit, Hanley, and
Kerr (2016) is that the nature of innovations in clean or dirty technologies
is path dependent. Similarly, there is path-dependence in innovation
in the model of the consequences of the shale gas revolution by Acemoglu,
Aghion, Barrage and Hemous (2019). Grubb, et al. (2020) and Baldwin,
Cai, and Kuralbayeva (2020) develop models of path-dependency due
to the costs of switching from the dirty technologies. Aghion, Dechezleprêtre,
Hemous, Martin, and Van Reenen (2016) provide extensive empirical
evidence of path dependence in innovation in auto industry from aggregate
spillovers and from the firm\textquoteright s own innovation history.
Meng (2016) finds significant evidence for strong path dependence
in energy transition for the U.S. electricity sector over the 20th
century, focusing on coal. Fouquet (2016) is a summary of evidence
that energy systems are subject to strong and long-lived path dependence
due to technological, infrastructural, institutional and behavioral
lock-ins. A recent strand of the climate economics literature (Lemoine
and Traeger 2014; Lontzek, Cai, Judd, and Lenton 2015; van der Ploeg
and de Zeeuw 2018, and Cai and Lontzek 2019) has attempted to incorporate
hysteretic effects through the tipping point modeling. Dietz, Rising,
Stoerk, and Wagner (2020) synthesize a number of different approaches
in this literature into a meta-model and argue for the need to develop
sophisticated models of hysteretic behavior.\footnote{Section \ref{subsec:A-tipping-point} further elaborates on an example
of the use of our methodology with tipping points.}

Finally, careful consideration of uncertainty such as in Temzelides
(2016), Li, Narajabad, and Temzelides (2016), Traeger (2017), Cai
and Lontzek (2019), Brock and Hansen (2018), Van den Bremer and van
der Ploeg (2018), Barnett, Brock, and Hansen (2020), Giglio, Kelly,
and Stroebel (2020), Kotlikoff, Kubler, Polbin, and Scheidegger (2020)
and Lemoine (2021) has been another important theme of recent research
on economics of climate change. 

\subsection{\label{subsec:Climate-hysteresis-functional}Climate hysteresis functional}

In this section, we analyze a particular hysteresis functional that
captures the most important elements of the abstract setting that
we develop in Section \ref{sec:General-setting} and allows to transparently
show how the solution in the general environment works.

Specifically, let 
\begin{equation}
h_{t}\left(c^{t}\right)=g_{t}\left(w^{t}\right)c_{t}+\int_{0}^{t}k_{s,t}\left(w^{t}\right)c_{s}ds,\label{eq: Climate hysteresis functional}
\end{equation}
where $c_{t}$ denotes the amount of emissions at time $t$.\footnote{Strictly speaking, the functional in this section depends on both
the path $c^{t}$ and $w^{t}$ while we consider in Section \ref{sec:General-setting}
the functional $h_{t}\left(c^{t}\right)$. The extension to allow
for the dependence on the path $w^{t}$ is immediate. Further, for
the analysis of the stochastic elasticity in Section \ref{sec:First-order-process},
the underlying optimal process $c_{t}^{*}=w_{t}$ and, hence, this
is without loss of generality. } Here, there are two sources of the effects of emissions. The first
source is contemporaneous. The amount of emission $c_{t}$ yields
climate externality equal to $g_{t}\left(w^{t}\right)$, where $g_{t}$
is a functional of the path of uncertainty $w^{t}$. The second source
is the effect of the past emissions $c_{s}$. Each of the emissions
in the previous period $c_{s}$ contributes $k_{s,t}\left(w^{t}\right)$
to the climate externality $h_{t}$ at time $t$, where $k_{s,t}$
is a functional of the path of uncertainty $w^{t}$. It is important
to note that both of these terms are path-dependent.\footnote{Remark \ref{rem:Integration by parts} in Section \ref{subsec:Class A_t}
further discusses motivation for this example as a representation
of stochastic integrals.}

We observe that with the functional (\ref{eq: Climate hysteresis functional})
the optimization problem (\ref{eq:Perturbed problem}) becomes intractable
for the stochastic control approach. Indeed, two possibilities for
such approach would be either to introduce a control $c^{t}$ or to
introduce an additional control $x_{t}=h_{t}(c_{t})$. Both approaches
lead to complicated infinite-dimensional HJB equations. The first
is due to the complicated structure of the control space. The second
is due to the path dependent first order condition for the process
$x.$\footnote{Of course, for some special cases there is no need to use the theory
we develop. In Section \ref{sec:Examples-of-perturbation} we show,
whenever possible, how to derive the results using known tools. One
interesting class of examples can be solved by extracting a martingale
and then using the Clark-Ocone formula. In Section \ref{subsec:Example Cumulative-hysteresis}
and Remark \ref{rem:Boulatov} we show how to do this for the case
of cumulative hysteresis $h_{t}\left(c_{t}\right)=c_{t}\int_{0}^{t}c_{s}ds$;
in Section \ref{subsec:Example Martingale} and Remark \ref{rem:Detemple Zapatero}
we show how to do this for when the kernel $k_{s,t}$ is mutliplicative.
We further expand on this class of examples in Boulatov, Riabov, and
Tsyvinski (2020).}

\subsection{\label{subsec:First-order-conditions-climate}First-order conditions}

We derive the first order conditions of the problem (\ref{eq:Perturbed problem})
with the climate hysteresis functional (\ref{eq: Climate hysteresis functional})
by perturbing the process $c$ by a an adapted process $z$ and computing
the derivative in $\nu$ at $\nu=0:$
\begin{align*}
 & \partial_{\nu}E\int_{0}^{T}\left(-\frac{1}{2}(c_{t}+\nu z_{t}-w_{t})^{2}-\epsilon\left(g_{t}\left(w^{t}\right)\left(c_{t}+\nu z_{t}\right)+\int_{0}^{t}k_{s,t}\left(w^{t}\right)\left(c_{s}+\nu z_{s}\right)ds\right)\right)dt\bigg|_{\nu=0}=\\
 & =E\int_{0}^{T}\left(z_{t}w_{t}-c_{t}z_{t}-\epsilon\left(g_{t}\left(w^{t}\right)z_{t}+\int_{0}^{t}k_{s,t}\left(w^{t}\right)z_{s}ds\right)\right)dt=\\
 & =E\int_{0}^{T}z_{t}\left(w_{t}-c_{t}-\epsilon\left(g_{t}\left(w^{t}\right)+\int_{t}^{T}k_{t,s}\left(w^{s}\right)ds\right)\right)dt=0,
\end{align*}
where the last line is obtained by changing the order of integration.
Since $z$ is an arbitrary adapted process we get the first order
conditions
\begin{equation}
c_{t}=w_{t}-\epsilon\left(g_{t}\left(w^{t}\right)+E\left[\int_{t}^{T}k_{t,s}(w^{s})ds\bigg|\mathcal{F}_{t}\right]\right).\label{eq:FOC emissions}
\end{equation}
The term
\begin{equation}
\Lambda_{t}=g_{t}\left(w^{t}\right)+E\left[\int_{t}^{T}k_{t,s}(w^{s})ds\bigg|\mathcal{F}_{t}\right]\label{eq: Pigouivan tax emissions}
\end{equation}
plays the central role in Golosov, et al. (2014) and is the marginal
externality damage from emissions, it also is equal to the optimal
Pigouvian tax that corrects this externality. Specifically, in our
setting there are two effects of the marginal change in the emission
$c_{t}$. First, there is the immediate damage $g_{t}\left(w^{t}\right)$.
Second, there are marginal damages in each future period $s$, given
by $k_{t,s}\left(w^{s}\right).$ These damages are then integrated
and evaluated as the expectation conditional on the information up
to time $t$.

It is important to note the difference with the case with no hysteresis,
where the externality is a function $g_{t}\left(w_{t}\right)$ and
not a functional of the trajectory $w^{t}$. First, the future expected
marginal damage $E\left[\int_{t}^{T}k_{t,s}(w^{s})ds\bigg|\mathcal{F}_{t}\right]$
is absent in that case as the change in emissions today does not have
effects on the future. Second, the contemporaneous effects $g_{t}\left(w^{t}\right)$
also may be significantly different from $g_{t}\left(w_{t}\right)$
as it depends now on the path $w^{t}$. Even if emissions have no
immediate effects, the past effects and the expected cumulative effects
may be very significant. In other words, both the past and the future
non-trivially matter.

Our primary interest is in providing the evolution of the marginal
externality damage, that is, in the semimartingale decomposition $d\Lambda_{t}=\alpha_{t}\left(w^{t}\right)dt+\beta_{t}\left(w^{t}\right)dw_{t}$.
This decomposition can be thought of as the evolution of the marginal
externality damage (or the optimal Pigouvian tax) with respect to
time $dt$, the drift, and with respect to the realizations of uncertainty
$dw_{t}$, the diffusion coefficient. Since both the terms $g_{t}\left(w^{t}\right)$
and $E\left[\int_{t}^{T}k_{t,s}(w^{s})ds\bigg|\mathcal{F}_{t}\right]$
are path-dependent, we cannot use the standard Itô formula as it applies
only to the functions and not to the functionals of the path. We proceed
by analyzing the present and the future effects separately as the
analysis requires the use of different methods.

\subsection{Present effects of the emissions path\label{subsec:Present-effects-Example}}

The present damage $g_{t}\left(w^{t}\right)$ of the current emissions
is already a well-defined functional of the path $w^{t}$. We use
a recently developed Dupire's functional Itô formula for the non-anticipative
functional (Dupire 2009, 2019; Cont and Fournie 2013) to represent
this term.

We first introduce two derivatives due to Dupire (2009, 2019): the
vertical and the horizontal derivative. These are, respectively, the
functional analogues of the space and of the time derivatives.
\begin{defn}
\textbf{(Vertical Derivative)} Let $D[0,t]$ be the space of càdlàg
paths. The vertical derivative of a functional $h:D[0,t]\to\mathbb{R}$
is the limit
\[
\partial_{c_{t}}h(c)=\lim_{\epsilon\to0}\frac{h(c+\epsilon e_{t})-h(c)}{\epsilon},
\]
where $e_{t}(s)=1_{s=t}.$
\end{defn}
Intuitively, this derivative measures an influence on the functional
of a (discontinuous) bump to the path at time $t$ (Figure \ref{fig:Dupire's-vertical-derivative}).
If the functional is just the function of current realization of the
path, this yields the usual derivative $h'\left(c_{t}\right).$

\begin{figure}[h]
\centering{}\includegraphics[scale=3]{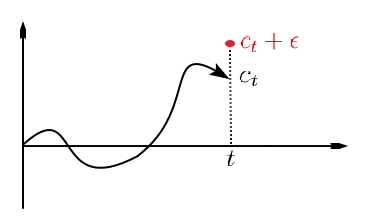}\caption{\label{fig:Dupire's-vertical-derivative}Dupire's vertical derivative:
path perturbation}
\end{figure}

\begin{defn}
\textbf{(Horizontal derivative)} Let $z$ $:[0,T]\times D[0,T]\to\mathbb{R}$
be an adapted functional on the space of trajectories. The horizontal
derivative $\Delta_{t}$ is defined as a limit 
\[
\Delta_{t}z_{t}(c)=\lim_{\epsilon\to0}\frac{z_{t+\epsilon}(c_{\cdot,\epsilon})-z_{t}(c)}{\epsilon},
\]
where $c_{\cdot,\epsilon}$ is an extension of the path $c$ from
$[0,t]$ to $[0,t+\epsilon]$ by $c_{s,\epsilon}=c_{t}$ for $t\leq s\leq t+\epsilon.$
\end{defn}
Intuitively, this derivative freezes the path of the underlying process
at time $t$, extends it for a small time $\epsilon$ and evaluates
the functional $z_{t+\epsilon}$ on this extended path (Figure \ref{fig:Dupire's-horizontal-derivative}).
If the functional $z_{t}$ is just the function of the value of the
path at time $t$, this yields the usual time derivative of $(\partial_{t}z_{t})(c_{t}).$

\begin{figure}[h]
\centering{}\includegraphics[scale=3]{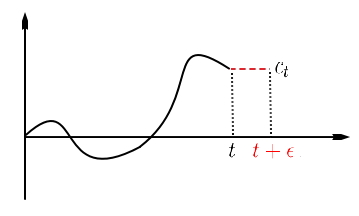}\caption{\label{fig:Dupire's-horizontal-derivative}Dupire's horizontal derivative:
path perturbation}
\end{figure}

If the functional $g_{t}\left(w^{t}\right)$ does not depend on the
path of policies and is a function $f\left(w_{t},t\right)$ of only
the current realization of uncertainty and of time, we can use the
usual Itô formula. For its application, we need to ensure that the
first and the second order space derivative $f'_{1}\left(w_{t},t\right)$,
$f''_{1}\left(w_{t},t\right)$ and the time derivative $f'_{2}\left(w_{t},t\right)$
exist. When, as is the case in this section, there is hysteresis to
apply the functional Itô formula we need to make a similar assumption
on the functional $g_{t}\left(w^{t}\right)$ as in the case of it
being a function but with different notions of derivatives.

\begin{lem}
\textbf{(Functional Itô Formula: Dupire, 2009, 2019; Cont and Fournie,
2013)\label{lem:Functional-It=0000F4's-Formula:}} Let $x_{t}$ be
an Itô process, $dx_{t}=b_{t}dt+\sigma_{t}dw_{t},$ where the drift
coefficient $b_{t}$ and the diffusion coefficient $\sigma_{t}$ are
adapted functionals of $w$. Let $g_{t}$ be horizontally differentiable
and twice vertically differentiable functional on the space of cádlág
paths. Then the process $g_{t}(x^{t})$ is an Itô process with the
stochastic differential
\begin{equation}
dg_{t}\left(x^{t}\right)=\Delta_{t}g_{t}(x^{t})dt+\partial_{c_{t}}g_{t}(x^{t})dx_{t}+\frac{1}{2}\partial_{c_{t}}^{2}g_{t}(x^{t})(\sigma_{t})^{2}dt.\label{eq:Functional Ito}
\end{equation}
\end{lem}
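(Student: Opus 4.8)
The plan is to prove the formula by discretization, following the scheme of Cont and Fournié (2013): approximate the path $x$ by step paths along a shrinking partition, decompose each functional increment over a subinterval into a horizontal part and a vertical part, identify the limits of the resulting sums, and control the remainders. First I would fix a sequence of partitions $\pi_n=\{0=t_0^n<\dots<t_{k_n}^n=t\}$ with mesh tending to zero and, for each $n$, define the piecewise-constant path $x^{n}$ that agrees with $x$ at the partition points, equals $x_{t_i^n}$ on $[t_i^n,t_{i+1}^n)$, and jumps to $x_{t_{i+1}^n}$ at $t_{i+1}^n$. These paths are c\`adl\`ag, so $g$ is defined on them by hypothesis; since $g$ is continuous in the sup norm and $x$ has continuous paths, $g_t(x^{n,t})\to g_t(x^t)$ pathwise, so it suffices to analyze $g_t(x^{n,t})-g_0(x_0)$, which telescopes into a sum over $i$ of increments $g_{t_{i+1}^n}(x^{n,t_{i+1}^n})-g_{t_i^n}(x^{n,t_i^n})$.

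Next I would split each increment into two pieces. The \emph{horizontal} piece, $g_{t_{i+1}^n}\big((x^{n})_{\cdot,\,t_{i+1}^n-t_i^n}\big)-g_{t_i^n}(x^{n,t_i^n})$, extends the frozen path from $t_i^n$ to $t_{i+1}^n$ at constant value $x_{t_i^n}$; by the definition of the horizontal derivative it equals $\Delta_{t_i^n}g_{t_i^n}(x^{n,t_i^n})\,(t_{i+1}^n-t_i^n)+o(t_{i+1}^n-t_i^n)$. The \emph{vertical} piece, $g_{t_{i+1}^n}(x^{n,t_{i+1}^n})-g_{t_{i+1}^n}\big((x^{n})_{\cdot,\,t_{i+1}^n-t_i^n}\big)$, bumps the terminal value by $\delta_i x:=x_{t_{i+1}^n}-x_{t_i^n}$; since $g$ is twice vertically differentiable with continuous second vertical derivative, a second-order Taylor expansion in the vertical direction gives $\partial_{c_t}g_{t_{i+1}^n}(\cdot)\,\delta_i x+\tfrac12\partial_{c_t}^2 g_{t_{i+1}^n}(\cdot)\,(\delta_i x)^2+o((\delta_i x)^2)$. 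Summing over $i$, the horizontal terms form a Riemann sum converging to $\int_0^t\Delta_s g_s(x^s)\,ds$; the first-order vertical terms form a non-anticipative left-endpoint sum against $x$ that converges to the It\^o integral $\int_0^t\partial_{c_s}g_s(x^s)\,dx_s$, using that $\partial_{c_s}g$ is adapted with sufficiently regular (left-continuous) paths; and the second-order vertical terms converge to $\tfrac12\int_0^t\partial_{c_s}^2 g_s(x^s)\,\sigma_s^2\,ds$ by convergence of the realized quadratic variation $\sum_i(\delta_i x)^2$ of the It\^o process to $\int_0^{\cdot}\sigma_s^2\,ds$.

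The main obstacle is controlling the error terms: the $o(\cdot)$ remainders from the two Taylor expansions must sum to zero as the mesh shrinks, and since the path is random this holds only in probability. Making it rigorous requires that the horizontal derivative and the second vertical derivative be locally bounded and uniformly continuous along the path in the appropriate functional metric — precisely the continuity/boundedness regularity that accompanies these differentiability notions — and extra care is needed because the expansions are carried out along the discretized paths $x^n$, so one must bound the discrepancy between $g$ and its derivatives evaluated at $x^n$ versus at $x$ uniformly over the partition. A clean way to handle this is to localize first, stopping when $\|x\|_\infty$, $\|\sigma\|$, or the relevant path oscillations exceed a threshold, prove the identity on each localized piece where the derivatives are genuinely bounded and uniformly continuous, and then remove the localization by letting the threshold tend to infinity. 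Passing to a subsequence along which all the Riemann-sum limits hold almost surely identifies the limit, and since the right-hand side of \eqref{eq:Functional Ito} is manifestly an It\^o process, so is $g_t(x^t)$.
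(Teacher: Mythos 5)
The paper does not prove this lemma at all: it is quoted as a known result of Dupire (2009, 2019) and Cont and Fourni\'e (2013), and your discretization argument --- piecewise-constant approximation along a shrinking partition, splitting each increment into a horizontal extension and a vertical bump, second-order Taylor expansion in the vertical direction, Riemann-sum and quadratic-variation limits, plus localization --- is essentially the standard proof given in Cont and Fourni\'e (2013). Your outline is correct; the only caveat is that the informal hypotheses stated here (``horizontally differentiable and twice vertically differentiable'') must in a rigorous write-up be strengthened to the $C_{b}^{1,2}$-type conditions of that literature (continuity of $g$ and its derivatives in the appropriate metric on stopped c\`adl\`ag paths, boundedness-preservation, left-continuity), which is exactly the regularity your remainder estimates and the convergence $g_{t}(x^{n,t})\to g_{t}(x^{t})$ implicitly rely on.
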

The lemma gives a semimartingale decomposition of the functional $g_{t}$$\left(x^{t}\right)$.
It is similar in form to the usual Itô formula that applies to the
functions of the state. However, the time derivative of a function
is replaced by the horizontal derivative of the functional $\Delta_{t}$;
the first and second space derivatives of the function are replaced
by the first and second vertical derivatives $\partial_{c_{t}}$ and
$\partial_{c_{t}}^{2}$. Of course, if there is no path-dependency
and $g_{t}$ is a function rather than a functional, one recovers
the usual Itô formula.

Applying functional Itô formula to $g_{t}\left(w^{t}\right)$ we obtain\textbf{
\[
dg_{t}\left(w^{t}\right)=\left(\Delta_{t}g_{t}(w^{t})+\frac{1}{2}\partial_{c_{t}}^{2}g_{t}(w^{t})\right)dt+\partial_{c_{t}}g_{t}(w^{t})dw_{t}.
\]
}

Summarizing the results in this section: the drift of the dynamics
of the present effects of the emissions is determined by the horizontal
and the second vertical derivative and the diffusion coefficient is
given by the vertical derivative of the present marginal effects of
the policy functional. This result both parallels and differs from
the case of no hysteresis where, similarly, the first and second order
derivatives matter for the dynamics. However, the notion of the derivative
is very different as the horizontal and the vertical derivatives measure
how hysteresis functionals change with the whole past trajectory.
Even if there are no direct contemporaneous effects of the policy,
its dependence on the past may lead to significant effects on the
dynamics of optimal policy. At the same time, using these different
notions of the derivatives the functional Itô formula allows us to
convey the similar intuition as in the case of no hysteresis. In particular,
uncertainty represents itself in the second vertical derivative of
the functional $g_{t}\left(w^{t}\right)$ by affecting the drift and
in the first vertical derivative of the functional affecting the diffusion
coefficient.

\subsection{\label{subsec:Expected-future-effects}Expected future effects of
the emissions paths}

The most challenging part of the paper is to characterize the dynamics
of the expected future marginal effects of current emission represented
by the conditional expectation process $E\left[\int_{t}^{T}k_{t,s}(w^{s})ds\bigg|\mathcal{F}_{t}\right].$
Let $\xi_{t}=\int_{t}^{T}k_{t,s}(w^{s})ds$. Note that the conditional
expectation process $\mathbb{E}\bigg[\xi_{t}|\mathcal{F}_{t}\bigg]$
is in general not a martingale. For example, if $k_{s,t}=1,$ then
$\xi_{t}=T-t$ and we have $\mathbb{E}\bigg[\xi_{t}|\mathcal{F}_{t}\bigg]=T-t$.
Also note that $\xi_{t}$ is not adapted in general. The difficulty
more broadly comes from the fact that in the conditional expectation
both the process $\xi_{t}$ and the filtration $\mathcal{F}_{t}$
are changing with time. 

\subsubsection{\label{subsec:Total-derivative-formula}Total derivative formula
for conditional expectations}

In this section, we derive a new total derivative formula for conditional
expectation processes -- this is the main new theoretical tool that
we develop in this paper. A more general statement of the result and
a complete proof is given in the Appendix. Here, we state a simpler
version of the result and an outline of a heuristic proof.
\begin{defn}
The Malliavin derivative of the cylindrical functional $f(w)=g(w_{t_{1}},\ldots,w_{t_{k}}),$
where $t_{1}<\ldots<t_{k}$ and $g$ is a smooth function, is defined
by $D_{t}f(w)=\sum_{i=1}^{k}\partial_{i}g(w_{t_{1}},\ldots,w_{t_{k}})1_{t\leq t_{i}}.$
This defines a closable linear operator $D:L^{2}(\Omega)\to L^{2}(\Omega\times[0,T]).$
Its closure is also denoted by $D$ and is called the Malliavin derivative
operator. 
\end{defn}
Intuitively, a Malliavin derivative is a change in a functional due
to a perturbation of the whole path of the process (Figure \ref{fig:Malliavin derivative}).

\begin{figure}[h]
\centering{}\includegraphics[scale=3]{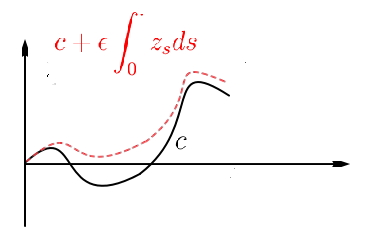}\caption{\label{fig:Malliavin derivative}Malliavin derivative: path perturbation}
\end{figure}

The next proposition states the total derivative formula, the main
technical tool that we develop in this paper.
\begin{prop}
\label{prop: Total derivative formula} \textbf{(Total derivative
formula) }Let an adapted square integrable process $(X_{t})_{0\leq t\leq T}$
be represented as
\[
X_{t}=\mathbb{E}[\xi_{t}|\mathcal{F}_{t}].
\]
Let $(\xi_{t})_{0\leq t\leq T}$ be a Malliavin differentiable square
integrable absolutely continuous process, that is, a process of the
form $\xi_{t}=\xi_{0}+\int_{0}^{t}\eta_{s}ds,$ where $\xi$ and $\eta$
may be anticipative. Then, 
\begin{equation}
dX_{t}=\mathbb{E}[\partial_{t}\xi_{t}|\mathcal{F}_{t}]dt+\mathbb{E}[D_{t}\xi_{t}|\mathcal{F}_{t}]dw_{t},\label{eq:total derivative formula}
\end{equation}
where $D$ is a Malliavin derivative.
\end{prop}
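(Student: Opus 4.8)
The plan is to isolate the two distinct sources of motion in $X_t=\mathbb{E}[\xi_t|\mathcal{F}_t]$ — the drift of the integrand $\xi_t$ and the refinement of the filtration $\mathcal{F}_t$ — by writing, for $h>0$,
\[
X_{t+h}-X_t=\mathbb{E}\big[\xi_{t+h}-\xi_t\,\big|\,\mathcal{F}_{t+h}\big]+\big(\mathbb{E}[\xi_t|\mathcal{F}_{t+h}]-\mathbb{E}[\xi_t|\mathcal{F}_t]\big).
\]
The first bracket carries the change in the integrand: since $\xi_{t+h}-\xi_t=\int_t^{t+h}\eta_s\,ds$ it equals $\int_t^{t+h}\mathbb{E}[\eta_s|\mathcal{F}_{t+h}]\,ds$, which to first order in $h$ is $\mathbb{E}[\eta_t|\mathcal{F}_t]\,h=\mathbb{E}[\partial_t\xi_t|\mathcal{F}_t]\,h$ (using $L^2$-continuity of $s\mapsto\eta_s$ and $\mathbb{E}[\cdot|\mathcal{F}_{t+h}]\to\mathbb{E}[\cdot|\mathcal{F}_t]$ as $h\downarrow0$), and this produces the drift. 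The second bracket carries the change in information: applying the Clark--Ocone formula to the fixed square-integrable random variable $\xi_t$, $\mathbb{E}[\xi_t|\mathcal{F}_r]=\mathbb{E}[\xi_t]+\int_0^r\mathbb{E}[D_u\xi_t|\mathcal{F}_u]\,dw_u$, so the second bracket equals $\int_t^{t+h}\mathbb{E}[D_u\xi_t|\mathcal{F}_u]\,dw_u$, which to first order is $\mathbb{E}[D_t\xi_t|\mathcal{F}_t]\,(w_{t+h}-w_t)$, producing the diffusion. This is exactly the ``total derivative'' heuristic that names the formula.

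To make this rigorous I would avoid the pointwise diagonal argument and instead build the semimartingale decomposition globally. Using $\xi_t=\xi_T-\int_t^T\eta_s\,ds$, split
\[
X_t=\mathbb{E}[\xi_T|\mathcal{F}_t]-\mathbb{E}\!\Big[\int_0^T\eta_s\,ds\,\Big|\,\mathcal{F}_t\Big]+\mathbb{E}\!\Big[\int_0^t\eta_s\,ds\,\Big|\,\mathcal{F}_t\Big].
\]
The first two terms are martingales; Clark--Ocone, together with a stochastic Fubini theorem to commute $D_u$ with the Lebesgue integral, gives them the diffusion coefficients $\mathbb{E}[D_t\xi_T|\mathcal{F}_t]$ and $-\mathbb{E}[\int_0^T D_t\eta_s\,ds|\mathcal{F}_t]$ and no drift. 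The third term, $Z_t:=\int_0^t\mathbb{E}[\eta_s|\mathcal{F}_t]\,ds$ (by conditional Fubini), has drift $\mathbb{E}[\eta_t|\mathcal{F}_t]$ and, applying Clark--Ocone to each $\eta_s$ plus a stochastic Fubini, diffusion coefficient $\mathbb{E}[\int_0^t D_t\eta_s\,ds|\mathcal{F}_t]$. Summing: the drift is $\mathbb{E}[\eta_t|\mathcal{F}_t]=\mathbb{E}[\partial_t\xi_t|\mathcal{F}_t]$, and the diffusion coefficient telescopes to $\mathbb{E}[D_t\xi_T-\int_t^T D_t\eta_s\,ds\,|\,\mathcal{F}_t]=\mathbb{E}[D_t\xi_t|\mathcal{F}_t]$, using $D_t\xi_t=D_t\xi_T-\int_t^T D_t\eta_s\,ds$ (the Malliavin derivative commutes with the deterministic integral, and the coincidence that the lower limit is also $t$ is irrelevant since $D_t$ is not differentiation in that limit).

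For the passage to full rigor I would first prove the identity for cylindrical $\xi$ and $\eta$, where all Malliavin derivatives are explicit and Clark--Ocone and the Fubini interchanges are elementary, and then extend by density using closability of $D$ and $L^2$-continuity of the conditional-expectation operators. The main obstacle is the behavior near the diagonal: in the heuristic route one must justify replacing $\mathbb{E}[D_u\xi_t|\mathcal{F}_u]$ by $\mathbb{E}[D_t\xi_t|\mathcal{F}_t]$ as $u\downarrow t$, which requires enough joint regularity of $(t,u)\mapsto\mathbb{E}[D_u\xi_t|\mathcal{F}_u]$ (a canonical version, $L^2$-continuity) and of $s\mapsto\eta_s$, together with uniform control of the $o(h)$ remainders, in order to glue the pointwise increments into a genuine semimartingale. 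This is precisely why the clean statement here imposes the absolutely continuous structure $\xi_t=\xi_0+\int_0^t\eta_s\,ds$ with square-integrable, Malliavin-differentiable data, and why the Appendix version carries the sharper hypotheses and the limiting argument in full.
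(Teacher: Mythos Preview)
Your proposal is correct and, in its rigorous form, follows essentially the same Clark--Ocone-based route as the paper's main-text proof: the paper applies Clark--Ocone to $\xi_t$ (with $t$ as a parameter) and then differentiates the representation $X_t=\mathbb{E}[\xi_t]+\int_0^t\mathbb{E}[D_r\xi_t\mid\mathcal{F}_r]\,dw_r$ in $t$, while you first anchor at $\xi_T$ via $\xi_t=\xi_T-\int_t^T\eta_s\,ds$ and split into two martingale pieces plus $Z_t$, but the algebra and the technical hurdles (stochastic Fubini, commuting $D_t$ with the Lebesgue integral, the diagonal limit) are identical. One point worth knowing: the paper's Appendix gives a genuinely different and more robust rigorous proof that bypasses Malliavin differentiability entirely---it uses only the It\^o representation $\xi_t=\mathbb{E}\xi_t+\int_0^T g_{t,s}\,dw_s$, verifies directly by conditioning that $X_t-X_0-\int_0^t\mathbb{E}[\eta_r\mid\mathcal{F}_r]\,dr$ is a martingale, and then identifies its integrand as $g_{t,t}$ via It\^o isometry rather than Clark--Ocone.
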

\begin{proof}
The Clark-Ocone formula gives
\[
\xi_{t}=\mathbb{E}\left[\xi_{t}\right]+\int_{0}^{T}\mathbb{E}[D_{r}\xi_{t}|\mathcal{F}_{r}]dw_{r}.
\]
Taking the conditional expectation leads to
\[
X_{t}=\mathbb{E}\left[\xi_{t}\right]+\int_{0}^{t}\mathbb{E}[D_{r}\xi_{t}|\mathcal{F}_{r}]dw_{r}.
\]
The total derivative is then
\begin{align*}
dX_{t} & =\mathbb{E}\left[\partial_{t}\xi_{t}\right]dt+\mathbb{E}[D_{t}\xi_{t}|\mathcal{F}_{t}]dw_{t}+\left(\int_{0}^{t}\mathbb{E}[D_{r}\partial_{t}\xi_{t}|\mathcal{F}_{r}]dw_{r}\right)dt=\\
 & =\mathbb{E}[D_{t}\xi_{t}|\mathcal{F}_{t}]dw_{t}+\left(\mathbb{E}\left[\partial_{t}\xi_{t}\right]+\int_{0}^{t}\mathbb{E}[D_{r}\partial_{t}\xi_{t}|\mathcal{F}_{r}]dw_{r}\right)dt.
\end{align*}
Note now that by the Clark-Ocone formula applied to the derivative
$\partial_{t}\xi_{t}$: 
\[
\partial_{t}\xi_{t}=\mathbb{E}\left[\partial_{t}\xi_{t}\right]+\int_{0}^{T}\mathbb{E}[D_{r}\partial_{t}\xi_{t}|\mathcal{F}_{r}]dw_{r},
\]
and taking the conditional expectations:
\[
\mathbb{E}[\partial_{t}\xi_{t}|\mathcal{F}_{t}]=\mathbb{E}\left[\partial_{t}\xi_{t}\right]+\int_{0}^{t}\mathbb{E}[D_{r}\partial_{t}\xi_{t}|\mathcal{F}_{r}]dw_{r}.
\]
The total derivative is then given by (\ref{eq:total derivative formula}).
\end{proof}
The lemma provides a time-dependent extension of the Clark-Ocone formula
for processes that can be represented as a conditional expectation
of an absolutely continuous process.\footnote{See Section \ref{subsec:Relationship-to-the_Clarrk-Ocone} for the
more detailed discussion of the relationship of our total derivative
formula to the Clark-Ocone formula.} The first term, $\mathbb{E}[\partial_{t}\xi_{t}|\mathcal{F}_{t}]$,
can be thought of as a time derivative of this variable and represents
how the conditional expectation evolves with respect to time. The
second term, $\mathbb{E}[D_{t}\xi_{t}|\mathcal{F}_{t}]$, can be thought
of as a stochastic derivative with respect to the underlying process
$w_{t}$ and represents how the conditional expectation changes with
the changes in $w_{t}$.\footnote{The process $(\xi_{t})$ in Proposition \ref{prop: Total derivative formula}
need not be adapted, and we can understand the process $X_{t}=\mathbb{E}[\xi_{t}|\mathcal{F}_{t}]$
as the adapted projection of the process $\xi$. In Proposition \ref{prop: Total derivative formula}
we thus prove that an adapted projection of an absolutely continuous
process is necessarily an Itô process. In the Appendix \ref{subsec:Characterization-of-It=0000F4}
we show the converse to this clam -- any Itô process can be represented
as such projection.}

\subsubsection{Semimartingale decomposition of the expected future effects of policy
paths}

We now make mild assumptions on the kernel $k_{t,s}$ -- absolute
continuity of the kernel in variable $t$ and Malliavin differentiability
as a functional from $w^{s}$ -- and apply the total derivative formula
to the expected future marginal damages, $E\left[\int_{t}^{T}k_{t,s}(w^{s})ds\bigg|\mathcal{F}_{t}\right]$:
\begin{align*}
dE\left[\int_{t}^{T}k_{t,s}(w^{s})ds\bigg|\mathcal{F}_{t}\right] & =\underset{\text{time derivative}}{\underbrace{\mathbb{E}\bigg[\partial_{t}\int_{t}^{T}k_{t,s}(w^{s})ds\bigg|\mathcal{F}_{t}\bigg]}}dt+\underset{\text{stochastic derivative}}{\underbrace{\mathbb{E}\bigg[D_{t}\int_{t}^{T}k_{t,s}(w^{s})ds\bigg|\mathcal{F}_{t}\bigg]}}dw_{t}=\\
 & =\underset{\text{time derivative}}{\underbrace{\mathbb{E}\bigg[\int_{t}^{T}\partial_{t}k_{t,s}(w^{s})ds\bigg|\mathcal{F}_{t}\bigg]-k_{t,t}\left(w^{t}\right)}}dt+\underset{\text{stochastic derivative}}{\underbrace{\mathbb{E}\bigg[\int_{t}^{T}D_{t}k_{t,s}(w^{s})ds\bigg|\mathcal{F}_{t}\bigg]}}dw_{t}.
\end{align*}
The meaning of this equation is straightforward. The conditional expectation
$dE\left[\int_{t}^{T}k_{t,s}(w^{s})ds\bigg|\mathcal{F}_{t}\right]$
measures the effects of the change in the emissions in period $t$
on the expected future marginal externality $k_{t,s}(w^{s})$ in all
periods $s$ ($s\in\left[t,T\right])$. The time evolution of the
conditional expectation is given by the analogue of the time derivative.
The stochastic evolution of the conditional expectation is given by
a stochastic derivative of the cumulative change in all marginal effects
of emission in time $t$ on all future periods $s$ with respect to
variation in the underlying process $w$.

\subsection{\label{subsec:Dynamics Pigouvian Climate}Dynamics of the marginal
externality damage}

We can now collect the results of the previous two sections. We obtain
the dynamics of the marginal externality damage 
\begin{align*}
d\Lambda_{t} & =-\epsilon\left(\Delta_{t}g_{t}(w^{t})+\frac{1}{2}\partial_{c_{t}}^{2}g_{t}(w^{t})+\mathbb{E}\bigg[\int_{t}^{T}\partial_{t}k_{t,s}(w^{s})ds\bigg|\mathcal{F}_{t}\bigg]-k_{t,t}\left(w^{t}\right)\right)dt+\\
 & +\left(\partial_{c_{t}}g_{t}(w^{t})+\mathbb{E}\bigg[\int_{t}^{T}D_{t}k_{t,s}(w^{s})ds\bigg|\mathcal{F}_{t}\bigg]\right)dw_{t}.
\end{align*}
This, we believe, is an important result as it solves in the closed
form the dynamics of the evolution of the marginal externality damage
or the optimal Pigouvian tax. This formula shows how the policymaker
should change the tax with the passage of time and with the realizations
of uncertainty -- that is, we determined the drift and the diffusion
coefficients of the process of the optimal Pigouvian tax. 

A related way to think about this result is that it determines what
features of the model matter up to the first order. For the time updating,
the functional Itô formula shows that the horizontal and the second
order vertical derivative are important and the total derivative formula
shows that the ``time derivative'' of the expected cumulative damages
are important. For the updating with the movement of uncertainty,
the functional Itô formula shows that the vertical derivative and
the total derivative formula shows that the ``stochastic derivative''
of the expected cumulative marginal damages are important.

We also immediately obtain the dynamics of the optimal emission policies
from equation (\ref{eq:FOC emissions}) in closed form:
\begin{align*}
dc_{t} & =-\epsilon\left(\Delta_{t}g_{t}(w^{t})+\frac{1}{2}\partial_{c_{t}}^{2}g_{t}(w^{t})+\mathbb{E}\bigg[\int_{t}^{T}\partial_{t}k_{t,s}(w^{s})ds\bigg|\mathcal{F}_{t}\bigg]-k_{t,t}\left(w^{t}\right)\right)dt+\\
 & +\left(1-\epsilon\partial_{c_{t}}g_{t}(w^{t})-\epsilon\mathbb{E}\bigg[\int_{t}^{T}D_{t}k_{t,s}(w^{s})ds\bigg|\mathcal{F}_{t}\bigg]\right)dw_{t}.
\end{align*}

We thus extended the result of the macro-climate model of Golosov,
et al. (2014) to the case with the path-dependent hysteresis given
by (\ref{eq: Climate hysteresis functional}). Note that all of the
results here are given in closed form. The next sections show that
the main insights derived with this specifications apply to general
hysteresis functionals.

\section{\label{sec:General-setting}General hysteresis functionals}

We now return to the analysis of the problem (\ref{eq:Perturbed problem})
for a general hysteresis functional.

\subsection{\label{subsec:Class A_t}The class $\mathcal{A}_{t}$ of path-dependent
functionals}

In this section we describe a class of path-dependent functionals
$\mathcal{A}_{t}$ which, in a sense, is parallel to continuously
differentiable functions of a real-valued argument. It will turn out
that this class, while very general, allows significant tractability
in the analysis of path-dependent problems. The introduction of this
class of functionals is one contribution of our paper.
\begin{rem}
\label{rem:Discontinuous}Adding a path-dependent functional $h_{t}$
can considerably change the structure of the optimal policy. For example,
consider the functional $h_{t}(c)=c_{\frac{t}{2}}$ that describes
the effects of the policies at the middle of the time period $\left[0,t\right]$.
We show in the appendix that the optimal policy is no longer continuous
and has a jump at $t=\frac{T}{2}$.
\end{rem}
Let us recall that a Fréchet derivative of a functional $h_{t}:C[0,t]\to\mathbb{R}$
at a point $c\in C[0,t]$ is a signed measure $\mu$ on $[0,t]$ such
that
\[
h_{t}\left(c+z\right)=h_{t}\left(c\right)+\int_{0}^{t}z_{s}d\mu\left(t,s\right)+o\left(\left\Vert z\right\Vert \right),\left\Vert z\right\Vert \rightarrow0.
\]
The primary difficulty with this formulation is that it features a
very general dependence of the measure $\mu$ on time $t$. The next
assumption structures this dependence.
\begin{assumption}
\label{assu:Class A}\textbf{(Class $\mathcal{A}_{t}$)} Suppose the
(Fréchet) derivative of the functional $h_{t}$ has an absolutely
continuous part and an atom at point $t$:
\[
h_{t}(c+z)=h_{t}(c)+\partial_{c_{t}}h_{t}(c)z_{t}+\int_{0}^{t}\delta_{s}h_{t}(c)z_{s}ds+o\left(\left\Vert z\right\Vert \right),\left\Vert z\right\Vert \rightarrow0.
\]
The family of such functionals is denoted by $\mathcal{A}_{t}.$
\end{assumption}
This assumption on the derivative of functionals means that there
are two sources of variation for $h_{t}$: an instantaneous influence
of a perturbation at the moment $t$ given by an atom and an integral
influence of perturbations at previous moments $s\le t$ which is
an absolutely continuous process. Assumption \ref{assu:Class A} is
mild -- it only requires that the derivative of a functional (which
is a measure) is absolutely continuous and has an atom at the present
time.

The class $\mathcal{A}_{t}$ contains a variety of functionals:
\[
h_{t}(c)=f(c_{t}),
\]
 which is state-dependent but not path-dependent;
\[
h_{t}(c)=f_{t}(c_{t},\int_{0}^{t}g_{t}(c_{s})ds),
\]
which jointly depends on state $c_{t}$ and the integral influence
of the path $\int_{0}^{t}g_{t}(c_{s})ds$ and, moreover, both the
joint dependence $f_{t}$ and the effects of the past policies $g_{t}$
depend on time $t$;
\[
h_{t}\left(c\right)=f_{t}(\int_{0}^{t}\int_{0}^{t}g_{t}(c_{s},c_{r})dsdr,...)
\]
where now there is a joint dependence on the past values $c_{s}$
and $c_{r}$ via a repeated integral.

The proposition that follows shows that this assumption covers a very
general class of functionals.
\begin{prop}
\label{claim:Class A is dense}Every Fréchet differentiable functional
$g:C[0,T]\to\mathbb{R}$ is a pointwise limit of functionals from
the class $\mathcal{A}_{T}.$
\end{prop}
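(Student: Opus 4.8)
The plan is to realize $g$ as a pointwise limit of mollified versions of itself, where the mollification is carried out in the path argument. Fix $n$ large enough that $1/n<T$, extend any $c\in C[0,T]$ to an even function $\tilde c$ on $[-T,T]$ by $\tilde c_s=c_{|s|}$, and define the backward moving-average operator $T_n:C[0,T]\to C[0,T]$ by $(T_nc)_t=n\int_{t-1/n}^{t}\tilde c_s\,ds$. Writing this out, $(T_nc)_t=\int_0^T K_n(t,u)\,c_u\,du$ for an explicit nonnegative kernel $K_n(t,\cdot)$, supported in an interval of length at most $1/n$ around $t$ (through the reflection when $t$ is near $0$) and normalized so that $\int_0^T K_n(t,u)\,du=1$. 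Two elementary facts about $T_n$ do all the work: it is bounded linear with $\|T_n\|\le 1$, and for each fixed $c$ one has $\|T_nc-c\|_\infty\le\omega_c(1/n)\to 0$, where $\omega_c(\delta)=\sup_{|s-t|\le\delta}|c_s-c_t|$ is the modulus of continuity of $c$ on the compact interval $[0,T]$.

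First I would set $g_n:=g\circ T_n$. Since a Fr\'echet differentiable functional is continuous, $g_n(c)=g(T_nc)\to g(c)$ for every $c$, so it only remains to check $g_n\in\mathcal{A}_T$. By the chain rule---the derivative of the bounded linear map $T_n$ is itself, and $\|T_nz\|\le\|z\|$ absorbs the error term---$g_n$ is Fr\'echet differentiable with $Dg_n(c)[z]=Dg(T_nc)[T_nz]$. Representing the Fr\'echet derivative of $g$ at $T_nc$ by a signed measure $\mu=\mu_{T_nc}$ on $[0,T]$ and interchanging the order of integration (legitimate since $\int_{[0,T]}\int_0^T|z_u|\,K_n(s,u)\,du\,|\mu|(ds)\le\|z\|_\infty\,|\mu|([0,T])<\infty$) gives
\[
Dg_n(c)[z]=\int_{[0,T]}(T_nz)_s\,\mu(ds)=\int_0^T z_u\,\rho_n(u)\,du,\qquad \rho_n(u):=\int_{[0,T]}K_n(s,u)\,\mu(ds),
\]
and the normalization $\int_0^T K_n(s,u)\,du=1$ yields $\int_0^T|\rho_n(u)|\,du\le|\mu|([0,T])<\infty$, so $\rho_n\in L^1[0,T]$. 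Hence at every point $c$ the derivative of $g_n$ is an absolutely continuous measure (with vanishing atom at $T$, which Assumption \ref{assu:Class A} permits), i.e.\ $g_n\in\mathcal{A}_T$, and $g_n\to g$ pointwise, which is the claim.

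The hard part is not any of the estimates but the choice of $T_n$ near the left endpoint. A plain backward average at $t=0$ would see only the single value $c_0$, so $z\mapsto(T_nz)_0$ would be the point evaluation $z_0$; if $\mu_{T_nc}$ charges $\{0\}$ this would force an atom at $0$ into $Dg_n(c)$ and push $g_n$ out of $\mathcal{A}_T$. Extending the path by even reflection across the origin repairs exactly this: near $t=0$ the kernel $K_n(t,\cdot)$ still spreads its mass over a genuine subinterval, so for every fixed $s$ the functional $z\mapsto(T_nz)_s$ is integration against an $L^1$ kernel and carries no atom---which is what makes $Dg_n(c)$ absolutely continuous away from $T$ no matter where $\mu_{T_nc}$ has atoms. (Reflecting at the right endpoint is unnecessary, since the backward window $[t-1/n,t]$ never leaves $(-\infty,T]$.) Everything else---linearity and the bound $\|T_n\|\le1$, the modulus-of-continuity estimate, the Fubini step, and the $L^1$ bound on $\rho_n$---is routine.
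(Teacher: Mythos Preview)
Your argument is correct and follows the same structural template as the paper: precompose $g$ with a bounded linear smoothing operator $L_n:C[0,T]\to C[0,T]$ satisfying $L_nc\to c$ pointwise and such that, for every fixed $s$, the evaluation $z\mapsto(L_nz)_s$ is given by an absolutely continuous measure (possibly plus an atom at $T$). The chain rule and Fubini then force the Fr\'echet derivative of $g\circ L_n$ into $\mathcal{A}_T$.

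Where you and the paper differ is in the choice of $L_n$. The paper uses the \emph{forward} exponential average
\[
(R^{(n)}c)_t=e^{-n(T-t)}c_T+n\int_t^T e^{-n(s-t)}c_s\,ds,
\]
which solves $\frac{d}{dt}(R^{(n)}c)_t=n\big((R^{(n)}c)_t-c_t\big)$ with terminal condition $(R^{(n)}c)_T=c_T$. Because $R^{(n)}$ looks forward and is anchored at the terminal value, the derivative of $g^{(n)}=g\circ R^{(n)}$ carries a genuine atom at $T$ (of size $\int_0^T e^{-n(T-s)}\mu(ds)$) plus an absolutely continuous part on $[0,T]$, exactly matching the definition of $\mathcal{A}_T$, and there is no boundary issue at $0$ to repair. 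Your backward moving average instead smears the terminal value as well, so your approximants land in the subclass of $\mathcal{A}_T$ with vanishing atom; the price is that you must handle the left endpoint by even reflection, which you do correctly. Both constructions prove the proposition; the paper's is slightly cleaner at the boundary, while yours is arguably more elementary and shows, incidentally, that one may even approximate by functionals whose derivative has no singular part at all.
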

\begin{proof}
In the appendix.
\end{proof}
We next explore the nature of the present marginal influence of policy
$\partial_{c_{t}}h_{t}$ which is itself a path-dependent functional.
The lemma that follows (proven for a more general case in the appendix)
connects Dupire's vertical derivative to the derivatives of the functionals
in the class $\mathcal{A}_{t}$.
\begin{lem}
\label{lem: Vertical derivative}Let the functional $h_{t}$ be in
the class $\mathcal{A}_{t}$ , then $\partial_{c_{t}}h_{t}(c)$ is
the vertical derivative in the Dupire sense.
\end{lem}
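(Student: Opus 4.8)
The plan is to test the definition of Dupire's vertical derivative directly against the structural expansion furnished by Assumption~\ref{assu:Class A}. By definition, the vertical derivative of $h_t$ at $c$ is the limit of the difference quotients $\bigl(h_t(c+\epsilon e_t)-h_t(c)\bigr)/\epsilon$ as $\epsilon\to0$, where $e_t(s)=1_{s=t}$ is the unit bump at the terminal time; the content of the lemma is that this limit exists and equals the atomic coefficient $\partial_{c_t}h_t(c)$ appearing in the Fr\'echet-type expansion of $h_t$ that defines membership in $\mathcal{A}_t$. So the statement is essentially that "the atom coefficient is the vertical derivative."

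First I would substitute the perturbation $z=\epsilon e_t$ into the expansion
\[
h_t(c+z)=h_t(c)+\partial_{c_t}h_t(c)\,z_t+\int_0^t\delta_s h_t(c)\,z_s\,ds+o\!\left(\|z\|\right).
\]
Here $z_t=\epsilon$, so the atomic term contributes exactly $\epsilon\,\partial_{c_t}h_t(c)$. The absolutely continuous term vanishes identically: since $z_s=\epsilon\,1_{s=t}$ is supported on the single point $\{t\}$, which has Lebesgue measure zero, $\int_0^t\delta_s h_t(c)\,z_s\,ds=0$ regardless of the density $\delta_s h_t(c)$. Finally, in the supremum norm $\|z\|=\|\epsilon e_t\|=|\epsilon|$, so the remainder is $o(\epsilon)$. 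Dividing by $\epsilon$ gives
\[
\frac{h_t(c+\epsilon e_t)-h_t(c)}{\epsilon}=\partial_{c_t}h_t(c)+\frac{o(\epsilon)}{\epsilon}\longrightarrow\partial_{c_t}h_t(c)\quad\text{as }\epsilon\to0,
\]
which is the claim. The same computation identifies $\partial_{c_t}h_t(c)$ intrinsically as the mass at $t$ of the atom of the Fr\'echet-derivative measure $\mu(t,\cdot)$ — the atomic piece in its Lebesgue decomposition into an atom at $t$ plus an absolutely continuous part — so the two notions genuinely coincide and are well defined.

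The main subtlety — and the reason a careful argument is deferred to the appendix — is that the bump $e_t$ is \emph{not} an element of $C[0,t]$, whereas the derivative in Assumption~\ref{assu:Class A} is a priori furnished only for continuous perturbations. One therefore has to work with an extension of $h_t$ to the space $D[0,t]$ of c\`adl\`ag paths and verify that the expansion remains valid for perturbations of the form $\epsilon e_t$; concretely, I would approximate $e_t$ by a sequence of continuous bumps concentrating near $t$, apply the expansion along this sequence (using that the derivative is represented by a measure $\mu(t,\cdot)$ together with dominated convergence to pass to the limit in $\int_0^t z_s\,d\mu(t,s)$), and check that the atomic mass at $t$ is precisely what survives. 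Once this extension/continuity step is in place — the only real work — the three estimates above are routine and the identification is immediate.
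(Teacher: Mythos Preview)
Your proposal is correct and follows essentially the same approach as the paper's own proof: approximate $e_t$ by continuous bumps $z^{(n)}$ supported on $[t-\tfrac{1}{n},t]$ with $z^{(n)}_t=1$, apply the $\mathcal{A}_t$-expansion to $\epsilon z^{(n)}$, and let the integral term $\int_{t-1/n}^t \delta_s h_t(c)\,z^{(n)}_s\,ds$ vanish by integrability of $\delta_s h_t(c)$, while the continuous extension of $h_t$ to $D[0,t]$ (in the Skorokhod topology) passes $h_t(c+\epsilon z^{(n)})$ to $h_t(c+\epsilon e_t)$. The paper organizes the limit via an explicit $\alpha$--$\delta$ uniformity estimate rather than an abstract dominated-convergence appeal, but the substance is identical.
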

\begin{rem}
\label{rem:Integration by parts} We now discuss some additional motivation
behind Assumption \ref{assu:Class A}. Consider, for example, hysteresis
given by an Itô process:
\begin{equation}
h_{t}(c)=\int_{0}^{t}b_{s}dc_{s},\label{eq: h as Ito process}
\end{equation}
where $b$ is an absolutely continuous function. The functional $h_{t}(c)$
in equation (\ref{eq: h as Ito process}) is well-defined if and only
if the process $(c_{s})_{0\leq s\leq t}$ is of bounded variation.
This is not true for an arbitrary progressively measurable process
$(c_{s})_{0\leq s\leq t}$, and we cannot expect that the optimal
policy process will be of bounded variation. For example, in our baseline
case of Section \ref{subsec:Baseline-setting} $c_{t}^{*}=w_{t}$
and then almost all realizations of $c^{*}$ are of unbounded variation.
However, if the coefficient $b$ is absolutely continuous, then we
can integrate (\ref{eq: h as Ito process}) by parts and rewrite $h_{t}(c)$
in the form:
\[
h_{t}(c)=-\int_{0}^{t}b'_{s}c_{s}ds+b_{t}c_{t}.
\]
Further, $h_{t}$ is just a linear functional from $c:$
\[
h_{t}(c+z)-h_{t}(c)=b_{t}z_{t}-\int_{0}^{t}b'_{s}z_{s}ds.
\]
So, $h_{t}\in\mathcal{A}_{t}$ with $\partial_{c_{t}}h(c)=b_{t},$
$\delta_{s}h_{t}(c)=-b_{s}'.$
\end{rem}

\subsection{\label{sec:Optimal-policy}Optimal policy: the first order conditions}

In this section, we derive the first order conditions for the optimal
process $c^{\epsilon}$. While in some circumstances one can write
a recursive formulation even for a path-dependent problem and then
find a Hamilton-Jacobi-Bellman equation, in general it is difficult
or impossible to do it. Here, we instead find the first order conditions
for optimal policy using a variational method.
\begin{prop}
\textbf{(The first order conditions for the optimum)} Let the optimal
policy process $c^{\epsilon}$ solve (\ref{eq:Perturbed problem}),
then
\begin{equation}
c_{t}^{\epsilon}=w_{t}-\epsilon\left(\partial_{c_{t}}h_{t}(\left(c^{\epsilon}\right)^{t})+E\left[\int_{t}^{T}\delta_{t}h_{s}(\left(c^{\epsilon}\right)^{s})ds\bigg|\mathcal{F}_{t}\right]\right).\label{eq: FOC, perturbed}
\end{equation}
\end{prop}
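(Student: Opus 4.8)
The plan is to mimic the variational computation already carried out in Section~\ref{subsec:First-order-conditions-climate} for the special climate functional~(\ref{eq: Climate hysteresis functional}), but now using the general Fr\'echet-derivative structure afforded by Assumption~\ref{assu:Class A}. First I would fix the optimal process $c^{\epsilon}$, take an arbitrary bounded adapted perturbation process $z$, and form the scalar function $\nu\mapsto J(\nu)=\mathbb{E}\int_{0}^{T}\left(-\tfrac{1}{2}(c_{t}^{\epsilon}+\nu z_{t}-w_{t})^{2}-\epsilon\, h_{t}((c^{\epsilon}+\nu z)^{t})\right)dt$. Optimality of $c^{\epsilon}$ forces $J'(0)=0$. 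Differentiating under the expectation and the time integral (justified by dominated convergence, using square-integrability of $c^{\epsilon}$, $w$, $z$ and local boundedness of the relevant derivatives), the quadratic term contributes $\mathbb{E}\int_{0}^{T}z_{t}(w_{t}-c_{t}^{\epsilon})\,dt$, exactly as in the climate case.

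The new ingredient is the directional derivative of $t\mapsto h_{t}((c^{\epsilon}+\nu z)^{t})$ at $\nu=0$. By Assumption~\ref{assu:Class A}, applied at the point $c=(c^{\epsilon})^{t}$ with increment $z^{t}$, this equals $\partial_{c_{t}}h_{t}((c^{\epsilon})^{t})\,z_{t}+\int_{0}^{t}\delta_{s}h_{t}((c^{\epsilon})^{t})\,z_{s}\,ds$ — the atom at the present time plus the absolutely continuous influence of the past. Hence
\begin{align*}
J'(0)&=\mathbb{E}\int_{0}^{T}z_{t}(w_{t}-c_{t}^{\epsilon})\,dt\\
&\quad-\epsilon\,\mathbb{E}\int_{0}^{T}\left(\partial_{c_{t}}h_{t}((c^{\epsilon})^{t})\,z_{t}+\int_{0}^{t}\delta_{s}h_{t}((c^{\epsilon})^{t})\,z_{s}\,ds\right)dt=0.
\end{align*}
Next I would interchange the order of integration in the double integral: $\int_{0}^{T}\!\int_{0}^{t}\delta_{s}h_{t}((c^{\epsilon})^{t})\,z_{s}\,ds\,dt=\int_{0}^{T}z_{s}\left(\int_{s}^{T}\delta_{s}h_{t}((c^{\epsilon})^{t})\,dt\right)ds$, relabel $s\leftrightarrow t$, and then use the tower property together with the fact that $z$ is adapted to replace $\int_{t}^{T}\delta_{t}h_{s}((c^{\epsilon})^{s})\,ds$ by its conditional expectation given $\mathcal{F}_{t}$ inside the $\mathbb{E}\int_{0}^{T}z_{t}(\cdot)\,dt$ pairing. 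This yields
\[
\mathbb{E}\int_{0}^{T}z_{t}\left(w_{t}-c_{t}^{\epsilon}-\epsilon\left(\partial_{c_{t}}h_{t}((c^{\epsilon})^{t})+\mathbb{E}\Big[\int_{t}^{T}\delta_{t}h_{s}((c^{\epsilon})^{s})\,ds\Big|\mathcal{F}_{t}\Big]\right)\right)dt=0.
\]

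Since this holds for every bounded adapted $z$, a standard localization/fundamental-lemma-of-the-calculus-of-variations argument (taking $z$ supported near an arbitrary $(t,\omega)$, or testing against indicators of predictable rectangles) forces the bracketed adapted process to vanish $dt\otimes d\mathbb{P}$-a.e., which is precisely~(\ref{eq: FOC, perturbed}). The main obstacle I anticipate is the rigorous justification of differentiating under the integral sign and of applying the Fr\'echet expansion uniformly along the path: the error term $o(\|z^{t}\|)$ in Assumption~\ref{assu:Class A} must be controlled uniformly enough in $t$ (and integrably in $\omega$) so that it contributes nothing to $J'(0)$ after dividing by $\nu$ and sending $\nu\to0$; this is where one needs either a uniform-in-$t$ differentiability hypothesis on $h$ or an a~priori bound on $c^{\epsilon}$ and its perturbations, and it is the step the Appendix presumably handles with care. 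A secondary technical point is ensuring the Fubini interchange is licit, i.e. that $\int_{0}^{T}\int_{0}^{t}|\delta_{s}h_{t}((c^{\epsilon})^{t})|\,|z_{s}|\,ds\,dt$ has finite expectation.
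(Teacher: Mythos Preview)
Your proposal is correct and follows essentially the same variational argument as the paper: perturb $c^{\epsilon}$ by $\nu z$, differentiate at $\nu=0$ using the $\mathcal{A}_{t}$-structure of the Fr\'echet derivative, interchange the order of integration in the double integral, and conclude from arbitrariness of the adapted test process $z$. The paper's proof is terser---it jumps directly from $E\int_{0}^{T}z_{t}\big(w_{t}-c_{t}-\epsilon(\partial_{c_{t}}h_{t}+\int_{t}^{T}\delta_{t}h_{s}\,ds)\big)dt=0$ to the conditional-expectation form of the first-order condition without spelling out the tower-property step you make explicit, and it does not discuss the dominated-convergence or Fubini justifications you flag---but the route is the same.
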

\begin{proof}
Perturb the process $c$ by $\nu z,$ where $z$ is an adapted process,
and compute the derivative in $\nu$ at $\nu=0:$ 
\begin{align*}
 & \partial_{\nu}E\int_{0}^{T}\left(-\frac{1}{2}(c_{t}+\nu z_{t}-w_{t})^{2}-\epsilon h_{t}(c^{t}+\nu z^{t})\right)dt\bigg|_{\nu=0}=\\
 & =E\int_{0}^{T}\left(z_{t}w_{t}-c_{t}z_{t}-\epsilon\left(\partial_{c_{t}}h_{t}(c^{t})z_{t}+\int_{0}^{t}\delta_{s}h_{t}(c^{t})z_{s}ds\right)\right)dt=\\
 & =E\int_{0}^{T}z_{t}\left(w_{t}-c_{t}-\epsilon\left(\partial_{c_{t}}h_{t}(c^{t})+\int_{t}^{T}\delta_{t}h_{s}(c^{s})ds\right)\right)dt=0.
\end{align*}
Since $z$ is an arbitrary adapted process, we get the first order
conditions (\ref{eq: FOC, perturbed}). The importance of Assumption
\ref{assu:Class A} is evident in particular in the third line of
the proof. If we did not impose this assumption, then instead of the
integral $\int_{t}^{T}\delta_{t}h_{s}(c^{s})ds$ we would have a general
measure $\mu_{t}(c^{s},[t,T])$ which can even be discontinuous in
$t$. Assumption \ref{assu:Class A} imposes a smooth structure for
this measure by requiring that the measure has an absolutely continuous
part.
\end{proof}
This equation (\ref{eq: FOC, perturbed}) has a natural economic meaning.
When hysteresis $\epsilon h_{t}$ is present, the solution to the
optimal problem ($c_{t}^{*}=w_{t})$ is modified by the two terms.
The first term, 
\[
I_{t}=\epsilon\partial_{c_{t}}h_{t}(c^{\epsilon}),
\]
is the instantaneous marginal effect of the change in the policy $c_{t}^{\epsilon}$
on functional $h_{t}$. The second term gives the future marginal
effects of policy $c_{t}^{\epsilon}$
\[
F_{t}=\epsilon E\left[\int_{t}^{T}\delta_{t}h_{s}((c^{\epsilon})^{s})ds\bigg|\mathcal{F}_{t}\right].
\]
 on the values of all future hysteresis functionals $h_{s}\left(\left(c^{\epsilon}\right)^{s}\right)$.
For each time $s$ (where $t\le s\le T$), the marginal effect of
changing policy in period $t$ is represented by the derivative $\delta_{t}h_{s}(\left(c^{\epsilon}\right)^{s})$
of the functional $h_{s}$ with respect to change in policy at time
$t$. These marginal effects are evaluated as a conditional expectation
at time $t$ and hence represent the expected future marginal effects.
It is useful to think about the term $F_{t}$ as the cumulative impulse
response of the change of the policy today on all future hysteresis
functionals. This is similar to Alvarez and Lippi (2019), Alvarez,
Le Bihan, and Lippi (2016), Alvarez, Lippi, and Oskolkov (2020) and
Borovi\v{c}ka, Hansen, Scheinkman (2014), Borovi\v{c}ka, Hansen, Hendricks,
and Scheinkman (2011), and Borovi\v{c}ka and Hansen (2016). Note that
$F_{t}$ is a conditional expectation process that may non-trivially
change with time as both the marginal effects $\delta_{t}h_{s}$ and
the filtration $\mathcal{F}_{t}$ changes.

Both terms depend on the path of policy $c_{\left[0,s\right]}^{\epsilon}=(c^{\epsilon})^{s}$.
As seen here, the assumption that the functional $h_{t}$ is in the
class $\mathcal{A}_{t}$ allows us to conveniently separate the first
order condition in two parts: the marginal effects of the policy on
current period $I_{t}$ and the conditional expectation of the cumulative
future marginal effects of policy paths in the future, $F_{t}$. 

\section{Characterizing the general problem}

This section is divided into two main parts. Section \ref{sec:First-order-process}
derives a closed-form characterization of the change in the optimal
policy when hysteresis is small. Section \ref{subsec:Dynamics-of-optimal}
provides a characterization of the optimal policy.\footnote{It may be useful for a reader to also consider an example in the appendix
that provides a parallel characterization for the case with no path-dependency.} 

\subsection{\label{sec:First-order-process}Stochastic hysteresis elasticity
and its dynamics}

In this section we derive a closed-form characterization of the change
in the optimal policy when hysteresis is small. We call this first
order process a stochastic hysteresis elasticity. We characterize
the dynamics of the stochastic elasticity in closed form by providing
its semimartingale decomposition which is the main contribution of
this section.

\subsubsection{\label{subsec:FOP}Stochastic hysteresis elasticity}

We first formally define the stochastic hysteresis elasticity. This
is the first order process that represents the change in optimal policy
process $c^{*}$ in response to introduction of an infinitesimal path-dependent
hysteresis functional $h_{t}$ . 
\begin{defn}
\textbf{(Stochastic hysteresis elasticity)} Let $c_{t}^{*}$ be a
solution to the baseline problem (\ref{eq:Unperturbed problem}) and
$c_{t}^{\epsilon}$ be a solution to the problem (\ref{eq:Perturbed problem}),
where hysteresis is given by the functional $\epsilon h_{t}$, $\epsilon\rightarrow0$.
The first-order process or the stochastic hysteresis elasticity $C_{t}^{h}$
is such that
\[
c_{t}^{\epsilon}=c_{t}^{*}+\epsilon C_{t}^{h}+o(\epsilon).
\]
\end{defn}
The stochastic elasticity $C_{t}^{h}$ is the change to the first
order in the optimal policy process in response to a small change
in the hysteresis $h$. In this sense, it is similar to the usual
concept of elasticity but now determines how the whole process $c_{t}$
changes. %
{} From now on, to ease notation we drop the dependence of $C_{t}^{h}$
on $h$ and denote it simply by $C_{t}$.

Differentiating (\ref{eq: FOC, perturbed}) with respect to $\epsilon$
at $\epsilon=0$, and recalling that $c_{t}^{0}=c_{t}^{*}=w_{t}$
we find the process $C_{t}$ explicitly
\begin{equation}
C_{t}=-\partial_{c_{t}}h_{t}(w^{t})-E\left[\int_{t}^{T}\delta_{t}h_{s}(w^{s})ds\bigg|\mathcal{F}_{t}\right].\label{eq: First order process}
\end{equation}
This equation is already interesting by itself as it presents the
change in the optimal policy plan due to the introduction of the path-dependent
effect. Moreover, it is given in closed form.

We now turn to presenting the main result of this section -- showing
that $C_{t}$ is an Itô process and, most importantly, writing its
explicit semimartingale decomposition. We are seeking a representation
\[
dC_{t}=\beta\left(w^{t}\right)dt+\gamma\left(w^{t}\right)dw_{t},
\]
where $\beta$ and $\gamma$ are potentially path-dependent. In other
words, we want to determine the dynamics of the stochastic elasticity
$C_{t}$.

\subsubsection{\label{subsec:Present-effect-of}Present effect of the policy path }

This section provides a semimartingale decomposition of the term $\partial_{c_{t}}h_{t}(c^{*})$
-- the present effect of the path of policies $(c^{*})^{t}=w^{t}$.
The principal tool that we use is Dupire's functional Itô formula
in Lemma \ref{lem:Functional-It=0000F4's-Formula:} that applies to
the non-anticipative functionals of the path. We now make an assumption
that allows the use of this formula.
\begin{assumption}
There exists a non-anticipative functional $q:[0,T]\times D[0,T]\to\mathbb{R}$
which is horizontally differentiable, twice vertically differentiable
and is such that $q_{t}(c)=\partial_{c_{t}}h_{t}(c).$ In other words,
the derivative \textup{$\partial_{c_{t}}h_{t}(c)$} can be extended
to a $C_{b}^{1,2}$-functional on the space of cádlág paths. We denote
this extension also by $\partial_{c_{t}}h_{t}(c)$.
\end{assumption}
Applying the functional Itô formula, and noting that $c_{t}^{*}=w_{t}$
we obtain a semimartingale decomposition of the functional $\partial_{c_{t}}h_{t}(c^{*})$
that depends on the whole history of policy $(c^{*})^{t}=w^{t}$:
\[
d(\partial_{c_{t}}h_{t}(c^{*}))=\Delta_{t}\left(\partial_{c_{t}}h_{t}(w^{t})\right)dt+\partial_{c_{t}}\left(\partial_{c_{t}}h_{t}(w^{t})\right)dw_{t}+\frac{1}{2}\partial_{c_{t}}\left(\partial_{c_{t}}^{2}h_{t}(w^{t})\right)dt,
\]
and gathering the terms we obtain the following lemma.
\begin{lem}
\label{claim:Dynamics Present}\textbf{(Dynamics of the present effect,
$I_{t}$)} The semimartingale decomposition of the present marginal
effects, $I_{t}$, of the policy path is given by
\begin{align}
d(\partial_{c_{t}}h_{t}(c^{*})) & =\left(\Delta_{t}(\partial_{c_{t}}h_{t}(w^{t}))+\frac{1}{2}\partial_{c_{t}}^{3}h_{t}(w^{t})\right)dt+\partial_{c_{t}}^{2}h_{t}(w^{t})dw_{t}.\label{eq: FOP, Dupire part}
\end{align}
\end{lem}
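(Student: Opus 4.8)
The plan is to apply Dupire's functional It\^o formula (Lemma \ref{lem:Functional-It=0000F4's-Formula:}) to the once-vertically-differentiated functional, evaluated along the unperturbed optimal trajectory. First I would set $q_t(c):=\partial_{c_t}h_t(c)$ and invoke the preceding Assumption, which says precisely that $q$ extends to a non-anticipative functional on the space of c\`adl\`ag paths that is horizontally differentiable and twice vertically differentiable, i.e.\ of class $C^{1,2}_b$; this is exactly the regularity required to feed $q$ into the functional It\^o formula. Then I would recall from (\ref{eq: Optimal consumption, unperturbed}) that the unperturbed optimum is $c_t^*=w_t$, so the relevant It\^o process here is $x_t=w_t$ itself, with drift $b_t\equiv 0$ and diffusion $\sigma_t\equiv 1$.

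Applying Lemma \ref{lem:Functional-It=0000F4's-Formula:} to $q$ along $x=w$ then gives
\[
dq_t(w^t)=\Delta_t q_t(w^t)\,dt+\partial_{c_t}q_t(w^t)\,dw_t+\tfrac12\,\partial_{c_t}^2 q_t(w^t)\,(\sigma_t)^2\,dt,
\]
and since $\sigma_t\equiv 1$ the last factor drops out. The remaining step is purely notational: substituting back $q_t=\partial_{c_t}h_t$, the first vertical derivative $\partial_{c_t}q_t=\partial_{c_t}\bigl(\partial_{c_t}h_t\bigr)$ is what is written $\partial_{c_t}^2 h_t$, and the second vertical derivative $\partial_{c_t}^2 q_t=\partial_{c_t}^2\bigl(\partial_{c_t}h_t\bigr)$ is $\partial_{c_t}^3 h_t$. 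Collecting the two $dt$-terms yields
\[
d\bigl(\partial_{c_t}h_t(c^*)\bigr)=\Bigl(\Delta_t\bigl(\partial_{c_t}h_t(w^t)\bigr)+\tfrac12\,\partial_{c_t}^3 h_t(w^t)\Bigr)dt+\partial_{c_t}^2 h_t(w^t)\,dw_t,
\]
which is exactly (\ref{eq: FOP, Dupire part}).

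Since the whole argument is essentially a one-line invocation of the functional It\^o formula, there is no deep obstacle; the real work lies in checking the hypotheses, and this has been offloaded onto the preceding Assumption. The one point that genuinely needs care is the consistency of the iterated-vertical-derivative symbols: one must confirm that the Dupire vertical derivative of the \emph{path-functional} $c\mapsto\partial_{c_t}h_t(c)$ coincides with the object denoted $\partial_{c_t}^2 h_t$, and likewise one more iterate for $\partial_{c_t}^3 h_t$. This is unambiguous once we agree, as in Lemma \ref{lem: Vertical derivative}, that for $h_t\in\mathcal{A}_t$ the symbol $\partial_{c_t}h_t$ \emph{is} the Dupire vertical derivative; the $C^{1,2}_b$ extension supplied by the Assumption then makes the higher iterates well defined. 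A secondary, purely bookkeeping point --- needed only if one wants the conclusion stated as a genuine semimartingale identity rather than a formal differential --- is to record that each of $\Delta_t(\partial_{c_t}h_t(w^t))$, $\partial_{c_t}^2 h_t(w^t)$, and $\partial_{c_t}^3 h_t(w^t)$ is an adapted functional of $w^t$ with the local integrability required for the corresponding $dt$- and $dw_t$-integrals to make sense, which is again inherited from the boundedness built into the $C^{1,2}_b$ hypothesis.
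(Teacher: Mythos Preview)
Your proposal is correct and follows exactly the same approach as the paper: set $q_t=\partial_{c_t}h_t$, invoke the $C^{1,2}_b$ assumption, apply the functional It\^o formula along $c^*_t=w_t$ (so $b_t=0$, $\sigma_t=1$), and rename the iterated vertical derivatives. If anything, your discussion of the consistency of the symbols $\partial_{c_t}^2 h_t$, $\partial_{c_t}^3 h_t$ with the Dupire derivatives of $q_t$ is more explicit than the paper's own one-line derivation.
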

This part of the derivation is already interesting as a stand-alone
result. The reason why one can apply the Dupire and Cont-Fournie analysis
is that $\partial_{c_{t}}h_{t}(c^{*})$ is already represented as
a functional of the path. The decomposition (\ref{eq: FOP, Dupire part})
then has the same intuitive meaning as the standard Itô's formula
but now applies to the functional of the past, not the function of
the present realization. Note that we are already applying the functional
Itô's formula to the marginal effects, that is, to the vertical derivative
of the functional $\partial_{c_{t}}h_{t}(w^{t})$. Hence, there are
the second, $\partial_{c_{t}}^{2}$, and the third, $\partial_{c_{t}}^{3}$,
vertical derivatives as well as the mixed derivative $\Delta_{t}\partial_{c_{t}}$.

\subsubsection{\label{subsec:Semimartingale-Future}Expected future effects of policy
paths}

We now make an assumption on the derivative of the functional, $\delta_{t}h_{s}(c^{*})$,
that is needed to apply the total derivative formula of Proposition
\ref{prop: Total derivative formula}.\footnote{In Section \ref{subsec:Assumptions-on-smoothness}, we further discuss
the smoothness assumptions for the functionals $h_{t}$.}
\begin{assumption}
The derivative $\delta_{t}h_{s}(c^{*})$ is an absolutely continuous
in $t$ and is a Malliavin differentiable functional of the path $w^{s}.$
\end{assumption}
It follows that the future marginal effect of a given policy path
is given by
\[
\frac{d}{dt}\int_{t}^{T}\delta_{t}h_{s}(c^{*})ds=-\delta_{t}h_{t}(c^{*})+\int_{t}^{T}\frac{\partial}{\partial t}(\delta_{t}h_{s})(c^{*})ds.
\]
and its conditional expectation is given by the total derivative formula
(\ref{eq:total derivative formula}):
\begin{align*}
d\mathbb{E}\bigg[\int_{t}^{T}\delta_{t}h_{s}(c^{*})ds\bigg|\mathcal{F}_{t}\bigg] & =\underset{\text{time derivative}}{\underbrace{\mathbb{E}\bigg[\partial_{t}\int_{t}^{T}\delta_{t}h_{s}(c^{*})ds\bigg|\mathcal{F}_{t}\bigg]}}dt+\underset{\text{stochastic derivative}}{\underbrace{\mathbb{E}\bigg[D_{t}\int_{t}^{T}\delta_{t}h_{s}(c^{*})ds\bigg|\mathcal{F}_{t}\bigg]}}dw_{t}.
\end{align*}
The meaning of this equation is straightforward. The conditional expectation
\[
F_{t}=\mathbb{E}\bigg[\int_{t}^{T}\delta_{t}h_{s}(c^{*})ds\bigg|\mathcal{F}_{t}\bigg]
\]
describes how a change in policy in period $t$ determines the expected
future marginal effects of that policy $\delta_{t}h_{s}(c^{*})$ in
all periods $s$ ($s\in\left[t,T\right])$. The time evolution of
the conditional expectation is given by the analogue of the time derivative.
The stochastic evolution of the conditional expectation is given by
a stochastic derivative of the cumulative change in all marginal effects
of policy $t$ on all future periods $s$ with respect to variation
in the underlying process $w$. Collecting the terms gives us the
dynamics of the future effects of policy and shows that it is an Itô
process.
\begin{lem}
\label{lem:Dynamics Future}\textbf{(Dynamics of the future effects,
$F_{t}$)} The semimartingale representation of the future marginal
effects, $F_{t}$, of policy paths is given by:
\begin{align}
 & d\mathbb{E}\bigg[\int_{t}^{T}\delta_{t}h_{s}(c^{*})ds\bigg|\mathcal{F}_{t}\bigg]=\nonumber \\
 & =\Bigl(-\delta_{t}h_{t}(w^{t})+\int_{t}^{T}\mathbb{E}\left[\frac{\partial}{\partial t}(\delta_{t}h_{s})(w^{s})\bigg|\mathcal{F}_{t}\right]ds\Bigr)dt+\mathbb{E}\bigg[D_{t}\int_{t}^{T}\delta_{t}h_{s}(w^{s})ds\bigg|\mathcal{F}_{t}\bigg]dw_{t}.\label{eq: FOP, total derivative part}
\end{align}
\end{lem}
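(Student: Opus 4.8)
The plan is to recognize that $F_t=\mathbb{E}\big[\xi_t\,\big|\,\mathcal{F}_t\big]$ with
\[
\xi_t=\int_t^T\delta_t h_s(c^{*})\,ds=\int_t^T\delta_t h_s(w^{s})\,ds
\]
(using $c^{*}=w$) is \emph{exactly} an object of the form treated by the total derivative formula of Proposition \ref{prop: Total derivative formula}, so that the lemma follows by a direct application of that proposition once its hypotheses are checked. First I would verify the hypotheses: square integrability and Malliavin differentiability of each $\delta_t h_s(w^{s})$ are supplied by the assumption preceding the lemma, and absolute continuity of $\delta_t h_s(c^{*})$ in $t$ is likewise part of that assumption; integrating in $s$ then gives $\xi_t=\xi_0+\int_0^t\eta_r\,dr$ with $\xi_0=\int_0^T\delta_0 h_s(w^{s})\,ds$ and density $\eta_r=-\delta_r h_r(w^{r})+\int_r^T\partial_r(\delta_r h_s)(w^{s})\,ds$, and one must check this density lies in $L^2(\Omega\times[0,T])$. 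Note $\xi$ and $\eta$ are anticipative, which is permitted by the proposition.

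Second, I would compute the two ingredients entering formula (\ref{eq:total derivative formula}). For the time derivative, the Leibniz rule for an integral with variable lower limit gives
\[
\partial_t\xi_t=-\,\delta_t h_t(w^{t})+\int_t^T\partial_t(\delta_t h_s)(w^{s})\,ds,
\]
the boundary term recording the shrinking window of integration and the integral term the $t$-dependence of the integrand through the first subscript of $\delta_t h_s$. For the stochastic derivative, since the limits $t$ and $T$ are deterministic the Malliavin operator commutes with the $ds$-integral, so $D_t\xi_t=\int_t^T D_t\big(\delta_t h_s(w^{s})\big)\,ds$, i.e.\ $\mathbb{E}[D_t\xi_t\,|\,\mathcal{F}_t]$ equals the diffusion coefficient written in (\ref{eq: FOP, total derivative part}).

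Third, I would substitute into (\ref{eq:total derivative formula}): $dF_t=\mathbb{E}[\partial_t\xi_t\,|\,\mathcal{F}_t]\,dt+\mathbb{E}[D_t\xi_t\,|\,\mathcal{F}_t]\,dw_t$. In the drift, $\delta_t h_t(w^{t})$ is $\mathcal{F}_t$-measurable and so leaves the conditional expectation, while a conditional Fubini step moves the expectation inside the $ds$-integral, producing $-\delta_t h_t(w^{t})+\int_t^T\mathbb{E}\big[\partial_t(\delta_t h_s)(w^{s})\,\big|\,\mathcal{F}_t\big]\,ds$; the diffusion is already in the claimed form. This is precisely (\ref{eq: FOP, total derivative part}).

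The hard part will be the regularity bookkeeping rather than any conceptual step: justifying differentiation under the integral sign in $s$ (so the Leibniz computation of $\partial_t\xi_t$ is valid and yields an $L^2$ density), the interchange of $D_t$ with $\int_t^T(\cdot)\,ds$, and the conditional Fubini step, all under the mild stated assumptions. In the body one should either invoke $L^2$-bounds on $\delta_t h_s$, $\partial_t(\delta_t h_s)$ and $D_t\delta_t h_s$ that are uniform in $s\in[t,T]$ and license dominated convergence, or appeal to the more general version of Proposition \ref{prop: Total derivative formula} in the Appendix whose hypotheses are tailored to cover this application. Once these interchanges are licensed, the lemma is an immediate corollary of the total derivative formula.
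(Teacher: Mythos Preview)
Your proposal is correct and follows essentially the same route as the paper: set $\xi_t=\int_t^T\delta_t h_s(w^s)\,ds$, invoke the total derivative formula of Proposition~\ref{prop: Total derivative formula}, compute $\partial_t\xi_t$ by the Leibniz rule and $D_t\xi_t$ by passing $D_t$ through the $ds$-integral, and then simplify the conditional expectation of the drift. The paper's argument is terser and does not spell out the regularity bookkeeping you flag (the $L^2$ checks, the interchange justifications, and the conditional Fubini step), but the underlying method is identical.
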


\subsubsection{Dynamics of the stochastic elasticity}

We now combine the results of the decompositions of the present effects
(\ref{eq: FOP, Dupire part}) and the future effects (\ref{eq: FOP, total derivative part})
of policy and recalling that $c_{t}^{*}=w_{t}$:
\begin{align*}
dC_{t}= & -\Bigl(\underset{\text{Present effects: functional Itô}}{\underbrace{\Delta_{t}\partial_{c_{t}}h_{t}(w^{t})dt+\frac{1}{2}\partial_{c_{t}}^{3}h_{t}(w^{t})dt+\partial_{c_{t}}^{2}h_{t}(w^{t})dw_{t}}}\Bigr)-\\
 & -\Bigl(\underset{\text{Future effects: total derivative}}{\underbrace{-\delta_{t}h_{t}(w^{t})dt+E\left[\int_{t}^{T}\frac{\partial}{\partial t}(\delta_{t}h_{s})(w^{s})ds\bigg|\mathcal{F}_{t}\right]dt+E\left[\int_{t}^{T}D_{t}\left(\delta_{t}h_{s}(w^{s})\right)ds\bigg|\mathcal{F}_{t}\right]dw_{t}}}\Bigr),
\end{align*}
where in the last term we interchanged the Malliavin derivative and
the integral.

Collecting the terms gives the semimartingale decomposition of the
first order process $C_{t}$ in closed form.
\begin{thm}
\label{thm:First order process, main theorem}\textbf{(Dynamics of
stochastic elasticity)} The semimartingale decomposition of the stochastic
elasticity $C_{t}$ is given by 
\begin{align}
dC_{t}= & -\Bigl(\Delta_{t}\partial_{c_{t}}h_{t}(w^{t})+\frac{1}{2}\partial_{c_{t}}^{3}h_{t}(w^{t})-\delta_{t}h_{t}(w^{t})+E\left[\int_{t}^{T}\frac{\partial}{\partial t}(\delta_{t}h_{s})(w^{s})ds\bigg|\mathcal{F}_{t}\right]\Bigr)dt-\nonumber \\
 & -\Bigl(\partial_{c_{t}}^{2}h_{t}(w^{t})+E\left[\int_{t}^{T}D_{t}\left(\delta_{t}h_{s}(w^{s})\right)ds\bigg|\mathcal{F}_{t}\right]\Bigr)dw_{t}.\label{eq: FOP, main formula}
\end{align}
\end{thm}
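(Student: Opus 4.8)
The plan is to decompose $C_t$ into its two natural constituents and treat each with the calculus already developed in the excerpt. First I would differentiate the first-order condition (\ref{eq: FOC, perturbed}) in $\epsilon$ at $\epsilon=0$; since $c^0_t=c^*_t=w_t$, this gives directly the closed form (\ref{eq: First order process}),
\[
C_{t}=-\partial_{c_{t}}h_{t}(w^{t})-E\left[\int_{t}^{T}\delta_{t}h_{s}(w^{s})ds\,\bigg|\,\mathcal{F}_{t}\right].
\]
Hence it suffices to write down the semimartingale decomposition of the present effect $\partial_{c_t}h_t(w^t)$ and of the future effect $F_t=E[\int_t^T\delta_t h_s(w^s)\,ds\mid\mathcal F_t]$ separately, and then add the two with a minus sign.

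For the present effect, I would use the standing assumption that $\partial_{c_t}h_t$ extends to a $C^{1,2}_b$ non-anticipative functional on c\`adl\`ag paths and apply the functional It\^o formula, Lemma \ref{lem:Functional-It=0000F4's-Formula:}, to this functional evaluated along $x_t=w_t$ (so that $b_t=0$, $\sigma_t=1$). Lemma \ref{lem: Vertical derivative} identifies the relevant vertical derivatives along $w^t$ with $\partial_{c_t}^2 h_t$ and $\partial_{c_t}^3 h_t$, and one reads off
\[
d\big(\partial_{c_t}h_t(w^t)\big)=\Big(\Delta_t\partial_{c_t}h_t(w^t)+\frac{1}{2}\partial_{c_t}^3 h_t(w^t)\Big)dt+\partial_{c_t}^2 h_t(w^t)\,dw_t,
\]
which is Lemma \ref{claim:Dynamics Present}, equation (\ref{eq: FOP, Dupire part}).

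For the future effect, I would set $\xi_t=\int_t^T\delta_t h_s(w^s)\,ds$ and feed it into the total derivative formula, Proposition \ref{prop: Total derivative formula}. Two ingredients are needed: the time derivative and the Malliavin derivative of $\xi_t$. By the Leibniz rule -- legitimate here because $t\mapsto\delta_t h_s$ is assumed absolutely continuous -- the moving lower limit contributes the boundary term $-\delta_t h_t(w^t)$ while the integrand contributes $\int_t^T\partial_t(\delta_t h_s)(w^s)\,ds$; and commuting $D_t$ with the $ds$-integral gives $D_t\xi_t=\int_t^T D_t\big(\delta_t h_s(w^s)\big)\,ds$. Substituting these into (\ref{eq:total derivative formula}) produces Lemma \ref{lem:Dynamics Future}, equation (\ref{eq: FOP, total derivative part}). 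Finally, $dC_t=-\,d\big(\partial_{c_t}h_t(w^t)\big)-dF_t$; collecting the $dt$- and $dw_t$-terms from the two decompositions yields (\ref{eq: FOP, main formula}).

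I expect the bookkeeping to be routine; the real work is in the regularity justifications. One must check that $\xi_t$ is square-integrable, Malliavin differentiable and absolutely continuous in $t$ so that Proposition \ref{prop: Total derivative formula} genuinely applies, that $D_t$ may be interchanged with $\int_t^T(\cdot)\,ds$, and that the Leibniz differentiation of $\int_t^T\delta_t h_s(w^s)\,ds$ -- in which $t$ enters both the limit of integration and the integrand -- is valid; these rest on the assumption preceding Lemma \ref{lem:Dynamics Future} and its sharper appendix version. A secondary point is justifying the $\epsilon$-differentiation of $c^\epsilon_t$ under the expectation, but since the first-order condition (\ref{eq: FOC, perturbed}) is already in hand and its right-hand side depends on $\epsilon$ only through the smooth terms $\partial_{c_t}h_t$ and $\delta_t h_s$ once $c^0=w$ is inserted, this reduces to differentiating an explicit expression.
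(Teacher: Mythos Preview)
Your proposal is correct and follows essentially the same route as the paper: derive $C_t$ by differentiating \eqref{eq: FOC, perturbed} at $\epsilon=0$, apply the functional It\^o formula to $\partial_{c_t}h_t(w^t)$ to obtain Lemma~\ref{claim:Dynamics Present}, apply the total derivative formula to $\xi_t=\int_t^T\delta_t h_s(w^s)\,ds$ to obtain Lemma~\ref{lem:Dynamics Future}, and add the two decompositions. The regularity caveats you list are precisely those the paper flags in its assumptions preceding each lemma and in Section~\ref{subsec:Assumptions-on-smoothness}.
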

This theorem shows that the optimal process $c_{t}^{*}$ changes with
the introduction of hysteresis, that is, the additional infinitesimal
path-dependent effect of policies. The equation (\ref{eq: FOP, main formula})
gives the first order process in closed form as
\[
dC_{t}^{h}=\beta_{t}\left(w^{t}\right)dt+\gamma_{t}\left(w^{t}\right)dw_{t},
\]
where $\beta$ and $\gamma$ are path-dependent coefficients.

\subsection{\label{subsec:Dynamics-of-optimal}Dynamics of optimal policy}

We now characterize the dynamics of the optimal policy. 

In this section, we assume that $\epsilon=1$, without loss of generality.
The first order conditions (\ref{eq: FOC, perturbed}) become
\[
w_{t}=c_{t}+\partial_{c_{t}}h_{t}(c^{t})+E\left[\int_{t}^{T}\delta_{t}h_{s}(c^{s})ds\bigg|\mathcal{F}_{t}\right].
\]
Let us take the differential of this equation assuming that $c$ is
an Itô process, i.e. 
\[
dc_{t}=\alpha_{t}dt+\beta_{t}dw_{t}.
\]
The present marginal effects of the policy are differentiated using
the functional Itô formula (\ref{eq:Functional Ito}):
\[
d\left(\partial_{c_{t}}h_{t}(c^{t})\right)=\Delta_{t}\partial_{c_{t}}h_{t}(c^{t})dt+\partial_{c_{t}}^{2}h_{t}(c^{t})dc_{t}+\frac{1}{2}\partial_{c_{t}}^{3}h_{t}(c^{t})\beta_{t}^{2}dt.
\]
The future expected marginal effects of the policy are differentiated
using the total derivative formula (\ref{eq:total derivative formula}):
\begin{align*}
 & d\left(E\left[\int_{t}^{T}\delta_{t}h_{s}(c^{s})ds\bigg|\mathcal{F}_{t}\right]\right)=E\left[\partial_{t}\int_{t}^{T}\delta_{t}h_{s}(c^{s})ds\bigg|\mathcal{F}_{t}\right]dt+E\left[\int_{t}^{T}D_{t}\left(\delta_{t}h_{s}(c^{s})\right)ds\bigg|\mathcal{F}_{t}\right]dw_{t}=\\
 & =E\left[\int_{t}^{T}\partial_{t}\left(\delta_{t}h_{s}(c^{s})\right)ds\bigg|\mathcal{F}_{t}\right]dt-\delta_{t}h_{t}(c^{t})dt+E\left[\int_{t}^{T}D_{t}\left(\delta_{t}h_{s}(c^{s})\right)ds\bigg|\mathcal{F}_{t}\right]dw_{t}.
\end{align*}
The differential of the first order conditions is then given by
\begin{align*}
 & dw_{t}=dc_{t}+\Delta_{t}\partial_{c_{t}}h_{t}(c^{t})dt+\partial_{c_{t}}^{2}h_{t}(c^{t})dc_{t}+\frac{1}{2}\partial_{c_{t}}^{3}h_{t}(c^{t})\beta_{t}^{2}dt+\\
 & +E\left[\int_{t}^{T}\partial_{t}\delta_{t}h_{s}(c^{s})ds\bigg|\mathcal{F}_{t}\right]dt-\delta_{t}h_{t}(c^{t})dt+E\left[\int_{t}^{T}D_{t}\left(\delta_{t}h_{s}(c^{s})\right)ds\bigg|\mathcal{F}_{t}\right]dw_{t}.
\end{align*}
Collecting the terms near $dt$ and $dw_{t}$ we derive dynamics of
the optimal policy:
\[
\begin{cases}
0=(1+\partial_{c_{t}}^{2}h_{t}(c^{t}))\alpha_{t}+\frac{1}{2}\partial_{c_{t}}^{3}h_{t}(c^{t})\beta_{t}^{2}+\Delta_{t}\partial_{c_{t}}h_{t}(c^{t})+E\left[\int_{t}^{T}\partial_{t}\delta_{t}h_{s}(c^{s})ds\bigg|\mathcal{F}_{t}\right]-\delta_{t}h_{t}(c^{t}),\\
1=(1+\partial_{c_{t}}^{2}h_{t}(c^{t}))\beta_{t}+E\left[\int_{t}^{T}D_{t}\left(\delta_{t}h_{s}(c^{s})\right)ds\bigg|\mathcal{F}_{t}\right],
\end{cases}
\]
and the next theorem characterizes the drift and diffusion coefficients
of the optimal policy.
\begin{thm}
\label{prop:Optimal-policy}(\textbf{Optimal policy}) The dynamics
of optimal policy is given by:
\begin{align*}
\alpha_{t} & =-\frac{1}{1+\partial_{c_{t}}^{2}h_{t}(c^{t})}\left(\frac{1}{2}\partial_{c_{t}}^{3}h_{t}(c^{t})\beta_{t}^{2}+\Delta_{t}\partial_{c_{t}}h_{t}(c^{t})+E\left[\int_{t}^{T}\partial_{t}\delta_{t}h_{s}(c^{s})ds\bigg|\mathcal{F}_{t}\right]-\delta_{t}h_{t}(c^{t})\right),\\
\beta_{t} & =\frac{1-E\left[\int_{t}^{T}D_{t}\delta_{t}h_{s}(c^{s})ds\bigg|\mathcal{F}_{t}\right]}{(1+\partial_{c_{t}}^{2}h_{t}(c^{t}))}.
\end{align*}
\end{thm}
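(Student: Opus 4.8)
The plan is to take the first-order condition (\ref{eq: FOC, perturbed}) with $\epsilon=1$, written as $w_t = c_t + \partial_{c_t}h_t(c^t) + \mathbb{E}[\int_t^T \delta_t h_s(c^s)\,ds\,|\,\mathcal{F}_t]$, posit the ansatz that the optimal policy is an It\^o process $dc_t = \alpha_t\,dt + \beta_t\,dw_t$ with adapted (path-dependent) coefficients, and differentiate both sides. Because $w$ is a Brownian motion, the left-hand side contributes only $dw_t$, so the entire content of the theorem is the computation of the stochastic differentials of the two terms on the right and the matching of $dt$- and $dw_t$-coefficients.

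For the first term I would invoke the functional It\^o formula (Lemma \ref{lem:Functional-It=0000F4's-Formula:}), using the standing assumption that $\partial_{c_t}h_t$ extends to a $C^{1,2}_b$-functional on c\`adl\`ag paths, applied to the It\^o process $c$: this gives $d(\partial_{c_t}h_t(c^t)) = \Delta_t \partial_{c_t}h_t(c^t)\,dt + \partial_{c_t}^2 h_t(c^t)\,dc_t + \tfrac12 \partial_{c_t}^3 h_t(c^t)\beta_t^2\,dt$, the quadratic-variation term carrying the diffusion coefficient $\beta_t$. For the second term I would apply the total derivative formula (Proposition \ref{prop: Total derivative formula}) with $\xi_t = \int_t^T \delta_t h_s(c^s)\,ds$, which is absolutely continuous in $t$ and Malliavin differentiable by the assumption on $\delta_t h_s$; here the pathwise $t$-derivative is $\partial_t \xi_t = -\delta_t h_t(c^t) + \int_t^T \partial_t(\delta_t h_s)(c^s)\,ds$ by Leibniz's rule, and interchanging the Malliavin derivative with the $ds$-integral gives $D_t \xi_t = \int_t^T D_t(\delta_t h_s(c^s))\,ds$, so that the total derivative formula yields the $dt$- and $dw_t$-parts displayed in the text just before the statement.

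Substituting $dc_t = \alpha_t\,dt + \beta_t\,dw_t$ everywhere in $dw_t = dc_t + d(\partial_{c_t}h_t(c^t)) + dF_t$ and equating $dw_t$-coefficients gives the scalar equation $1 = (1+\partial_{c_t}^2 h_t(c^t))\beta_t + \mathbb{E}[\int_t^T D_t(\delta_t h_s(c^s))\,ds\,|\,\mathcal{F}_t]$, which I solve for $\beta_t$; equating $dt$-coefficients gives $0 = (1+\partial_{c_t}^2 h_t(c^t))\alpha_t + \tfrac12\partial_{c_t}^3 h_t(c^t)\beta_t^2 + \Delta_t \partial_{c_t}h_t(c^t) + \mathbb{E}[\int_t^T \partial_t \delta_t h_s(c^s)\,ds\,|\,\mathcal{F}_t] - \delta_t h_t(c^t)$, which I solve for $\alpha_t$ after inserting the already-found $\beta_t$. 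Dividing by $1+\partial_{c_t}^2 h_t(c^t)$ requires this quantity to be nonzero, which I would flag as the natural second-order invertibility condition; this step is precisely what keeps the result an \emph{implicit} characterization (unlike the explicit Theorem \ref{thm:First order process, main theorem}), since $\beta_t$ still enters the drift quadratically and both coefficients are functionals of the unknown path $c^t$.

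The main obstacle is not the algebra but justifying the It\^o-process ansatz for $c$ and the applicability of the two calculus formulas along the optimal trajectory: unlike the elasticity computation, where one linearizes around the explicit baseline $c^*_t = w_t$, here $c$ is defined only implicitly through (\ref{eq: FOC, perturbed}), so one must argue a priori that a solution exists with enough regularity (square-integrable, Malliavin differentiable, with $\partial_{c_t}h_t(c^t)$ a $C^{1,2}_b$-functional of its own path) for Lemma \ref{lem:Functional-It=0000F4's-Formula:} and Proposition \ref{prop: Total derivative formula} to apply, and that the interchange of $D_t$ with $\int_t^T(\cdot)\,ds$ is licit. Modulo those regularity inputs — which I would package as hypotheses, paralleling the assumptions already imposed for the elasticity result — the proof is exactly the term-by-term differentiation and coefficient matching described above.
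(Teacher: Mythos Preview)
Your proposal is correct and follows essentially the same approach as the paper: differentiate the first-order condition $w_t = c_t + \partial_{c_t}h_t(c^t) + E[\int_t^T \delta_t h_s(c^s)\,ds\,|\,\mathcal{F}_t]$ under the It\^o ansatz $dc_t=\alpha_t\,dt+\beta_t\,dw_t$, apply the functional It\^o formula to the present term and the total derivative formula to the conditional-expectation term, then match $dt$- and $dw_t$-coefficients to obtain the two scalar equations and solve. Your additional remarks on the regularity hypotheses needed to justify the ansatz and the interchange of $D_t$ with the $ds$-integral go slightly beyond what the paper spells out, but they are apposite and do not alter the argument.
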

We can compare this theorem to the results without hysteresis, where
there are only contemporaneous effects of the policy $f\left(c_{t}\right)$,
the details of which are in the appendix. The diffusion coefficient
$\beta_{t}$ is determined by two terms. The first term, in the denominator,
contains the vertical derivative of the present marginal effects of
the policy $\partial_{c_{t}}\left(\partial_{c_{t}}h_{t}(c^{t})\right)$
and measures the change in the present marginal costs due to a ``bump''
in the trajectory of the costs. This is an analogue of the term $f''\left(c_{t}\right)$
for the case without hysteresis. The second term, in the numerator,
is the stochastic derivative $E\left[\int_{t}^{T}D_{t}\delta_{t}h_{s}(c^{s})ds\bigg|\mathcal{F}_{t}\right]$
which measures impact of the change in the trajectory of the stochastic
process on the expected future marginal effects in all periods. This
term is not present in the case without hysteresis.

The drift coefficient $\alpha_{t}$ is determined by three terms.
The first term, in the denominator, is the same relative scaling as
in the case of $\beta_{t}$. The second set of terms is the time evolution
of the present, $\Delta_{t}\partial_{c_{t}}h_{t}(c^{t})$, and future,
$E\left[\partial_{t}\int_{t}^{T}\delta_{t}h_{s}(c^{s})ds\bigg|\mathcal{F}_{t}\right]$,
marginal costs. The analogue of the horizontal derivative $\Delta_{t}\partial_{c_{t}}h_{t}(c^{t})$
for the case without hysteresis would be the time derivative $\partial_{t}f\left(c_{t},t\right)$.
The time evolution of the future marginal costs are is in the case
without hysteresis. The third term is the quadratic variance of the
present marginal costs, $\frac{1}{2}\partial_{c_{t}}^{3}h_{t}(c^{t})\beta_{t}^{2}$.
It is a path-dependent analogue of $\frac{1}{2}f'''\left(c_{t}\right)\left(\beta_{t}\right)^{2}$
in the case without hysteresis.

The equations Theorem \ref{prop:Optimal-policy} characterize the
drift and diffusion coefficients implicitly in contrast to the explicit
form we obtained in Theorem \ref{thm:First order process, main theorem}.
The explicit form we obtained for the stochastic elasticity thus can
be thought of as a characterization of the optimal process for small
hysteresis, or small hysteresis asymptotics.
\begin{rem*}
We now show that assuming that the marginal effect of policy $\delta_{t}h_{s}$
are in the class $\mathcal{A}_{s}$, we can compute Malliavin derivatives
$D_{t}\delta_{t}h_{s}(c^{s})ds$ in Theorem \ref{prop:Optimal-policy}
in a more detailed way.
\end{rem*}
\begin{cor*}
Let $\delta_{t}h_{s}$ be in the class $\mathcal{A}_{s}$, then 
\[
D_{t}\delta_{t}g_{s}(c^{s})=\partial_{c_{t}}\delta_{t}h_{s}(c^{s})D_{t}c_{s}+\int_{t}^{s}\delta_{r}\delta_{t}h_{s}(c^{s})D_{t}c_{r}dr,
\]
where the tangent process $D_{t}c_{r}$ is given by 
\begin{equation}
\begin{cases}
d(D_{t}c_{s})=(D_{t}\alpha_{s})ds+(D_{t}\beta_{s})dw_{s},\ s>t\\
D_{t}c_{t}=\beta_{t}.
\end{cases}\label{eq: Tangent process}
\end{equation}
\end{cor*}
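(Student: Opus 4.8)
The plan is to obtain the formula as a direct application of the Malliavin chain rule to the composition of the path-functional $\delta_{t}h_{s}$, which by hypothesis lies in $\mathcal{A}_{s}$, with the It\^{o} process $c$. First I would recall the chain rule for Malliavin derivatives of functionals of a trajectory: if $G:C[0,s]\to\mathbb{R}$ is Fr\'{e}chet differentiable with derivative measure $\mu_{G}(c^{s},\cdot)$ at the path $c^{s}$ and $c$ is Malliavin differentiable, then $D_{t}\bigl(G(c^{s})\bigr)=\int_{0}^{s}(D_{t}c_{r})\,\mu_{G}(c^{s},dr)$. This is the path-space analogue of the finite-dimensional rule $D_{t}f(X)=\sum_{i}\partial_{i}f(X)\,D_{t}X_{i}$ and is proved in the usual way: first for cylindrical $G$ (smooth functions of finitely many values $c_{r_{1}},\dots,c_{r_{k}}$), where it is the ordinary chain rule, and then extended to general Fr\'{e}chet-differentiable $G$ by approximation, using closability of $D$ together with integrability bounds on $D_{t}c_{r}$ and on the total variation of $\mu_{G}$.

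Second, I would insert the structure of the class $\mathcal{A}_{s}$: by Assumption \ref{assu:Class A} applied to $G=\delta_{t}h_{s}$, the derivative measure $\mu_{G}(c^{s},dr)$ splits as an atom at the right endpoint $s$ with mass $\partial_{c_{s}}\delta_{t}h_{s}(c^{s})$ plus an absolutely continuous part with density $\delta_{r}\delta_{t}h_{s}(c^{s})$. Substituting into the chain rule yields $D_{t}\delta_{t}h_{s}(c^{s})=\partial_{c_{s}}\delta_{t}h_{s}(c^{s})\,D_{t}c_{s}+\int_{0}^{s}\delta_{r}\delta_{t}h_{s}(c^{s})\,D_{t}c_{r}\,dr$. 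I would then invoke adaptedness of the optimal policy: since $c_{r}=c_{r}^{\epsilon}$ is $\mathcal{F}_{r}$-measurable, $D_{t}c_{r}=0$ whenever $t>r$, so the lower limit of the integral may be raised from $0$ to $t$, giving the stated identity (with the atomic weight read at the right endpoint $s$ of $c^{s}$, consistent with Lemma \ref{lem: Vertical derivative}).

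Third, for the tangent process I would apply $D_{t}$ to the defining equation $c_{s}=c_{0}+\int_{0}^{s}\alpha_{u}\,du+\int_{0}^{s}\beta_{u}\,dw_{u}$. Using $D_{t}\int_{0}^{s}\alpha_{u}\,du=\int_{0}^{s}D_{t}\alpha_{u}\,du$, the commutation rule $D_{t}\int_{0}^{s}\beta_{u}\,dw_{u}=\beta_{t}\mathbf{1}_{\{t\le s\}}+\int_{0}^{s}D_{t}\beta_{u}\,dw_{u}$, and adaptedness of $\alpha,\beta$ (so that $D_{t}\alpha_{u}=D_{t}\beta_{u}=0$ for $u<t$), one obtains, for $s\ge t$, $D_{t}c_{s}=\beta_{t}+\int_{t}^{s}D_{t}\alpha_{u}\,du+\int_{t}^{s}D_{t}\beta_{u}\,dw_{u}$, i.e. $d(D_{t}c_{s})=(D_{t}\alpha_{s})\,ds+(D_{t}\beta_{s})\,dw_{s}$ with $D_{t}c_{t}=\beta_{t}$, which is \eqref{eq: Tangent process}.

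The step I expect to be the main obstacle is the rigorous justification of the functional chain rule in the first paragraph --- showing that functionals in $\mathcal{A}_{s}$ composed with $c$ belong to the domain of $D$ and that the derivative measure passes through $D_{t}$ as claimed --- together with the accompanying integrability and interchange questions, in particular the exchange of $D_{t}$ with the $ds$-integral that is used when this corollary is substituted back into Theorem \ref{prop:Optimal-policy}. Once the chain rule is available, the remaining steps --- the $\mathcal{A}_{s}$ splitting, the truncation of the integral to $\int_{t}^{s}$ by adaptedness, and the linearised SDE for the tangent process --- are routine.
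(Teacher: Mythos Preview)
Your proposal is correct and is essentially the paper's own argument, just packaged more abstractly: the paper perturbs the Brownian motion directly via $w\mapsto w+\epsilon\int_0^{\cdot}z$, expands $c$ and then $\delta_t h_s(c^s)$ to first order using the $\mathcal{A}_s$ structure, and reads off the coefficient of $z_r$ --- which is precisely the computation that establishes the functional chain rule you invoke. Your explicit derivation of the tangent process from the commutation rule $D_t\!\int_0^s\beta_u\,dw_u=\beta_t\mathbf{1}_{\{t\le s\}}+\int_0^s D_t\beta_u\,dw_u$ and your observation that adaptedness truncates the integral to $[t,s]$ are left implicit in the paper, and your reading of the atomic weight at $s$ (rather than $t$) matches the paper's proof and corrects what appears to be a typo in the displayed formula.
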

\begin{proof}
Let us perturb the underlying Brownian motion: $S(w,\epsilon z)_{t}=w_{t}+\epsilon\int_{0}^{t}z_{r}dr,$
$0\leq t\leq T.$ Then by the definition of the Malliavin derivative
\[
c_{t}(S(w,\epsilon z))=c_{t}(w)+\epsilon\int_{0}^{t}D_{r}c_{t}(w)z_{r}dr+o(\epsilon),
\]
where $D_{r}c_{t}$ is a tangent process. Hence, 
\begin{align*}
 & \delta_{t}h_{s}(c^{s}(S(w,\epsilon z)))=\delta_{t}h_{s}(c^{s}(w))+\epsilon\partial_{c_{s}}\delta_{t}h_{s}(c^{s})\int_{0}^{s}D_{r}c_{s}(w)z_{r}dr+\\
 & +\epsilon\int_{0}^{s}\delta_{r}\delta_{t}h_{s}(c^{s})\int_{0}^{r}D_{u}c_{r}(w)z_{u}dudr+o(\epsilon)=\\
 & =\delta_{t}h_{s}(c^{s}(w))+\epsilon\int_{0}^{s}z_{r}\left(\partial_{c_{s}}\delta_{t}h_{s}(c^{s})D_{r}c_{s}(w)+\int_{r}^{s}\delta_{u}\delta_{t}h_{s}(c^{s})D_{r}c_{u}(w)du\right)dr+o(\epsilon).
\end{align*}
 The result follows.
\end{proof}
This characterization is useful as it uses the tangent process which
has a particularly simple form. When $c$ is an Itô process, its tangent
process is also an Itô process with the drift and diffusion coefficients
being the Malliavin derivatives of the drift and diffusion coefficients
of the process $c$.

\section{\label{sec:Examples-of-perturbation}Examples}

In this section, we present a number of examples of the hysteresis
functionals $h_{t}$ to illustrate how to use our theoretical results.
We also provide, whenever possible, an alternative derivation using
other tools. Section \ref{subsec:The-user's-guide} is the user's
guide that describes the steps needed to apply the theory we developed.
The first example, Section \ref{subsec:Example No-hysteresis:-state-dependent},
revisits the case of no hysteresis. In this case, only the present
effects of the policy are present and the functional Itô formula reduces
to the usual Itô formula. There is no need to use the total derivative
formula. The second example, Section \ref{subsec:Example Cumulative-hysteresis},
considers cumulative hysteresis. The effects of the polices are a
function of the current policy and the integral of the past policies.
For the case of the multiplicative dependence of the effects of the
current policies and the cumulative hysteresis, the stochastic elasticity
takes a very simple form. The present marginal effects of the policy
path only have the horizontal derivative which measures how the cumulant
of the past policies changes with time and all of the vertical derivatives
are equal to zero. The total derivative formula for the conditional
expectation of the future marginal effects only has the stochastic
derivative component which itself has a very simple form that measures
how lengthy the effects of the stochastic shock are. We expand on
this class of examples in Boulatov, Riabov, and Tsyvinski (2020) where
we analyze such cases extracting a martingale and not using the main
tools of this paper -- the functional Itô formula and the total derivative
formula. Our third example, Section \ref{subsec:Example Martingale},
studies hysteresis that depends both on time and the past policies.
This example captures three important parts of path-dependent policies.
First, there is an effect of the present policy. Second, there is
an effect of the hysteresis. The hysteresis is represented by an integral
with a kernel that has joint dependence on current time and past policies.
That is, path dependency sophisticatedly changes with both the time
and evolution of the past polices. Third, there is a joint dependence
between the present and the past. For this example, we need to utilize
all of the tools developed in the paper. The functional Itô formula
gives the evolution of the present marginal effects of the policies
in terms of horizontal and vertical derivatives. The total derivative
formula straightforwardly gives the evolution of the conditional expectation
process in terms of the time and stochastic derivative. We also consider
a deterministic setting with hysteresis in Section \ref{subsec: Example Deterministic-hysteresis}.
As there is no stochasticity, there is only evolution with respect
to time or, rather, the history of policies. We decompose its effects
into the effects of the present and the past history. The present
marginal effect of policy then evolves as the first-order horizontal
and the vertical derivatives. The future marginal effects of policies
evolve as their time derivative. In Section \ref{subsec:A-tipping-point}
we consider an example with a tipping point in which the time when
the stochastic process reaches the maximum becomes a reference point.

\subsection{\label{subsec:The-user's-guide}The user's guide}

In order to apply results from the previous section one needs to calculate
a number of derivatives of the functional $(h_{t})_{0\leq t\leq T}.$
In calculation of the first order conditions (\ref{eq: FOC, perturbed}),
we need to find a Fréchet derivative of $h_{t}$:
\[
h_{t}(c+z)-h_{t}(c)=\partial_{c_{t}}h_{t}(c)z_{t}+\int_{0}^{t}\delta_{s}h_{t}(c)z_{s}ds+o(||z||).
\]
For the application of Dupire's functional Itô's formula and finding
the SDE for the present effects of the path in Section \ref{subsec:Present-effect-of}
and equation (\ref{eq: FOP, Dupire part}) we need to calculate the
horizontal derivative $\Delta_{t}\partial_{c_{t}}h_{t}(c)$ and two
vertical derivatives $\partial_{c_{t}}^{2}h_{t}(c),$ $\partial_{c_{t}}^{3}h_{t}(c).$
For the application of the total derivative formula (\ref{eq: FOP, total derivative part})
and finding the evolution equation for the future effects of the path
we need to calculate two types of derivatives of $\delta_{t}h_{s}(c)$:
(a) time derivative $\frac{\partial}{\partial t}\delta_{t}h_{s}(c)$,
and (b) the Malliavin derivatives $D_{t}\delta_{t}h_{s}(c),$

In other words, we first break the first order condition for the effects
of policies of the present and the effects of the past. The dependence
of variables in the past implies that any change in a variable affects
all the future states. We then use the functional Itô formula to derive
the evolution equation for the effect on the present and the total
derivative formula for the effects on the future. The total derivative
formula also requires calculations of the Malliavin derivatives.

\subsection{\label{subsec:Example No-hysteresis:-state-dependent}No hysteresis:
state-dependent perturbations}

We start with the simplest case -- the state-dependent perturbation
where $h$ is a function of the current state rather than a functional.
Consider the perturbation $h_{t}(c)=f(c_{t}),$ where $f:\mathbb{R}\to\mathbb{R}$
is a smooth function. The list of derivatives becomes the following.

The Fréchet derivative of the functional is just the derivative of
$f:$ 
\[
h_{t}(c+z)-h_{t}(c)=f(c_{t}+z_{t})-f(c_{t})=f'(c_{t})z_{t}+o(||z||).
\]
Correspondingly, the present marginal effects of policy are given
by the regular derivative
\[
\partial_{c_{t}}h_{t}(c)=f'(c_{t}),
\]
and the effects of the past are 
\[
\delta_{s}h_{t}=0\mbox{ for }s<t.
\]
In particular, we see that $h_{t}\in\mathcal{A}_{t}$ and $\delta_{s}h_{t}\in\mathcal{A}_{t}.$

We now turn to the analysis of the present effect of the policy path.
The horizontal derivative $\Delta_{t}\left(\partial_{c_{t}}h_{t}(c)\right)$
is equal to zero. Indeed, it is defined as a limit 
\[
\Delta_{t}\partial_{c_{t}}h_{t}(c)=\lim_{\epsilon\to0}\frac{\partial_{c_{t+\epsilon}}h_{t+\epsilon}(c_{\cdot,\epsilon})-\partial_{c_{t}}h_{t}(c)}{\epsilon},
\]
where $c_{\cdot,\epsilon}$ is an extension of the path $c$ from
$[0,t]$ to $[0,t+\epsilon]$ by $c_{s,\epsilon}=c_{t}$ for $t\leq s\leq t+\epsilon.$
It follows that\footnote{If we had $h_{t}\left(c\right)=f\left(c_{t},t\right)$ the only difference
would be that $\Delta_{t}\partial_{c_{t}}h_{t}(c)=\partial_{c_{t},t}^{2}f\left(c_{t},t\right)$ }
\[
\Delta_{t}\partial_{c_{t}}h_{t}(c)=\lim_{\epsilon\to0}\frac{f'(c_{t+\epsilon,\epsilon})-f'(c_{t})}{\epsilon}=\lim_{\epsilon\to0}\frac{f'(c_{t})-f'(c_{t})}{\epsilon}=0.
\]
Vertical differentiation of the present marginal effect of policy
is the regular differentiation of $f:$
\[
\partial_{c_{t}}^{2}h_{t}(c)=f''(c_{t}),\ \partial_{c_{t}}^{3}h_{t}(c)=f'''(c_{t}).
\]

We now turn to the analysis of the expected future marginal effects.
The derivative in $t$ of the future marginal effects in period $s:$
$\delta_{t}h_{s}(c),$ i.e. $\frac{\partial}{\partial t}\delta_{t}h_{s}(c)$
is equal to zero, since $\delta_{t}h_{s}(c)=0:$
\[
\frac{\partial}{\partial t}\delta_{t}h_{s}(c)=0.
\]
The Malliavin derivative of $\delta_{t}h_{s}(c)$ is also zero, since
$\delta_{t}h_{s}(c)=0:$
\[
D_{t}\delta_{t}h_{s}(c)=0.
\]

The formula for the dynamics of the stochastic elasticity (\ref{eq: FOP, main formula})
becomes:
\begin{align*}
dC_{t}= & -\Bigl(\underset{=0}{\underbrace{\Delta_{t}\partial_{c_{t}}h_{t}(w^{t})}dt}+\frac{1}{2}\underset{=f'''\left(w_{t}\right)}{\underbrace{\partial_{c_{t}}^{3}h_{t}(w^{t})}}dt+\underset{=f''\left(w_{t}\right)}{\underbrace{\partial_{c_{t}}^{2}h_{t}(w^{t})}}dw_{t}\Bigr)-\\
 & -\Bigl(\underset{=0}{\underbrace{-\delta_{t}h_{t}(w^{t})dt+E\left[\int_{t}^{T}\partial_{t}\delta_{t}h_{s}(w^{s})ds\bigg|\mathcal{F}_{t}\right]dt+E\left[\int_{t}^{T}D_{t}\delta_{t}h_{s}(w^{s})ds\bigg|\mathcal{F}_{t}\right]dw_{t}}}\Bigr),
\end{align*}
and
\[
dC_{t}=-\frac{1}{2}f'''(w_{t})dt-f''(w_{t})dw_{t}.
\]

We summarize the result. First, the present effect of the path from
the Section \ref{subsec:Present-effect-of} are given by the standard
space derivatives. The functional Itô formula reduces to the usual
Itô's formula. Second, all of the future effects of the path from
are zero, hence, the conditional expectation is zero.

One can, of course, immediately get the result of this section by
applying the Itô formula to $f'(c_{t}^{*}),$ see the Appendix.

\subsection{\label{subsec:Example Cumulative-hysteresis}Cumulative hysteresis}

Consider the hysteresis functional $h_{t}(c)=c_{t}\int_{0}^{t}c_{s}ds$.
Here, the path-dependence enters cumulatively as the integral of the
past realization. The cumulative hysteresis is then multiplicative
with the current policy $c_{t}.$ 

We start with the Fréchet derivative of $h_{t}$ that appears in the
first order conditions of the problem (\ref{eq: FOC, perturbed}).
It is given by varying the whole path $c_{[0,t]}$ by a variation
$z_{[0,t]}$:
\[
h_{t}(c+z)-h_{t}(c)=\left(c_{t}+z_{t}\right)\int_{0}^{t}\left(c_{s}+z_{s}\right)ds-c_{t}\int_{0}^{t}c_{s}ds=\int_{0}^{t}c_{t}z_{s}ds+z_{t}\int_{0}^{t}c_{s}ds+o(||z||).
\]
Correspondingly, the present marginal effects of policy are given
by the integral of the path of previous policies
\[
\partial_{c_{t}}h_{t}(c)=\int_{0}^{t}c_{s}ds.
\]
As all previous $c_{s}$ enter identically in the integral, the effects
of the past marginally contributes $c_{t}$:
\[
\delta_{s}h_{t}=c_{t}\mbox{ for }s<t.
\]
 In particular, we see that $h_{t}\in\mathcal{A}_{t}$ and $\delta_{s}h_{t}\in\mathcal{A}_{t}.$

The first order process is given by (\ref{eq: First order process}),
recalling that $c_{t}^{*}=w_{t}$:
\[
C_{t}=-\partial_{c_{t}}h_{t}(c_{[0,t]}^{*})-E\left[\int_{t}^{T}\delta_{t}h_{s}(c_{[0,s]}^{*})ds\bigg|\mathcal{F}_{t}\right]=-\underset{\text{Present effect}}{\underbrace{\int_{0}^{t}w_{s}ds}}-\underset{\text{Future effect}}{\underbrace{E\left[\int_{t}^{T}w_{s}ds\bigg|\mathcal{F}_{t}\right]}}.
\]
Note that in the future effects it is $\delta_{t}h_{s}(c^{t})=c_{s}^{*}=w_{s}$
that enters as it shows how policy in period $t$ ($t\le s\le T)$
affects future period $s$.

We now turn to calculation of the derivatives needed for the application
of the Dupire's functional Itô formula (\ref{eq: FOP, Dupire part}).
Recall that the horizontal and two vertical derivatives are needed.
The horizontal derivative of the present effects of the path $\partial_{c_{t}}h_{t}(c)$
is simply the time derivative of the integral: 
\[
\Delta_{t}\left(\partial_{c_{t}}h_{t}(c)\right)=\Delta_{t}\left(\int_{0}^{t}c_{s}^{*}ds\right)=c_{t}^{*}=w_{t}.
\]
The vertical differentiation of the present effects of the path $\partial_{c_{t}}h_{t}(c)$
is given by bumping the path and is equal to zero:
\begin{align*}
 & \partial_{c_{t}}\left(\partial_{c_{t}}h_{t}(c)\right)=\partial_{c_{t}}\left(\int_{0}^{t}c_{s}^{*}ds\right)=0,\\
 & \partial_{c_{t}}^{3}h_{t}(c)=0.
\end{align*}

We now turn to characterizing the expected future effects of policy
using the total derivative formula (\ref{eq:total derivative formula}).
Since the marginal effect of the policy at time $t$ on time $s$,
$\delta_{t}h_{s}\left(w^{s}\right)=w_{s}$, does not depend on time,
we get
\[
\partial_{t}\left(\delta_{t}h_{s}(w^{s})\right)=0.
\]
The Malliavin derivative is also simple as it measures the sensitivity
of $w^{s}$ to the shock $w_{t}:$
\[
D_{t}\left(\delta_{t}h_{s}\left(w^{s}\right)\right)=D_{t}w_{s}=1,
\]
and the the stochastic derivative term of the future marginal costs
becomes
\[
E\left[D_{t}\int_{t}^{T}\delta_{t}h_{s}(w^{s})ds\bigg|\mathcal{F}_{t}\right]=T-t.
\]
This has a natural interpretation. The stochastic derivative measures
the impact of the stochastic shock $dw_{t}$ on the future marginal
effects of policies. Those are represented by the integrals $\int_{t}^{T}w_{s}ds$.
A shock at time $t$ affects the future marginal effect of policy
in each period $s$as $D_{t}w_{s}=1$. Since there are $\left(T-t\right)$
future periods, the shock $dw_{t}$ has the effect $\left(T-t\right)$.
In other words, the stochastic shocks for early periods $t$ have
longer lasting impact than for the later periods.

We then gather the present and the future terms to determine the dynamics
of the first order process in (\ref{eq: FOP, main formula}):
\begin{align*}
dC_{t}= & -\Bigl(\underset{=w_{t}}{\underbrace{\Delta_{t}\partial_{c_{t}}h_{t}(w^{t})}dt}+\frac{1}{2}\underset{=0}{\underbrace{\partial_{c_{t}}^{3}h_{t}(w^{t})}}dt+\underset{=0}{\underbrace{\partial_{c_{t}}^{2}h_{t}(w^{t})}}dw_{t}\Bigr)-\\
 & -\Bigl(\underset{=w_{t}}{\underbrace{-\delta_{t}h_{t}(w^{t})}dt+}\underset{=0}{\underbrace{E\left[\int_{t}^{T}\partial_{t}\delta_{t}h_{s}(w^{s})ds\bigg|\mathcal{F}_{t}\right]}dt+}\underset{=T-t}{\underbrace{E\left[\int_{t}^{T}D_{t}\delta_{t}h_{s}(w^{s})ds\bigg|\mathcal{F}_{t}\right]}dw_{t}}\Bigr),
\end{align*}

or
\[
dC_{t}=-\left(T-t\right)dw_{t}.
\]

\begin{rem}
\label{rem:Boulatov}There are two alternative ways, without using
our methodology, to derive the result. The first is to note that $E\left[\int_{t}^{T}w_{s}ds\bigg|\mathcal{F}_{t}\right]=\left(T-t\right)w_{t}$
and then
\[
dC_{t}=-w_{t}dt-\left(\left(T-t\right)dw_{t}-w_{t}dt\right)=-\left(T-t\right)dw_{t}.
\]
The second approach that works in a variety of other circumstances
is in extracting a martingale and using the Clark-Ocone formula to
provide its explicit characterization in terms of Malliavin derivatives.
We can rewrite
\[
E\left[\int_{t}^{T}w_{s}ds\bigg|\mathcal{F}_{t}\right]=E\left[\int_{0}^{T}w_{s}ds\bigg|\mathcal{F}_{t}\right]-\int_{0}^{t}w_{s}ds,
\]
and 
\[
C_{t}=-\left(\int_{0}^{t}w_{s}ds+E\left[\int_{0}^{T}w_{s}ds\bigg|\mathcal{F}_{t}\right]-\int_{0}^{t}w_{s}ds\right)=-E\left[\int_{0}^{T}w_{s}ds\bigg|\mathcal{F}_{t}\right].
\]
The Clark-Ocone formula gives the representations of the martingale
where $M_{t}=E\left[\int_{0}^{T}w_{s}ds\bigg|\mathcal{F}_{t}\right]$
as
\[
dM_{t}=E\left[D_{t}\int_{0}^{T}w_{s}ds\bigg|\mathcal{F}_{t}\right]dw_{t}=E\left[\int_{t}^{T}\underset{=1}{\underbrace{D_{t}w_{s}}}ds\bigg|\mathcal{F}_{t}\right]dw_{t}=\left(T-t\right)dw_{t}.
\]
We expand on this class of examples in Boulatov, Riabov, and Tsyvinski
(2020) where we analyze a more general class of environments which
can be solved by extracting a martingale and not using the main tools
of this paper -- the functional Itô formula and the total derivative
formula.
\end{rem}

\subsection{\label{subsec:Example Martingale}Hysteresis with time and past dependency}

Consider the functional $h_{t}(c^{t})=h(c_{t},\int_{0}^{t}a_{t,s}c_{s}ds),$
it describes the interaction of the policy at the present moment of
time, and average of past values. The averaging is defined by the
smooth kernel $a_{t,s}$ which measures how policy in period $s$
affects period $t$. For example, $a_{t,s}=\frac{1}{t}$ corresponds
to the usual averaging. Importantly, the kernel depends on current
time $t$.

This example captures three important parts of history-dependent policies.
First, there is an effect of the present which is represented by the
first argument, $c_{t}$. Second, there is an effect of the past which
is represented by the second argument, $\int_{0}^{t}a_{t,s}c_{s}ds$.
Here, the past enters as the integral of the path of the previous
consumptions where the path enters through the kernel $a_{t,s}$ that
depends both on the current policy $t$ and the path of previous policies.\footnote{One can easily modify this part to be some more complicated path-dependent
object -- for example, $\int_{0}^{t}\alpha\left(c_{\left[0,s\right]}\right)ds$,
where $\alpha$ is a functional of the path or to have arbitrary interaction
of the past effects; or allow for $h_{t}\left(c\right)=\int_{0}^{t}\int_{0}^{t}g_{t}(c_{s},c_{r})dsdr),...$
where now there is a joint dependence on the past values $c_{s}$
and $c_{r}$ via a repeated integral.} Third, there is a joint dependence $h\left(.,.\right)$ between the
present and the past.

The Fréchet derivative of $h_{t}$ that appears in the first order
conditions (\ref{eq: FOC, perturbed}) is given by changing the whole
path $c^{t}$ by a variation $z^{t}$:
\begin{align*}
 & h_{t}(c+z)-h_{t}(c)=h\left(c_{t}+z_{t},\int_{0}^{t}a_{t,s}\left(c_{s}+z_{s}\right)ds\right)-h_{t}(c)=\\
 & =z_{t}h'_{1}\left(c_{t},\int_{0}^{t}a_{t,s}c_{s}ds\right)+\int_{0}^{t}a_{t,s}h'_{2}\left(c_{t},\int_{0}^{t}a_{t,r}c_{r}dr\right)z_{s}ds+o(||z||).
\end{align*}
Correspondingly, the present marginal effects of policy are given
by the integral of the path of previous policies
\[
\partial_{c_{t}}h_{t}(c)=h'_{1}\left(c_{t},\int_{0}^{t}a_{t,s}c_{s}ds\right).
\]
The marginal effect of policy in time $t$ n the past period $s$
($s\le t\le T)$ is given by:
\[
\delta_{s}h_{t}(c^{t})=a_{t,s}h'_{2}\left(c_{t},\int_{0}^{t}a_{t,r}c_{r}dr\right).
\]
 In particular, we see that $h_{t}\in\mathcal{A}_{t}$ and $\delta_{s}h_{t}\in\mathcal{A}_{t}.$

The first order condition (\ref{eq: FOC, perturbed}) is 
\[
w_{t}=c_{t}^{(\epsilon)}+\epsilon h'_{1}\left(c_{t},\int_{0}^{t}a_{t,s}c_{s}ds\right)+\epsilon E\left[\int_{t}^{T}a_{s,t}h'_{2}\left(c_{s},\int_{0}^{s}a_{s,r}c_{r}dr\right)ds\bigg|\mathcal{F}_{t}\right];
\]
and the stochastic elasticity (\ref{eq: First order process}) is
given by, recalling that $c_{t}^{*}=w_{t}$:
\[
C_{t}=-\underset{\text{Present effect}}{\underbrace{h'_{1}\left(w_{t},\int_{0}^{t}a_{t,s}w_{s}ds\right)}}-\underset{\text{Future effect}}{\underbrace{E\left[\int_{t}^{T}a_{s,t}h'_{2}\left(w_{s},\int_{0}^{s}a_{s,r}w_{r}dr\right)ds\bigg|\mathcal{F}_{t}\right]}}.
\]
Note that in the future effects it is $\delta_{t}h_{s}(c^{t})=a_{s,t}h'_{2}\left(w_{s},\int_{0}^{s}a_{s,r}w_{r}dr\right)$
that enters as the kernel $a_{s,t}$ measures how policy in period
$t$ ($t\le s\le T)$ affects future period $s$.

The conditional expectation $E\left[\int_{t}^{T}a_{s,t}h'_{2}\left(w_{s},\int_{0}^{s}a_{s,r}w_{r}dr\right)ds\bigg|\mathcal{F}_{t}\right]$
is intractable for direct computations, as it requires infinite-dimensional
integration over the distribution of the process $(w(r)-w(t))_{t\leq r\leq s}.$
However, the semimartingale decomposition is obtained immediately
after an application of the total derivative formula (\ref{eq:total derivative formula}). 

We now turn to calculation of the derivatives needed for the application
of the Dupire's functional Itô formula (\ref{eq: FOP, Dupire part}).
These derivatives are used in determining the decomposition of the
present effects of the path $\partial_{c_{t}}h_{t}(c)$ in the first
order conditions of the perturbed problem (\ref{eq: FOC, perturbed}).
Recall that the horizontal and two vertical derivatives are needed.

The horizontal derivative of the present effects of the path $\partial_{c_{t}}h_{t}(c)$
is given by freezing the path at time $t$ and extending it with time:
\[
\Delta_{t}\partial_{c_{t}}h_{t}(c)=\Delta_{t}h'_{1}\left(w_{t},\int_{0}^{t}a_{t,s}w_{s}ds\right)=h''_{12}\left(w_{t},\int_{0}^{t}a_{t,s}w_{s}ds\right)\left(a_{t,t}w_{t}+\int_{0}^{t}(\partial_{t}a_{t,s})w_{s}ds\right).
\]
Vertical differentiation of the present effects of the path $\partial_{c_{t}}h_{t}(c)$
is given by bumping the path. It reduces to the differentiation in
the first argument: 
\begin{align*}
 & \partial_{c_{t}}^{2}h_{t}(c)=\partial_{c_{t}}\left(h'_{1}\left(w_{t},\int_{0}^{t}a_{t,s}w_{s}ds\right)\right)=h''_{11}\left(w_{t},\int_{0}^{t}a_{t,s}w_{s}ds\right),\\
 & \partial_{c_{t}}^{3}h_{t}(c)=\partial_{c_{t}}^{2}\left(h'_{1}\left(w_{t},\int_{0}^{t}a_{t,s}w_{s}ds\right)\right)=h'''_{111}\left(w_{t},\int_{0}^{t}a_{t,s}w_{s}ds\right).
\end{align*}
 Note that since we are already finding the derivatives of the marginal
effect $\partial_{c_{t}}h_{t}(c)$, the mixed time-derivative and
the second and third order space derivatives appear (rather than just
the time and first and second order derivatives).

We next calculate the derivatives needed for the application of the
total derivative formula (\ref{eq: FOP, total derivative part}) to
characterize the conditional expectation of the marginal effects of
policy. The time derivative is given by
\[
\partial_{t}\left(\delta_{t}h_{s}(w^{s})\right)=\partial_{t}\left(a_{s,t}h'_{2}\left(w_{s},\int_{0}^{s}a_{s,r}w_{r}dr\right)\right)=(\partial_{t}a_{s,t})h'_{2}\left(w_{s},\int_{0}^{s}a_{s,r}w_{r}dr\right).
\]
The Malliavin derivative is given by
\begin{align*}
 & D_{t}\left(\delta_{t}h_{s}(w^{s})\right)=D_{t}\left(a_{s,t}h'_{2}\left(w_{s},\int_{0}^{s}a_{s,r}w_{r}dr\right)\right)=\\
 & =a_{s,t}h''_{12}\left(w_{s},\int_{0}^{s}a_{s,r}w_{r}dr\right)\underset{=1}{\underbrace{D_{t}w_{s}}}+a_{s,t}h''_{22}\left(w_{s},\int_{0}^{s}a_{s,r}w_{r}dr\right)\underset{=\int_{t}^{s}a_{s,r}D_{t}w_{r}dr=\int_{t}^{s}a_{s,r}dr}{\underbrace{D_{t}\left(\int_{0}^{s}a_{s,r}w_{r}dr\right)}}.
\end{align*}
Gathering the terms we find the differential of the first order process:
\begin{align*}
dC_{t} & =-h''_{12}\left(w_{t},\int_{0}^{t}a_{t,s}w_{s}ds\right)\left(a_{t,t}w_{t}+\int_{0}^{t}(\partial_{t}a_{t,s})w_{s}ds\right)dt-\\
 & -\left(h''_{11}\left(w_{t},\int_{0}^{t}a_{t,s}w_{s}ds\right)dw_{t}+\frac{1}{2}h'''_{111}\left(w_{t},\int_{0}^{t}a_{t,s}w_{s}ds\right)dt\right)-\\
 & -\Bigl(\underset{\delta_{t}h_{t}(w^{t})}{-\underbrace{a_{t,t}h'_{2}\left(w_{t},\int_{0}^{t}a_{t,r}w_{r}dr\right)}}dt+\underset{E\left[\int_{t}^{T}\partial_{t}\delta_{t}h_{s}(w^{s})ds\bigg|\mathcal{F}_{t}\right]}{\underbrace{E\left[\int_{t}^{T}(\partial_{t}a_{s,t})h'_{2}\left(w_{s},\int_{0}^{s}a_{s,r}w_{r}dr\right)ds\bigg|\mathcal{F}_{t}\right]}}dt+\\
 & +\underset{E\left[D_{t}\int_{t}^{T}\delta_{t}h_{s}(w^{s})ds\bigg|\mathcal{F}_{t}\right]}{\underbrace{E\left[\int_{t}^{T}(a_{s,t}h''_{12}\left(w_{s},\int_{0}^{s}a_{s,r}w_{r}dr\right)+a_{s,t}h''_{22}\left(w_{s},\int_{0}^{s}a_{s,r}w_{r}dr\right)\int_{t}^{s}a_{s,r}dr)ds\bigg|\mathcal{F}_{t}\right]}}dw_{t}.
\end{align*}

\begin{rem}
\label{rem:Detemple Zapatero}The presented calculation can be done
without the use of the total derivative formula when the occurrence
of $t$ under the integral in the conditional expectation can be removed,
e.g. when $a_{s,t}=\tilde{g}(t)g(s),$ and a martingale can be extracted.
Indeed, then future effects can be represented as 
\begin{align*}
\tilde{g}(t)E\left[\int_{t}^{T}g(s)h'_{2}\left(w_{s},\int_{0}^{s}a_{s,r}w_{r}dr\right)ds\bigg|\mathcal{F}_{t}\right] & =-\tilde{g}(t)\int_{0}^{t}g(s)h'_{2}\left(w_{s},\int_{0}^{s}a_{s,r}w_{r}dr\right)ds+\tilde{g}(t)M_{t},
\end{align*}
where
\[
M_{t}=E\left[\int_{0}^{T}g(s)h'_{2}\left(w_{s},\int_{0}^{s}a_{s,r}w_{r}dr\right)ds\bigg|\mathcal{F}_{t}\right].
\]
Now the semimartingale decomposition immediately follows from the
Itô formula. This type of examples (with exponential functions $g,\tilde{g}$)
where studied by Detemple and Zapatero (1991) in the context of asset
prices under habit formation.
\end{rem}

\subsection{\label{subsec: Example Deterministic-hysteresis}Deterministic hysteresis}

In this section, we consider a deterministic case. Assume that the
underlying process shock $\theta$ is deterministic: $d\theta_{t}=b(\theta_{t})dt.$
Then optimal policies $c^{*}$ and $c^{\epsilon}$ are deterministic
as well: 
\[
c_{t}^{*}=\theta_{t}.
\]
 Our assumption of $h_{t}$ belonging to the class $\mathcal{A}_{t}$
allows us to neatly decompose its effects into those of the effects
of the present and the past history
\[
c_{t}^{\epsilon}=\theta_{t}-\epsilon\left(\partial_{c_{t}}h_{t}(c^{\epsilon})+\epsilon\int_{t}^{1}\delta_{t}h_{s}(c^{\epsilon})ds\right).
\]
Differentiating the latter relation in $\epsilon$ at $\epsilon=0$
we get 
\[
C_{t}=-\left(\partial_{c_{t}}h_{t}(c^{*})+\int_{t}^{1}\delta_{t}h_{s}(c^{*})ds\right).
\]
Now, apply the functional Itô formula (\ref{eq:Functional Ito}) to
$\partial_{c_{t}}h_{t}(c^{*})$ and note that there are no second
order vertical derivatives of $\partial_{c_{t}}h_{t}(c^{*})$ due
to the absence of stochasticity:
\[
d\left(\partial_{c_{t}}h_{t}(c^{*})\right)=\left(\Delta_{t}\partial_{c_{t}}h_{t}(c^{*})+\partial_{c_{t}}^{2}h_{t}(c^{*})b(c_{t}^{*})\right)dt.
\]
The future marginal effects of policy are also deterministic and we
can simply differentiate $\int_{t}^{1}\delta_{t}h_{s}(c^{*})ds$ in
time.

\[
dC_{t}=-\bigg(\underset{d\left(\partial_{c_{t}}h_{t}(c^{*})\right)}{\underbrace{\Delta_{t}\partial_{c_{t}}h_{t}(c^{*})+\partial_{c_{t}}^{2}h_{t}(c^{*})b(c_{t}^{*})}}\bigg)dt-\bigg(\underset{d\left(\int_{t}^{1}\delta_{t}h_{s}(c^{*})ds)\right)}{\underbrace{-\delta_{t}h_{t}(c^{*})+\int_{t}^{1}\frac{\partial}{\partial t}\delta_{t}h_{s}(c^{*})ds)}}\bigg)dt.
\]

The present effects of the policy $\partial_{c_{t}}h_{t}(c^{*})$
change with time as the horizontal derivative $\Delta_{t}$ and with
the movement in the path of the policy as the vertical derivative
$\partial_{c_{t}}$. The cumulative future marginal effects of the
policy $\int_{t}^{1}\delta_{t}h_{s}(c^{*})ds$ change as the relative
difference between the time change in the cumulant of the future marginal
effects $\frac{\partial}{\partial t}\delta_{t}h_{s}(c^{*})ds$ relative
to today's marginal effect of the policy $\delta_{t}h_{t}(c^{*})$.
This case is interesting as a stand alone result that allows to focus
on history dependence without stochasticity.

\subsection{\label{subsec:A-tipping-point}A ``tipping point''}

The main premise of the literature on tipping points discussed in
\ref{subsec:Evidence-on-climate} is that there is some level of emissions
the crossing of which leads to a different behavior of the climate
system. For example, upon reaching the threshold the damages become
larger or become irreversible. We now show an example where instead
of considering a threshold we focus on the time when the climate variable
achieves its maximum upto any given period of time. 

Let $\theta_{t}$ be the time when the Brownian motion $w$ achieves
its maximum over $[0,t]:$
\[
\theta_{t}=\arg\max_{s\in[0,t]}w_{s}
\]
and it is known that $\theta_{t}$ is a.s. unique. Let $f(x)$ be
an absolutely continuous function such that $f(x)=0$ for $x\leq0.$
Consider the objective function
\[
E\left[\int_{0}^{T}\left(-\frac{1}{2}(c_{t}-w_{t})^{2}-\int_{0}^{t}f(s-\theta_{t})c_{s}ds\right)dt\right].
\]
One can think of this setting as follow. In each period $t$, we consider
the time $\theta_{t}$ when the maximal level of $w$ has been achieved
-- say, the time of the temperature record. The damages then are
counted as the weighted by $f\left(s-\theta_{t}\right)$ emissions
from the time of the record. When the new record is achieved, the
weighting restarts. This example can be significantly expanded by
having the weighting functions change with time or the weighting functions
that weigh both the time prior to $\theta_{t}$ and the time after
$\theta_{t}$ (a form of highlighting the ``salience'' of the record
time) but we chose to present the simple form here. That is, there
is a salient ``tipping'' or reference point following which the
damages change their behavior.

The first-order conditions give us the optimal policy in a closed
form is given by
\[
c_{t}=w_{t}-E\left[\int_{t}^{T}f(t-\theta_{s})ds\bigg|\mathcal{F}_{t}\right].
\]
The process $\xi_{t}=\int_{t}^{T}f(t-\theta_{s})ds$ is absolutely
continuous and square integrable. Indeed,
\[
\xi_{t}=\int_{t}^{T}\int_{0}^{t}f'(x-\theta_{s})dxds.
\]
Let us represent $\xi_{t}$ using the Clark-Ocone formula
\[
\xi_{t}=\mathbb{E}\xi_{t}+\int_{0}^{T}g_{t,s}dw_{s}.
\]
By the total derivative formula of Proposition \ref{prop: Total derivative formula},
we immediately find that $c$ is an Itô process and 
\begin{equation}
dc_{t}=\bigg(1-g_{t,t}\bigg)dw_{t}-\mathbb{E}[\partial_{t}\xi_{t}|\mathcal{F}_{t}]dt.\label{eq: Tipping point}
\end{equation}

\begin{rem}
Let us try to get this result using standard methods. We compute the
process $c_{t}$ explicitly as a functional of the Wiener process
$w.$ At first we find the conditional expectation 
\begin{align*}
 & E[f(t-\theta_{s})|\mathcal{F}_{t}]=E\left[\int_{0}^{\infty}f'(x)1_{x<t-\theta_{s}}dx\bigg|\mathcal{F}_{t}\right]=\\
 & =\int_{0}^{t}f'(x)P(\theta_{s}<t-x|\mathcal{F}_{t})dx=\int_{0}^{t}f'(x)P\left(\max_{[0,t-x]}w>\max_{[t-x,s]}w\bigg|\mathcal{F}_{t}\right)dx=\\
 & =\int_{0}^{t}f'(x)1_{\max_{[0,t-x]}w>\max_{[t-x,t]}w}P\left(\max_{[0,t-s]}w<z\right)\bigg|_{z=\max_{[0,t-x]}w-w_{t}}dx=\\
 & =\int_{0}^{t}f'(x)1_{\max_{[0,t-x]}w>\max_{[t-x,t]}w}\left(2\Phi\left(\frac{\max_{[0,t-x]}w-w_{t}}{\sqrt{t-s}}\right)-1\right)dx.
\end{align*}
Hence,
\begin{align*}
c_{t} & =w_{t}-\int_{0}^{t}f'(x)1_{\max_{[0,t-x]}w>\max_{[t-x,t]}w}\int_{t}^{T}\left(2\Phi\left(\frac{\max_{[0,t-x]}w-w_{t}}{\sqrt{t-s}}\right)-1\right)dsdx=\\
 & =w_{t}-\int_{0}^{t-\theta_{t}}f'(x)\int_{t}^{T}\left(2\Phi\left(\frac{\max_{[0,t]}w-w_{t}}{\sqrt{t-s}}\right)-1\right)dsdx=\\
 & =w_{t}-f(t-\theta_{t})\int_{t}^{T}\left(2\Phi\left(\frac{\max_{[0,t]}w-w_{t}}{\sqrt{t-s}}\right)-1\right)ds.
\end{align*}
We get the representation of the form
\[
c_{t}=w_{t}-f(t-\theta_{t})g(\max_{[0,t]}w-w_{t}).
\]
Now one can try two standard approaches to get the semimartingale
decompostion. The first is to apply the Itô formula. In order to do
this, we see that 
\[
c_{t}=F(t,w_{t},\theta_{t},M_{t}),
\]
where $M_{t}=\max_{[0,t]}w,$ and $F(t,x,y,z)=x-f(t-y)g(z-x).$ Hence
the Itô formula will lead to a semimartingale decomposition that contains
terms $d\theta_{t},$ $dM_{t},$ and the sum over jumps of the process
$\theta_{t}.$ It is not obvious that the process $c_{t}$ is an Itô
process.
\end{rem}
The second approach is to apply the functional Itô formula. In order
to do this the functional $c_{t}$ must be extended to a $C_{b}^{1,2}$-functional
(in the sense of Cont and Fournie) on the space of cadlag paths. However,
if such extension is possible, vertical derivatives of $c_{t}$ must
be equal to zero and the functional Itô formula would lead to semimartingale
decomposition $dc_{t}=\alpha_{t}dt,$ which is not the case.

Instead, the application of our methodology give a straightforward
and compact answer of the equation (\ref{eq: Tipping point}).

\section{Discussion}

In this section, we discuss some of the more technical issues behind
the results.

\subsection{Why do we need both the functional Itô formula and the total derivative
formula?\label{subsec:Why-do-we-need-two-formulas}}

The stochastic elasticity $C_{t}^{h}$ is described in \eqref{eq: First order process}
via two terms: a vertical derivative $\partial_{c_{t}}h_{t}(c)$ of
the perturbation functional $h_{t}$ and the conditional expectation
$\mathbb{E}[\int_{t}^{1}\delta_{t}h_{s}(c^{*})ds|\mathcal{F}_{t}].$
Both are functionals of the path $(c_{s})_{0\leq s\leq t}$ but we
need to use different methods to study them.

We assume that the functional $\partial_{c_{t}}h_{t}(c)$ satisfies
conditions from Dupire (2019). This is a natural assumption, since
$\partial_{c_{t}}h_{t}(c)$ is a well-defined time-dependent functional
of the path $(c_{s})_{0\leq s\leq t}$.

However, in the case of the functional $\mathbb{E}[\int_{t}^{1}\delta_{t}h_{s}(c^{*})ds|\mathcal{F}_{t}]$
the same assumption is not generally suitable. The application of
the Dupire's functional Itô formula requires that 
\[
\mathbb{E}\left[\int_{t}^{1}\delta_{t}h_{s}(c^{*})ds|\mathcal{F}_{t}\right]=q_{t}(m),
\]
where $m$ is a certain semimartingale and $(q_{t})_{t\in[0,T]}$
is smooth in the sense of Dupire (2019) and Cont and Fournie (2013)
family of path-dependent functionals. It may be a difficult stand-alone
problem even to verify that such representation holds. Consider, for
example, the case when 
\[
\mathbb{E}\left[\int_{t}^{1}\delta_{t}h_{s}(c^{*})ds|\mathcal{F}_{t}\right]=\int_{0}^{t}f_{t}(c_{s}^{*})dc_{s}^{*},
\]
and $f_{t}$ is smooth in $t.$ The stochastic integral is defined
for almost all realizations of $c^{*}$ only and it is unclear how
we can extend it smoothly to all continuous paths.\footnote{Recall that the functional It\textroundcap{o} formula is applicable
to functionals defined on a larger space of cádlág paths.}

Instead we assume that the expression under the conditional expectation
is smooth in $t$ and then apply the total derivative formula in Proposition
\ref{prop: Total derivative formula} to prove directly that $C^{h}$
is an Itô process and to find its semimartingale decomposition as
well.

In other words, the functional Itô formula is useful when the functional
of the path is already well-defined. When faced with a conditional
expectation process such as the one we considered here or that frequently
occurs in a variety of other economic problems, the total derivative
formula allows to straightforwardly calculate its semimartingale decomposition.

\subsection{\label{subsec:Assumptions-on-smoothness}Assumptions on smoothness
of the functionals}

Our results are valid for path-dependent functionals $h=(h_{t})_{0\leq t\leq T}$
such that: (1) $h_{t}\in\mathcal{A}_{t};$ (2) $t\to\partial_{c_{t}}h_{t}$
is horizontally and twice vertically differentiable; (3) $t\to\delta_{t}h_{s}(c^{s})$
is absolutely continuous; (4) $w\to\delta_{t}h_{s}(c^{s}(w))$ is
Malliavin differentiable.

The first restriction that $h_{t}$ belongs to the class $\mathcal{A}_{t}$
allows to calculate the first order conditions in a tractable form.
The second restriction is a typical condition needed for the functional
Itô formula to be valid. Last two conditions are imposed for the ease
of presentation and also they are satisfied in all our examples. They
are needed for the total derivative formula to be applicable for the
process 
\[
t\to E[\int_{t}^{T}\delta_{t}h_{s}(c^{s})ds|\mathcal{F}_{t}].
\]
However, these conditions can be considerably relaxed. It is enough
to find an absolutely continuous square integrable process $\xi_{t},$
such that
\[
E[\int_{t}^{T}\delta_{t}h_{s}(c^{s})ds|\mathcal{F}_{t}]=E[\xi_{t}|\mathcal{F}_{t}].
\]
As a simple example consider the process
\[
\delta_{t}g_{s}(c_{s})=w_{t}.
\]
This process is not absolutely continuous, however, $E[w_{t}|\mathcal{F}_{t}]=E[w_{T}|\mathcal{F}_{t}]$
and one can take $\xi_{t}=(T-t)w_{T}.$

\subsection{\label{subsec:Relationship-to-the_Clarrk-Ocone}Relationship to the
Clark-Ocone formula}

It is instructive to compare Proposition \ref{prop: Total derivative formula}
to the well-known Clark-Ocone formula. Recall that $\mathcal{F}=(\mathcal{F}_{t})_{t\in[0,T]}$
is a filtration generated by Wiener process $(w_{t})_{t\in[0,T]}.$
Every $\mathcal{F}$-martingale $(M_{t})_{t\in[0,T]}$ can be represented
as a conditional expectation process. Indeed, from the definition
of a martingale we get $M_{t}=\mathbb{E}[M_{T}|\mathcal{F}_{t}]$.
Conversely, every process of the form 
\[
Z_{t}=\mathbb{E}[\xi|\mathcal{F}_{t}],
\]
where $\xi$ is an integrable random variable, is a martingale. This
follows from the basic properties of conditional expectations:

\[
\mathbb{E}\left[Z_{t+h}|\mathcal{F}_{t}\right]=\mathbb{E}\left[\mathbb{E}\left[\xi|\mathcal{F}_{t+h}\right]|\mathcal{F}_{t}\right]=\mathbb{E}\left[\xi|\mathcal{F}_{t}\right]=Z_{t}.
\]
Moreover, every $\mathcal{F}$-martingale $Z_{t}=\mathbb{E}[\xi|\mathcal{F}_{t}]$
is an Itô process and if $\xi$ is Malliavin differentiable, the Clark-Ocone
formula gives 
\[
Z_{t}=Z_{0}+\int_{0}^{t}\mathbb{E}[D_{s}\xi|\mathcal{F}_{s}]dw_{s},
\]
and the differential of $Z:$

\[
dZ_{t}=\mathbb{E}[D_{t}\xi|\mathcal{F}_{t}]dw_{t}.
\]

The processes that we consider in the Proposition \ref{prop: Total derivative formula}
are of the different type - they are of the form 
\[
X_{t}=\mathbb{E}[\xi_{t}|\mathcal{F}_{t}]
\]
 with $\xi=(\xi_{t})_{t\in[0,T]}$ being an absolutely continuous
process. Such process are not martingales in general. The simplest
example is provided with deterministic non-constant process $\xi$,
e.g. if we take $\xi_{t}=t,$ then $X_{t}=t$ and this is obviously
not a martingale.

The Clark-Ocone formula is then a partial case of Proposition \ref{prop: Total derivative formula}
corresponding to constant in $t$ process $(\xi_{t})_{t\in[0,T]}$.
For the proof and general versions of Proposition \ref{prop: Total derivative formula}
we refer to the Appendix.

\subsection{\label{subsec:Discussion-of-the-optimal control}Discussion of the
optimal control approach}

A natural question is whether we can use optimal control and dynamic
programming to simplify the analysis. 

One of the standard approaches for solving optimization problems is
the dynamic programming principle. It can be used for maximization
of functionals of the type (see see Fabbri, Gozzi, and Swiech 2017
for the exposition of infinitely dimensional problems):
\[
V(a)=E\int_{0}^{T}l(t,x^{t},a(t))dt,
\]
where $a$ is the control process, the process $x$ satisfies certain
SDE whose coefficients depend on the control $a$ as well. This framework
is not well suited for our problem, as in our case the dependence
of the expression under the integral is a path-dependent functional
of the policy $c$: 
\[
V(c)=E\int_{0}^{T}\left(-\frac{1}{2}(c_{t}-w_{t})^{2}-h_{t}(c^{t})\right)dt.
\]
Hence, to fit the framework of the optimal control we must consider
the whole path $c^{t}$ as the value of the control $a(t).$ That
is, the control space becomes the set of paths $\cup_{t\in[0,T]}C[0,t].$
This leads to the consideration of the cost functional
\[
V(t,x,c^{t})=\max_{a\in C[t,T]}E_{t,x}\int_{t}^{T}\left(-\frac{1}{2}(a_{s}-w_{s-t}-x)^{2}-h_{s}(c^{t}\otimes a^{s})\right)ds.
\]
The HJB equation then becomes a path-dependent PDE that contains infinite-dimensional
optimization over trajectories of the control. Thus the analysis may
be as complicated as our original problem. For example, Cosso et.
al (2018) use functional Itô calculus to write the path-dependent
HJB equation, prove that the value function is a viscosity solution
and prove a partial comparison principle. 

\section{Conclusion}

Path-dependent policies are a feature of a number of economic models.
Doing a lot of sports only when young leads to a very different health
outcome than exercising lightly throughout life, even though the cumulative
lifetime exercise may be the same. Moreover, the importance of the
previous pattern of exercise may change with time. For example, intensively
training several times a week is best when young while a steady pattern
of light exercise during a week may be preferable when older. Missing
a credit card payment twice in a row in a ten year period leads to
a very different credit history than missing a payment once every
five years. The consequence of the same sequence of non-payments for
the credit history may be quite different in a recession versus a
boom. These policies are not just a function of the current state
or time but depend on the whole trajectory of the past actions. Moreover,
the dependency itself is changing with time. 

We developed a methodology for analysis of general class of policy
with path-dependent effects in an uncertain environment with forward
looking agents. The primary difficulty that arises in such models
is that the optimizing agents foresee such path dependency and the
actions they take incorporate the expectation of the future effects.
We show that three ingredients are needed for the analysis of such
problems. First, we introduced a general class of path-dependent functionals
that allows to tractably write the first order conditions for the
problem. Second, the recently developed functional Itô calculus allows
to describe the dynamics of the present effect of the past choices.
Third, the total derivative formula that we develop in this paper
allows to derive the dynamics of the conditional expectation processes
of the future effects of today's choices. Our analysis shows that
even when policy has very small contemporaneous effects, it may have
large effects due to either interaction with the past choices or due
to the expected future effects. The effects of the past are represented
by the magnitude of horizontal derivatives and higher order vertical
derivative. The future expected effects are determined by the magnitude
of the time derivative and by the stochastic derivatives. We believe
that the methodology we develop in, on purpose, a stark underlying
environment will facilitate analysis of a wide class of other economic
problems. 

\pagebreak{}

\section{Appendix}

\subsection{Remark \ref{rem:Discontinuous} in Section \ref{subsec:Class A_t}}

The objective function (\ref{eq:Perturbed problem}) can be written
as follows:
\begin{align*}
 & \mathbb{E}\int_{0}^{T}(-\frac{1}{2}\left(c_{t}-\theta_{t}\right)^{2}-\epsilon c_{\frac{t}{2}})dt=\mathbb{E}\int_{0}^{T}(-\frac{1}{2}\left(c_{t}-\theta_{t}\right)^{2})dt-\mathbb{E}\int_{0}^{T}\epsilon c_{\frac{t}{2}}dt=\\
 & =\mathbb{E}\int_{0}^{T}(-\frac{1}{2}\left(c_{t}-\theta_{t}\right)^{2})dt-2\mathbb{E}\int_{0}^{\frac{T}{2}}\epsilon c_{s}ds=\\
 & =\mathbb{E}\int_{0}^{\frac{T}{2}}(-\frac{1}{2}\left(c_{t}-\theta_{t}\right)^{2}-2\epsilon c_{t})dt+\mathbb{E}\int_{\frac{T}{2}}^{T}(-\frac{1}{2}\left(c_{t}-\theta_{t}\right)^{2})dt.
\end{align*}
The first-order conditions become
\[
\begin{cases}
c_{t}=\theta_{t}-2\epsilon,\ t<\frac{T}{2}\\
c_{t}=\theta_{t},\ t>\frac{T}{2},
\end{cases}
\]
which implies that the optimal policy has a jump at $t=\frac{T}{2}$.

\subsection{\label{sec:A-simple-example}An example with contemporaneous policies}

In this section, we develop a simple example of introducing additional
effects of policies that can be easily handled by the usual Itô formula.
That is, there is no hysteresis and policy only has contemporaneous
effects.

Consider an additional effect of policies given by $f(c_{t})$, where
$f:\mathbb{R}\to\mathbb{R}$ is a smooth function:
\[
\max_{c}\mathbb{E}\int_{0}^{T}(-\frac{1}{2}\left(c_{t}-w_{t}\right)^{2}-\epsilon f(c_{t}))dt,
\]
and $\epsilon$ is a parameter. The first order conditions are given
by
\begin{equation}
c_{t}^{\epsilon}=w_{t}-\epsilon f'\left(c_{t}^{\epsilon}\right),\label{eq:FOC simple example}
\end{equation}
and equate the marginal benefit of the policy tracking the process
$w_{t}$ with the additional marginal effects $\epsilon f'\left(c_{t}^{\epsilon}\right)$
.

\subsubsection{Stochastic elasticity and its dynamics}

We are particularly interested in the case of $\epsilon\rightarrow0$.
One can think about this as a small parameter asymptotics for the
problem. Let $C_{t}^{f}$ be a first variation process:
\[
c_{t}^{\epsilon}=c_{t}^{*}+\epsilon C_{t}^{f}+o(\epsilon).
\]
The process $C_{t}^{f}$ represents how optimal policy process changes
locally in response to the introduction of a small effect of policy$f$.
One can think of $C_{t}^{f}$ as a notion of stochastic elasticity
that represents the change in the whole process when policy has additional
effects $f$.

Differentiating the first order condition (\ref{eq:FOC simple example})
with respect to $\epsilon$ and evaluating at $\epsilon=0$, we get
\[
C_{t}^{f}=f'(c_{t}^{*}).
\]
Without any restrictions on $f$, the process $C_{t}^{f}$ may be
quite arbitrary, for example, discontinuous. However, since we assumed
smoothness of $f$, Itô's lemma implies that the process $C_{t}^{f}$
is a semimartingale. Moreover, its decomposition is given as (using
also that $c_{t}^{*}=w_{t}$) in closed form:
\[
dC_{t}^{f}=\alpha(w_{t})dt+\beta(w_{t})dw_{t},
\]
where
\begin{align}
\alpha(w_{t}) & =\frac{1}{2}f'''(w_{t}),\nonumber \\
\beta(w_{t}) & =f''(w_{t}).\label{eq: FOP state dependent}
\end{align}

\subsubsection{Dynamics of optimal policy}

We now turn to characterizing the optimal policy $c_{t}^{\epsilon}$
rather than the first variation process. Let the process $c_{t}^{\epsilon}$
have the form
\[
dc_{t}^{\epsilon}=\alpha^{\epsilon}\left(c_{t}^{\epsilon}\right)dt+\beta^{\epsilon}\left(c_{t}^{\epsilon}\right)dw_{t}.
\]
Applying Itô formula to (\ref{eq:FOC simple example}), we can find
the dynamics of the process $c_{t}^{\epsilon}$:
\begin{align*}
 & dc_{t}^{\epsilon}=dw_{t}-\epsilon\left(f''\left(c_{t}^{\epsilon}\right)dc_{t}^{\epsilon}+\frac{1}{2}f'''\left(c_{t}^{\epsilon}\right)\left(\beta_{t}^{\epsilon}\right)^{2}dt\right),\\
 & \alpha_{t}^{\epsilon}dt+\beta_{t}^{\epsilon}dw_{t}=dw_{t}-\epsilon\left(f''\left(c_{t}^{\epsilon}\right)\alpha_{t}^{\epsilon}+\frac{1}{2}f'''\left(c_{t}^{\epsilon}\right)\left(\beta_{t}^{\epsilon}\right)^{2}\right)dt-\epsilon f''\left(c_{t}^{\epsilon}\right)\beta_{t}^{\epsilon}dw_{t}.
\end{align*}
The drift and diffusion coefficients are then given by collecting
terms at $dt$ and $dw_{t}$:
\begin{align}
\alpha_{t}^{\epsilon}\left(c_{t}^{\epsilon}\right) & =-\epsilon\frac{\frac{1}{2}f'''\left(c_{t}^{\epsilon}\right)\left(\beta_{t}^{\epsilon}\right)^{2}}{1+\epsilon f''\left(c_{t}^{\epsilon}\right)},\nonumber \\
\beta_{t}^{\epsilon}\left(c_{t}^{\epsilon}\right) & =\frac{1}{1+\epsilon f''\left(c_{t}^{\epsilon}\right)}.\label{eq: Coefficients state dependent}
\end{align}
The coefficients in (\ref{eq: Coefficients state dependent}) are
given as a system of coupled equations and themselves depend on the
process $c_{t}^{\epsilon}$. This is in contrast to the coefficients
of the stochastic elasticity given in (\ref{eq: FOP state dependent})
which are given in closed form as they are evaluated at the process
$c_{t}^{*}=c_{t}^{\epsilon=0}=w_{t}.$

The key to tractability in the simple setting of this section is Itô
lemma. However, Itô lemma only applies to functions and not the functionals.
That is, it does not apply to the path-dependent effects of policies
studying which is the main goal of the rest of this paper.

It is useful to summarize these results in terms of how the magnitude
of the additional effects influences policy. This parallels the discussion
of Dixit (1991), Reis (2006) and Alvarez, Lippi, and Paciello (2011,
2016) who show that in the environments with uncertainty and adjustment
frictions the costs up to the fourth order may have first order effects
on the dynamics of optimal policies. In our setting, since the additional
effects $f$ are smooth, we show that the effects up to the third
order, that is the second and the third derivative of $f$ will have
the first-order effects. The first and second derivative of the marginal
effects (that is, the second and third derivative of $f$) matter
for the drift of the optimal policy -- this is the evolution of optimal
policy with respect to time. The first derivative of the marginal
effects matters for the diffusion coefficients -- this is response
of optimal policy to stochastic shocks. Importantly, the dynamics
of the optimal policy is not directly influenced by either the past
or the future evolution of the policies.

\subsection{\label{subsec:Appendix Vertical-derivatives}Vertical derivatives
and the class $\mathcal{A}_{t}$ of functionals}

The functional $h:C[0,t]\to\mathbb{R}$ is said to be in the class
$\mathcal{A}_{t}$ if for each path $c\in C[0,t]$ there exists a
number $\partial_{c_{t}}h(c)$ and an integrable function $(\delta_{s}h(c))_{s\in[0,t]},$
such that the following asymptotic relation holds for any $c\in C[0,t]:$
\[
h(c+z)=h(c)+\partial_{c_{t}}h(c)\cdot z_{t}+\int_{0}^{t}\delta_{s}h(c)\cdot z_{s}ds+o(||z||),\ z\to0.
\]
The number $\partial_{c_{t}}h(c)$ is the derivative of $h$ along
the value of the path $(c_{s})_{s\in[0,t]}$ at time $t$ (i.e. the
derivative along the present value). The function $(\delta_{s}h(c))_{s\in[0,t]},$
represents the integrated influence of the past on the variation of
the functional.
\begin{lem}
Assume that the functional $h\in\mathcal{A}_{t}$ admits a continuous
extension to the space $D[0,t]$ of cadlag paths (equipped with the
Skorokhod topology). Then the functional $h$ is vertically differentiable
at every $c\in C[0,t],$ and the vertical derivative coincides with
$\partial_{c_{t}}h(c).$
\end{lem}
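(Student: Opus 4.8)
The plan is to reduce the statement to a single one-sided limit computation and then to transport the Fréchet-type expansion that is available on $C[0,t]$ to the discontinuous perturbation $\epsilon e_{t}$ by approximating $e_{t}$ with narrowing continuous ramps. Write $\beta:=\partial_{c_{t}}h(c)$ for the atom appearing in the $\mathcal{A}_{t}$-expansion of $h$ at the fixed continuous path $c$. By definition of Dupire's vertical derivative it suffices to show that $\bigl(h(c+\epsilon e_{t})-h(c)\bigr)/\epsilon\to\beta$ as $\epsilon\to0$, where $h(c+\epsilon e_{t})$ is evaluated through the assumed continuous extension of $h$ to $D[0,t]$ (note that $c+\epsilon e_{t}$ is a cadlag, not continuous, path).

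First I would introduce, for each integer $n\ge1$, the continuous ``tent toward the endpoint'' $\psi_{n}\in C[0,t]$ given by $\psi_{n}(s)=\max\{0,\,1-n(t-s)\}$, so that $\psi_{n}(t)=1$, $\psi_{n}\equiv0$ on $[0,t-\tfrac{1}{n}]$, and $\|\psi_{n}\|_{\infty}=1$. For $|\epsilon|$ small the path $\epsilon\psi_{n}$ lies in $C[0,t]$ and has norm $|\epsilon|$, so the defining property of $\mathcal{A}_{t}$ applies: given $\eta>0$ there is $r>0$ (depending on $c$ and $\eta$ but not on $n$) such that for all $n$ simultaneously and all $|\epsilon|<r$,
\[
\Bigl| h(c+\epsilon\psi_{n})-h(c)-\epsilon\beta-\epsilon\int_{0}^{t}\delta_{s}h(c)\,\psi_{n}(s)\,ds \Bigr|\le \eta|\epsilon| .
\]
Since $\delta_{\cdot}h(c)\in L^{1}[0,t]$ and $0\le\psi_{n}\le1$ is supported on $[t-\tfrac{1}{n},t]$, the integral term is bounded in absolute value by $\int_{t-1/n}^{t}|\delta_{s}h(c)|\,ds$, which tends to $0$ as $n\to\infty$ by absolute continuity of the Lebesgue integral.

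Next I would let $n\to\infty$ with $\epsilon$ fixed. The tents converge to the endpoint indicator, $\epsilon\psi_{n}\to\epsilon e_{t}$, hence $c+\epsilon\psi_{n}\to c+\epsilon e_{t}$ in $D[0,t]$, and continuity of the extension gives $h(c+\epsilon\psi_{n})\to h(c+\epsilon e_{t})$. Passing to the limit in the displayed inequality yields $|h(c+\epsilon e_{t})-h(c)-\epsilon\beta|\le\eta|\epsilon|$ for all $|\epsilon|<r$; dividing by $\epsilon$ and letting $\epsilon\to0$ shows $\limsup_{\epsilon\to0}\bigl|(h(c+\epsilon e_{t})-h(c))/\epsilon-\beta\bigr|\le\eta$, and since $\eta>0$ is arbitrary the vertical derivative exists and equals $\beta=\partial_{c_{t}}h(c)$, as claimed.

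The delicate point, and the only place the \emph{continuity} of the extension is used, is the limit interchange $h(c+\epsilon\psi_{n})\to h(c+\epsilon e_{t})$; this is also the reason the conclusion genuinely requires more than Fréchet differentiability on $C[0,t]$. For \emph{fixed} $\epsilon$ the path $c+\epsilon e_{t}$ is bounded away from $C[0,t]$ in the uniform (and in the $J_{1}$ Skorokhod) metric, because a jump located at the right endpoint of the interval cannot be produced by time-changing continuous approximants, so one must be precise about which Skorokhod-type topology on $D[0,t]$ the extension is continuous for. In the $M_{1}$ sense -- convergence of completed graphs -- the tents $\epsilon\psi_{n}$ do converge to $\epsilon e_{t}$ and the argument above goes through verbatim; more weakly, the proof only needs the extension to be sequentially continuous along the specific sequences $c+\epsilon\psi_{n}$, so I would state the hypothesis in whichever of these forms is cleanest for the paper's conventions.
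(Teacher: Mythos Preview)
Your argument is correct and is essentially the paper's own proof: both use narrowing continuous ramps $\psi_{n}$ supported on $[t-\tfrac{1}{n},t]$ with $\psi_{n}(t)=1$, apply the $\mathcal{A}_{t}$-expansion uniformly in $n$, kill the integral term via $\int_{t-1/n}^{t}|\delta_{s}h(c)|\,ds\to 0$, and invoke continuity of the extension to pass from $h(c+\epsilon\psi_{n})$ to $h(c+\epsilon e_{t})$. Your final paragraph is in fact more careful than the paper, which simply asserts that $e_{t}$ ``can be approximated in the Skorokhod topology by continuous functions'' without specifying which variant; your observation that the $J_{1}$ metric does not see this convergence (an endpoint jump is invariant under time changes) while $M_{1}$ does, and that only sequential continuity along the ramps is actually needed, sharpens the hypothesis.
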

\begin{proof}
Fix $\alpha>0.$ There exists $\delta>0$ such that for any $z\in C[0,t]$
with $||z||\leq\delta,$ 
\[
\bigg|h(c+z)-h(c)-\partial_{c_{t}}h(c)\cdot z_{t}-\int_{0}^{t}\delta_{s}h(c)\cdot z_{s}ds\bigg|\leq\alpha||z||.
\]
The function $e_{t}\in D[0,t]$ can be approximated in the Skorokhod
topology by continuous functions $z^{(n)},$ such that $0\leq z^{(n)}\leq1,$
$z_{s}^{(n)}=0$ for $s\leq t-\frac{1}{n},$ $z_{t}^{(n)}=1.$ We
have for all $\epsilon\leq\delta$ 
\[
\bigg|h(c+\epsilon z^{(n)})-h(c)-\partial_{c_{t}}h(c)\cdot\epsilon-\epsilon\int_{t-\frac{1}{n}}^{t}\delta_{s}h(c)\cdot z_{s}^{(n)}ds\bigg|\leq\alpha\epsilon||z^{(n)}||=\alpha\epsilon.
\]
Dividing by $\epsilon$ we get 
\[
\bigg|\frac{h(c+\epsilon z^{(n)})-h(c)}{\epsilon}-\partial_{c_{t}}h(c)\bigg|\leq\alpha+\int_{t-\frac{1}{n}}^{t}|\delta_{s}h(c)|ds
\]
Taking $n\to\infty,$ 
\[
\bigg|\frac{h(c+\epsilon e_{t})-h(c)}{\epsilon}-\partial_{c_{t}}h(c)\bigg|\leq\alpha.
\]
Since $\alpha>0$ is arbitrary, this proves that the vertical derivative
of $h$ exists and is equal to $\partial_{c_{t}}h(c).$ 
\end{proof}

\subsection{Proof of Claim \ref{claim:Class A is dense}}

We recall that the class $\mathcal{A}_{T}$ consists of functionals
$g:C[0,T]\to\mathbb{R}$ such that for all $c,z\in C[0,T]$ 
\[
g(c+\epsilon z)=g(c)+\epsilon\partial_{c_{t}}g(c)z_{t}+\epsilon\int_{0}^{T}\delta_{s}g(c)z_{s}ds+o(\epsilon),\ \epsilon\to0.
\]
For any function $c\in C[0,T]$ introduce the transformation 
\[
(R^{(n)}c)_{t}=e^{-n(T-t)}c_{T}+n\int_{t}^{T}e^{-n(s-t)}c_{s}ds.
\]
The function $R^{(n)}c$ solves the problem 
\[
\begin{cases}
\frac{dR^{(n)}c}{dt}=-n(c_{t}-(R^{(n)}c)_{t})\\
(R^{(n)}c)_{T}=c_{T}
\end{cases}
\]
Hence, $R^{(n)}c\to c$ uniformly on $[0,T].$ Given any Frechet differentiable
functional $g:C[0,T]\to\mathbb{R},$ consider
\[
g^{(n)}(c)=g(R^{(n)}c).
\]
Fix $c\in C[0,T]$ and let the measure $\mu$ be the Frechet derivative
of $g$ at $R^{(n)}c.$ 
\begin{align*}
g^{(n)}(c+\epsilon z) & =g(R^{(n)}c+\epsilon R^{(n)}z)=g(R^{(n)})(c)+\epsilon\int_{0}^{T}(R^{(n)}z)_{s}\mu(ds)+o(\epsilon)=\\
 & =g^{(n)}(c)+\epsilon\int_{0}^{T}\left(e^{-n(T-s)}z_{T}+n\int_{s}^{T}e^{-n(r-s)}z_{r}dr\right)\mu(ds)+o(\epsilon)=\\
 & =g^{(n)}(c)+\epsilon\left(\int_{0}^{T}e^{-n(T-s)}\mu(ds)\right)z_{T}+\epsilon\int_{0}^{T}\left(\int_{0}^{r}e^{-n(r-s)}\mu(ds)\right)z_{r}dr+o(\epsilon)
\end{align*}
So, $g^{(n)}\in\mathcal{A}_{T}.$ Every differentiable functional
is a pointwise limit of functionals from the class $\mathcal{A}_{T}.$

\subsection{Total derivative formula: a general version of Proposition \ref{prop: Total derivative formula}}

Let $\zeta=(\zeta_{t})_{t\in[0,T]}$ be a measurable process. We will
say that $\zeta$ is square integrable, if $\mathbb{E}\int_{0}^{T}\zeta_{s}^{2}ds<\infty.$

We say that a measurable stochastic process $\xi=(\xi_{t})_{t\in[0,T]}$
is absolutely continuous, if $\xi_{0}$ is square integrable and there
exists a square integrable process $(\zeta_{t})_{t\in[0,T]}$, such
that 
\begin{equation}
\xi_{t}=\xi_{0}+\int_{0}^{t}\zeta_{s}ds,\ 0\leq t\leq T.\label{abs_cont_process}
\end{equation}
Observe that an absolutely continuous process $\xi$ satisfies $\sup_{0\leq t\leq T}\mathbb{E}\xi_{t}^{2}<\infty$
and is a.s. continuous. It is important that an absolutely continuous
process need not be adapted. Since the processes under consideration
are square integrable and measurable with respect to a Wiener process
$w$, their values can be represented as stochastic integrals. Namely,
\begin{equation}
\xi_{t}=\mathbb{E}\xi_{t}+\int_{0}^{T}g_{t,s}dw_{s},\ \zeta_{t}=\mathbb{E}\zeta_{t}+\int_{0}^{T}h_{t,s}dw_{s},\ 0\leq t\leq T,\label{adapted_derivatives}
\end{equation}
where $g_{t}=(g_{t,s})_{s\in[0,T]}$ and $h_{t}=(h_{t,s})_{s\in[0,T]}$
are progressively measurable (in $s$) square integrable processes.
We need one technical result on the regularity of the process $g_{t}.$
\begin{lem*}
For each $t\in[0,T]$ and almost all $s\in[0,T]$ 
\[
g_{t,s}=g_{0,s}+\int_{0}^{t}h_{r,s}dr\mbox{ a.s.}
\]
In particular, for almost all $s\in[0,T]$ the process $t\to g_{t,s}$
has a continuous modification.
\end{lem*}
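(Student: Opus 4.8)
The plan is to read the identity off from the Wiener--It\^o representations \eqref{adapted_derivatives} by matching integrands. Fix $t\in[0,T]$. Subtracting the representations of $\xi_t$ and $\xi_0$ gives
\[
\xi_t-\xi_0=\mathbb{E}[\xi_t-\xi_0]+\int_0^T\bigl(g_{t,s}-g_{0,s}\bigr)\,dw_s,
\]
while inserting the representation of $\zeta_r$ into $\xi_t-\xi_0=\int_0^t\zeta_r\,dr$ and integrating in $r$ gives
\[
\xi_t-\xi_0=\mathbb{E}\!\int_0^t\zeta_r\,dr+\int_0^t\Bigl(\int_0^T h_{r,s}\,dw_s\Bigr)dr.
\]
Both right-hand sides equal the same $L^2$ random variable, and the decomposition of such a variable into its mean plus a centered It\^o integral is unique; so the two It\^o integrals agree a.s., and then uniqueness of the integrand (equivalently, the It\^o isometry applied to the difference) forces $g_{t,s}-g_{0,s}=\int_0^t h_{r,s}\,dr$ for $ds\otimes d\mathbb{P}$-almost every $(s,\omega)$, which is the asserted identity at this $t$.

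The step that needs care, and the one I expect to be the main obstacle, is the interchange
\[
\int_0^t\Bigl(\int_0^T h_{r,s}\,dw_s\Bigr)dr=\int_0^T\Bigl(\int_0^t h_{r,s}\,dr\Bigr)dw_s,
\]
i.e.\ the stochastic Fubini theorem. To invoke it I would first fix a jointly measurable version of $(r,s,\omega)\mapsto h_{r,s}(\omega)$ that is, for each $r$, progressively measurable in $s$ (the integrands in \eqref{adapted_derivatives} may be chosen to depend measurably on the parameter $r$), and then check the integrability hypothesis. The latter follows from the square integrability of $\zeta$: the It\^o isometry gives $\mathbb{E}\int_0^T h_{r,s}^2\,ds=\mathbb{E}\zeta_r^2-(\mathbb{E}\zeta_r)^2\le\mathbb{E}\zeta_r^2$, and $\int_0^T\mathbb{E}\zeta_r^2\,dr<\infty$ by assumption, so $r\mapsto\bigl(\mathbb{E}\int_0^T h_{r,s}^2\,ds\bigr)^{1/2}$ lies in $L^2[0,T]\subset L^1[0,T]$, which is a standard sufficient condition for stochastic Fubini on $[0,t]$.

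For the ``in particular'' claim I would note that Cauchy--Schwarz gives $\int_0^T|h_{r,s}|\,dr\le T^{1/2}\bigl(\int_0^T h_{r,s}^2\,dr\bigr)^{1/2}$, whence $\mathbb{E}\int_0^T\!\int_0^T|h_{r,s}|\,dr\,ds<\infty$ and therefore $\int_0^T|h_{r,s}|\,dr<\infty$ a.s.\ for almost every $s$. For each such $s$ the process $t\mapsto g_{0,s}+\int_0^t h_{r,s}\,dr$ is absolutely continuous --- in particular continuous --- in $t$, and by the identity just established it agrees with $g_{t,s}$ for a.e.\ $(s,\omega)$ at every fixed $t$; taking it as the representative of $g_{\cdot,s}$ furnishes the desired continuous-in-$t$ modification.
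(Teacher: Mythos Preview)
Your argument is correct, but the paper's own proof takes a somewhat different and arguably more elementary route. Rather than invoking the stochastic Fubini theorem, the paper tests both sides against an arbitrary progressively measurable square-integrable process $u$: using the It\^o isometry, one computes
\[
\mathbb{E}\int_0^T g_{t,s}u_s\,ds=\mathbb{E}\Bigl[\xi_t\int_0^T u_s\,dw_s\Bigr]
=\mathbb{E}\Bigl[\xi_0\int_0^T u_s\,dw_s\Bigr]+\int_0^t\mathbb{E}\Bigl[\zeta_r\int_0^T u_s\,dw_s\Bigr]dr,
\]
and then applies the It\^o isometry once more to rewrite the right-hand side as $\mathbb{E}\int_0^T\bigl(g_{0,s}+\int_0^t h_{r,s}\,dr\bigr)u_s\,ds$. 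Since $u$ is arbitrary, the identity in $L^2(\Omega\times[0,T])$ follows. The only interchange needed is the ordinary Fubini theorem for $\mathbb{E}\bigl[\int_0^t\zeta_r\,dr\cdot X\bigr]=\int_0^t\mathbb{E}[\zeta_r X]\,dr$, whose integrability hypothesis is immediate from Cauchy--Schwarz and the square integrability of $\zeta$.

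What each approach buys: your direct comparison of It\^o integrands is conceptually cleaner and makes the role of the representations \eqref{adapted_derivatives} transparent, but it requires the stochastic Fubini theorem and hence the existence of a jointly measurable version of $(r,s,\omega)\mapsto h_{r,s}(\omega)$, which you rightly flag as the main technical point. The paper's testing argument sidesteps this entirely by reducing everything to scalar expectations and ordinary Fubini, at the mild cost of being slightly less direct. Your treatment of the ``in particular'' clause is fine and essentially the same as what the paper implies.
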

\begin{proof}
Let $u=(u_{s})_{s\in[0,T]}$ be an arbitrary progressively measurable
square integrable process. Using Itô's isometry we compute 
\begin{align*}
\mathbb{E}\int_{0}^{T}g_{t,s}u_{s}ds & =\mathbb{E}\int_{0}^{T}g_{t,s}dw_{s}\int_{0}^{T}u_{s}dw_{s}=\mathbb{E}\xi_{t}\int_{0}^{T}u_{s}dw_{s}=\\
 & =\mathbb{E}\xi_{0}\int_{0}^{T}u_{s}dw_{s}+\int_{0}^{t}\left(\mathbb{E}\zeta_{r}\int_{0}^{T}u_{s}dw_{s}\right)dr=\\
 & =\mathbb{E}\int_{0}^{T}g_{0,s}u_{s}ds+\int_{0}^{t}\left(\mathbb{E}\int_{0}^{T}h_{r,s}u_{s}ds\right)dr=\mathbb{E}\int_{0}^{T}\left(g_{0,s}+\int_{0}^{t}h_{r,s}dr\right)u_{s}ds.
\end{align*}
Since the latter holds for arbitrary $u$ we deduce that $g_{t}=g_{0}+\int_{0}^{t}h_{r}dr$
as elements of $L^{2}(\Omega\times[0,T]).$
\end{proof}
Further we will always deal with a continuous in t modifications of
processes $t\to g_{t,s}.$
\begin{prop*}
Let $\eta=(\eta_{t})_{t\in[0,T]}$ be a square integrable progressively
measurable process. Then $\eta$ is an Itô process if and only if
it can be represented in the form $\eta_{t}=\mathbb{E}[\xi_{t}|\mathcal{F}_{t}]$
for some square integrable absolutely continuous process $\xi=(\xi_{t})_{t\in[0,T]}.$
In this case the semimartingale representation of $\eta$ is given
by 
\[
d\eta_{t}=\mathbb{E}[\zeta_{t}|\mathcal{F}_{t}]dt+g_{t,t}dw_{t},
\]
where processes $\zeta$ and $g_{t}$ are determined from \eqref{abs_cont_process},
\eqref{adapted_derivatives}.
\end{prop*}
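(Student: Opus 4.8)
The plan is to prove the two implications separately, extracting the explicit semimartingale decomposition along the way, using only the martingale representation theorem and the regularity lemma just proved. For the ``only if'' direction, suppose $\eta$ is an It\^o process, $\eta_t=\eta_0+\int_0^t b_s\,ds+\int_0^t\sigma_s\,dw_s$ with square integrable coefficients $b,\sigma$. I would simply exhibit a representing process by setting $\xi_t:=\eta_T-\int_t^T b_s\,ds=\bigl(\eta_0+\int_0^T\sigma_s\,dw_s\bigr)+\int_0^t b_s\,ds$. This $\xi$ is absolutely continuous in the required sense, with $\xi_0=\eta_0+\int_0^T\sigma_s\,dw_s$ square integrable and $\zeta_s=b_s$, and conditioning on $\mathcal F_t$ gives $\mathbb E[\xi_t\mid\mathcal F_t]=\eta_0+\int_0^t b_s\,ds+\int_0^t\sigma_s\,dw_s=\eta_t$, using that $b$ is progressively measurable (so $\mathbb E[b_s\mid\mathcal F_t]=b_s$ for $s\le t$) and that $\int_0^T\sigma_s\,dw_s$ projects onto $\int_0^t\sigma_s\,dw_s$. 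This direction is routine; its only delicate point is the standing convention that an It\^o process has coefficients in $L^2(\Omega\times[0,T])$, which is exactly what makes $\xi$ absolutely continuous in the sense of the definitions preceding the statement.

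The core is the converse, which simultaneously produces the formula. Suppose $\xi_t=\xi_0+\int_0^t\zeta_s\,ds$ is square integrable and $\eta_t=\mathbb E[\xi_t\mid\mathcal F_t]$. Using the representations \eqref{adapted_derivatives} and conditioning on $\mathcal F_t$ one has $\eta_t=\mathbb E\xi_t+\int_0^t g_{t,s}\,dw_s$ and $\mathbb E[\zeta_t\mid\mathcal F_t]=\mathbb E\zeta_t+\int_0^t h_{t,s}\,dw_s$, and by the Lemma just proved I may work with the continuous-in-$t$ modification satisfying $g_{t,s}=g_{0,s}+\int_0^t h_{r,s}\,dr$, while $\mathbb E\xi_t=\mathbb E\xi_0+\int_0^t\mathbb E\zeta_r\,dr$. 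The key step is to show that
\[
M_t:=\eta_t-\int_0^t\mathbb E[\zeta_r\mid\mathcal F_r]\,dr
\]
is an $\mathcal F$-martingale. Computing $\mathbb E[M_{t'}\mid\mathcal F_t]-M_t$ for $t'>t$, the deterministic increment $\mathbb E\xi_{t'}-\mathbb E\xi_t=\int_t^{t'}\mathbb E\zeta_r\,dr$ cancels the deterministic part of $\int_t^{t'}\mathbb E[\zeta_r\mid\mathcal F_t]\,dr$, while the stochastic remainder $\int_0^t(g_{t',s}-g_{t,s})\,dw_s=\int_0^t\bigl(\int_t^{t'}h_{r,s}\,dr\bigr)dw_s$, after a stochastic Fubini, equals $\int_t^{t'}\bigl(\int_0^t h_{r,s}\,dw_s\bigr)dr$, which is precisely the stochastic part of $\int_t^{t'}\mathbb E[\zeta_r\mid\mathcal F_t]\,dr$ (since $\mathbb E[\zeta_r\mid\mathcal F_t]=\mathbb E\zeta_r+\int_0^t h_{r,s}\,dw_s$); hence $\mathbb E[M_{t'}\mid\mathcal F_t]=M_t$.

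Since $\eta_T\in L^2$ and $\int_0^T\mathbb E[\zeta_r\mid\mathcal F_r]\,dr\in L^2$ (by Jensen and square integrability of $\zeta$), $M$ is a square integrable martingale, so $M_t=M_0+\int_0^t\phi_s\,dw_s$ for some progressively measurable square integrable $\phi$, and it remains to identify $\phi$. I would test against an arbitrary bounded progressively measurable $u$: from $\mathbb E\bigl[M_T\int_0^T u_s\,dw_s\bigr]=\mathbb E\int_0^T\phi_s u_s\,ds$, expanding $M_T$, applying It\^o's isometry, using $\mathbb E[\zeta_r\mid\mathcal F_r]=\mathbb E\zeta_r+\int_0^r h_{r,s}\,dw_s$, and simplifying with $\int_s^T h_{r,s}\,dr=g_{T,s}-g_{s,s}$ (again the Lemma), the whole expression collapses to $\mathbb E\int_0^T g_{s,s}u_s\,ds$. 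Hence $\phi_s=g_{s,s}$ in $L^2(\Omega\times[0,T])$ --- in particular the diagonal $(g_{s,s})_{s\in[0,T]}$ is a genuine square integrable integrand --- and since $M_0=\mathbb E\xi_0=\eta_0$ we obtain
\[
\eta_t=\eta_0+\int_0^t\mathbb E[\zeta_r\mid\mathcal F_r]\,dr+\int_0^t g_{s,s}\,dw_s,
\]
i.e.\ $d\eta_t=\mathbb E[\zeta_t\mid\mathcal F_t]\,dt+g_{t,t}\,dw_t$. Feeding the representation $\xi_t=\eta_T-\int_t^T b_s\,ds$ from the first part back into this formula shows $g_{t,t}=\sigma_t$ and $\mathbb E[\zeta_t\mid\mathcal F_t]=b_t$, so the decomposition does not depend on the representing process chosen.

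The step I expect to be the main obstacle is the handling of the diagonal $g_{t,t}$: making sure it is jointly measurable and square integrable in $(\omega,t)$, which is exactly what the martingale-identification argument above supplies, together with justifying the stochastic Fubini exchanges under nothing more than the square-integrability hypotheses --- these are standard but rely on the $L^2$ bounds from \eqref{abs_cont_process}, \eqref{adapted_derivatives} and on the continuous-in-$t$ modification from the Lemma. Everything else is bookkeeping with conditional expectations and It\^o's isometry.
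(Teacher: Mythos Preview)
Your proposal is correct and follows essentially the same approach as the paper: exhibit the absolutely continuous representative $\xi_t=\eta_0+\int_0^T\sigma_s\,dw_s+\int_0^t b_s\,ds$ for the forward direction, and for the converse define the compensated process $M_t=\eta_t-\eta_0-\int_0^t\mathbb{E}[\zeta_r\mid\mathcal{F}_r]\,dr$, verify it is a martingale, represent it as a stochastic integral, and identify the integrand as $g_{s,s}$ via It\^o's isometry together with the Lemma. The only minor difference is that the paper verifies the martingale property of $M$ directly from the tower property and the relation $\xi_t=\xi_s+\int_s^t\zeta_r\,dr$, without invoking the Lemma or stochastic Fubini at that stage; your route through the explicit $g_{t,s}$ representation also works but is slightly more circuitous there.
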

\begin{proof}
Assume that $\eta$ is an Itô process. Then it can be written in the
form
\[
\eta_{t}=\eta_{0}+\int_{0}^{t}\alpha_{s}ds+\int_{0}^{t}\beta_{s}dw_{s},\ 0\leq t\leq T.
\]
Introduce the process
\[
\xi_{t}=\eta_{0}+\int_{0}^{T}\beta_{s}dw_{s}+\int_{0}^{t}\alpha_{s}ds.
\]
Observe that the process $\xi$ is an absolutely continuous process.
Since stochastic integrals are martingales, we deduce that 
\[
\mathbb{E}\left[\xi_{t}|\mathcal{F}_{t}\right]=\eta_{0}+\int_{0}^{t}\alpha_{s}ds+\mathbb{E}\left[\int_{0}^{T}\beta_{s}dw_{s}|\mathcal{F}_{t}\right]=\eta_{0}+\int_{0}^{t}\alpha_{s}ds+\int_{0}^{t}\beta_{s}dw_{s}=\eta_{t}.
\]
Conversely, assume that $\eta_{t}=\mathbb{E}[\xi_{t}|\mathcal{F}_{t}]$
with some absolutely continuous process $\xi.$ Related processes
$\zeta,g_{t}$ are defined in \eqref{abs_cont_process}, \eqref{adapted_derivatives}.
Consider the process 
\[
M_{t}=\eta_{t}-\eta_{0}-\int_{0}^{t}\mathbb{E}[\zeta_{r}|\mathcal{F}_{r}]dr,\ 0\leq t\leq T.
\]
We verify that the process $M$ is a martingale. Indeed, for $s<t$
we compute
\begin{align*}
\mathbb{E}[M_{t}|\mathcal{F}_{s}] & =\mathbb{E}\left[\eta_{t}-\eta_{0}-\int_{0}^{t}\mathbb{E}[\zeta_{r}|\mathcal{F}_{r}]dr|\mathcal{F}_{s}\right]=\\
 & =\mathbb{E}\left[\mathbb{E}[\xi_{t}|\mathcal{F}_{t}]-\eta_{0}-\int_{0}^{t}\mathbb{E}[\zeta_{r}|\mathcal{F}_{r}]dr|\mathcal{F}_{s}\right]=\\
 & \mathbb{E}[\xi_{t}|\mathcal{F}_{s}]-\eta_{0}-\int_{s}^{t}\mathbb{E}[\zeta_{r}|\mathcal{F}_{s}]dr-\int_{0}^{s}\mathbb{E}[\zeta_{r}|\mathcal{F}_{r}]dr=\\
 & =\mathbb{E}\left[\xi_{s}+\int_{s}^{t}\zeta_{r}dr|\mathcal{F}_{s}\right]-\eta_{0}-\int_{s}^{t}\mathbb{E}[\zeta_{r}|\mathcal{F}_{s}]dr-\int_{0}^{s}\mathbb{E}[\zeta_{r}|\mathcal{F}_{r}]dr=\\
 & =\mathbb{E}\left[\xi_{s}|\mathcal{F}_{s}\right]-\eta_{0}-\int_{0}^{s}\mathbb{E}[\zeta_{r}|\mathcal{F}_{r}]dr=\eta_{s}-\eta_{0}-\int_{0}^{s}\mathbb{E}[\zeta_{r}|\mathcal{F}_{r}]dr=M_{s}.
\end{align*}
As a Wiener martingale, the process $M$ has a representation as a
stochastic integral. Since $M_{0}=0$ the representation is of the
form:
\[
M_{t}=\int_{0}^{t}v_{s}dw_{s}
\]
for some progressively measurable square integrable process $v=(v_{s})_{s\in[0,T]}.$
To find this process we use Itô's isometry. For arbitrary progressively
measurable square integrable process $u=(u_{s})_{s\in[0,T]},$ we
have 
\begin{align*}
 & \mathbb{E}\int_{0}^{T}v_{s}u_{s}ds=\mathbb{E}\int_{0}^{T}v_{s}dw_{s}\int_{0}^{T}u_{s}dw_{s}=\mathbb{E}M_{T}\int_{0}^{T}u_{s}dw_{s}=\\
 & =\mathbb{E}\left(\xi_{T}-\eta_{0}-\int_{0}^{T}\mathbb{E}[\zeta_{r}|\mathcal{F}_{r}]dr\right)\int_{0}^{T}u_{s}dw_{s}=\mathbb{E}\left(\xi_{T}\int_{0}^{T}u_{s}dw_{s}\right)-\int_{0}^{T}\mathbb{E}\left(\zeta_{r}\int_{0}^{r}u_{s}dw_{s}\right)dr=\\
 & =\mathbb{E}\int_{0}^{T}g_{T,s}u_{s}ds-\int_{0}^{T}\mathbb{E}\int_{0}^{r}h_{r,s}u_{s}dsdr=\mathbb{E}\int_{0}^{T}\left(g_{T,s}-\int_{s}^{T}h_{r,s}dr\right)u_{s}ds.
\end{align*}
From the previous Lemma we deduce
\[
v_{s}=g_{T,s}-\int_{s}^{T}h_{r,s}dr=g_{0,s}+\int_{0}^{s}h_{r,s}dr=g_{s,s}.
\]
Finally, equality
\[
\eta_{t}=\eta_{0}+\int_{0}^{t}\mathbb{E}[\zeta_{s}|\mathcal{F}_{s}]ds+M_{t}=\eta_{0}+\int_{0}^{t}\mathbb{E}[\zeta_{s}|\mathcal{F}_{s}]ds+\int_{0}^{t}g_{s.s}dw_{s}
\]
implies the needed semimartingale representation of $\eta.$
\end{proof}
\begin{rem*}
For almost all $t\in[0,T]$ the process $\zeta_{t}$ is the time-derivative
of the process $\xi_{t}.$ If we assume that $\xi_{t}$ is Malliavin
differentiable, then the Clark-Ocone formula implies
\[
g_{t,t}=\mathbb{E}\left[D_{t}\xi_{t}|\mathcal{F}_{t}\right].
\]
In these terms the result of the theorem can be written as
\[
d\mathbb{E}\left[\xi_{t}|\mathcal{F}_{t}\right]=\mathbb{E}\left[\partial_{t}\xi_{t}|\mathcal{F}_{t}\right]dt+\mathbb{E}\left[D_{t}\xi_{t}|\mathcal{F}_{t}\right]dw_{t},
\]
which is the motivation behind the term ``total derivative formula''.
\end{rem*}

\subsection{\label{subsec:Characterization-of-It=0000F4}Characterization of
Itô processes: converse of Proposition \ref{prop: Total derivative formula}}

The total derivative formula states that for any square integrable
absolutely continuous process $(\eta_{t})_{t\in[0,T]},$ the process
$\xi_{t}=E[\eta_{t}|\mathcal{F}_{t}]$ is an Itô process. In this
section we give the converse statement.
\begin{claim*}
For any square integrable Itô process $(\xi_{t})_{t\in[0,T]}$ there
exists an absolutely continuous process $(\eta_{t})_{t\in[0,T]}$
such that
\[
\xi_{t}=E[\eta_{t}|\mathcal{F}_{t}].
\]
\end{claim*}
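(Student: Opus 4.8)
The plan is to read the required absolutely continuous process directly off the It\^o decomposition of $\xi$, which is precisely the inverse of the construction used in the general total derivative formula (the Proposition proved just above, and Proposition \ref{prop: Total derivative formula}). First I would fix the canonical semimartingale representation of the given square integrable It\^o process,
\[
\xi_t = \xi_0 + \int_0^t \alpha_s\,ds + \int_0^t \beta_s\,dw_s,\qquad 0\le t\le T,
\]
with $\alpha=(\alpha_s)$ and $\beta=(\beta_s)$ progressively measurable. Square integrability of $\xi$ (meaning $\sup_{0\le t\le T}\mathbb{E}\xi_t^2<\infty$, or $\mathbb{E}\int_0^T\xi_s^2\,ds<\infty$) forces $\alpha$ and $\beta$ to be square integrable as well, the latter via It\^o's isometry. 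The candidate process is then
\[
\eta_t := \Bigl(\xi_0 + \int_0^T \beta_s\,dw_s\Bigr) + \int_0^t \alpha_s\,ds .
\]
The idea is to absorb the entire terminal stochastic integral $\int_0^T\beta_s\,dw_s$ into the ``initial value'', so that what is left depending on $t$, namely $\int_0^t\alpha_s\,ds$, is genuinely absolutely continuous in $t$.

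Next I would check that $\eta$ is an absolutely continuous process in the sense of Appendix \ref{subsec:Characterization-of-It=0000F4}: we have $\eta_t = \eta_0 + \int_0^t\alpha_s\,ds$ with $\eta_0 = \xi_0 + \int_0^T\beta_s\,dw_s$, and $\eta_0\in L^2(\Omega)$ because $\xi_0\in L^2$ and $\int_0^T\beta_s\,dw_s\in L^2$ by It\^o's isometry, while $\alpha\in L^2(\Omega\times[0,T])$ as noted. Then I would compute the adapted projection: $\xi_0$ and $\int_0^t\alpha_s\,ds$ are $\mathcal{F}_t$-measurable, and since $\bigl(\int_0^t\beta_s\,dw_s\bigr)_{0\le t\le T}$ is a martingale, $\mathbb{E}\bigl[\int_0^T\beta_s\,dw_s\bigm|\mathcal{F}_t\bigr]=\int_0^t\beta_s\,dw_s$. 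Hence
\[
\mathbb{E}[\eta_t\mid\mathcal{F}_t] = \xi_0 + \int_0^t\alpha_s\,ds + \int_0^t\beta_s\,dw_s = \xi_t,
\]
which is exactly the claimed representation.

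I do not expect a genuine obstacle here: the argument is the mirror image of the ``only if'' part of the Proposition in the preceding subsection (there an absolutely continuous $\xi$ is mapped to the It\^o process $\mathbb{E}[\xi_t\mid\mathcal{F}_t]$; here an It\^o process is mapped back to an absolutely continuous $\eta$), and it requires only linearity of conditional expectation and the martingale property of It\^o integrals. The one point worth stating explicitly is that the process $\eta$ produced this way is in general \emph{not} adapted --- its value $\eta_t$ already encodes the whole path of $\int_0^T\beta_s\,dw_s$ --- which is permitted, since the statement asks only for a square integrable absolutely continuous $\eta$, not an adapted one; one may also note that $\eta$ is far from unique, as any absolutely continuous perturbation with vanishing adapted projection may be added.
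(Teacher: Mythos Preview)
Your proof is correct and is essentially identical to the paper's own argument: the paper also writes $\xi_t=\xi_0+\int_0^t\alpha_s\,ds+\int_0^t\beta_s\,dw_s$, defines $\eta_t=\xi_0+\int_0^t\alpha_s\,ds+\int_0^T\beta_s\,dw_s$, and checks $\mathbb{E}[\eta_t\mid\mathcal{F}_t]=\xi_t$ via the martingale property of the stochastic integral. Your additional remarks on non-adaptedness and non-uniqueness of $\eta$ are accurate and add useful context, but the core construction and verification match the paper exactly.
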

\begin{proof}
By the definition of the Itô process, there exist adapted square integrable
processes $(\alpha_{t})_{t\in[0,T]}$ and $(\beta_{t})_{t\in[0,T]}$
such that 
\[
\xi_{t}=\xi_{0}+\int_{0}^{t}\alpha_{s}ds+\int_{0}^{t}\beta_{s}dw_{s}.
\]
We simply define 
\[
\eta_{t}=\xi_{0}+\int_{0}^{t}\alpha_{s}ds+\int_{0}^{T}\beta_{s}dw_{s}.
\]
The process $\eta$ is absolutely continuous, and 
\[
E[\eta_{t}|\mathcal{F}_{t}]=\xi_{0}+\int_{0}^{t}\alpha_{s}ds+E\left[\int_{0}^{T}\beta_{s}dw_{s}\bigg|\mathcal{F}_{t}\right]=
\]
\[
=\xi_{0}+\int_{0}^{t}\alpha_{s}ds+\int_{0}^{t}\beta_{s}dw_{s}=\xi_{t}.
\]
\end{proof}

\newpage{}

\section*{References}

\[
\]

Acemoglu, Daron, Philippe Aghion, Leonardo Bursztyn, and David Hemous.
\textquotedbl The environment and directed technical change.\textquotedbl{}
American economic review 102, no. 1 (2012): 131-66.

Acemoglu, Daron, Philippe Aghion, Lint Barrage and David Hemous \textquotedbl Climate
change, directed innovation, and energy transition: The long-run consequences
of the shale gas revolution.\textquotedbl{} In 2019 Meeting Papers,
no. 1302. Society for Economic Dynamics, 2019.

Acemoglu, Daron, Ufuk Akcigit, Douglas Hanley, and William Kerr. \textquotedbl Transition
to clean technology.\textquotedbl{} Journal of Political Economy 124,
no. 1 (2016): 52-104.

Acemoglu, Daron, Georgy Egorov, and Konstantin Sonin. Institutional
Change and Institutional Persistence. No. w27852. National Bureau
of Economic Research, 2020.

Aghion, Philippe, Antoine Dechezleprêtre, David Hemous, Ralf Martin,
and John Van Reenen. \textquotedbl Carbon taxes, path dependency,
and directed technical change: Evidence from the auto industry.\textquotedbl{}
Journal of Political Economy 124, no. 1 (2016): 1-51.

Aghion, Philippe, Cameron Hepburn, Alexander Teytelboym, and Dimitri
Zenghelis. \textquotedbl Path dependence, innovation and the economics
of climate change.\textquotedbl{} In Handbook on Green Growth. Edward
Elgar Publishing, 2019.

Akerlof, George A., and Janet L. Yellen. \textquotedbl A near-rational
model of the business cycle, with wage and price inertia.\textquotedbl{}
The Quarterly Journal of Economics 100, no. Supplement (1985): 823-838.

Alvarez, Fernando, and Francesco Lippi. \textquotedbl The Analytic
Theory of a Monetary Shock.\textquotedbl{} Einaudi Institute for Economics
and Finance Mimeo, in preparation (2019).

Alvarez, Fernando E., Francesco Lippi, and Luigi Paciello. \textquotedbl Optimal
price setting with observation and menu costs.\textquotedbl{} The
Quarterly journal of economics 126, no. 4 (2011): 1909-1960.

Alvarez, Fernando E., Francesco Lippi, and Luigi Paciello. \textquotedbl Monetary
shocks in models with inattentive producers.\textquotedbl{} The Review
of economic studies 83, no. 2 (2016): 421-459.

Alvarez, Fernando, Herve Le Bihan, and Francesco Lippi. \textquotedbl The
real effects of monetary shocks in sticky price models: a sufficient
statistic approach.\textquotedbl{} American Economic Review 106, no.
10 (2016): 2817-51.

Alvarez, Fernando E., Francesco Lippi, and Aleksei Oskolkov. The Macroeconomics
of Sticky Prices with Generalized Hazard Functions. No. w27434. National
Bureau of Economic Research, 2020.

Arrow, Kenneth J. \textquotedbl Increasing returns: historiographic
issues and path dependence.\textquotedbl{} The European Journal of
the History of Economic Thought 7, no. 2 (2000): 171-180.

Bakkensen, Laura A., and Lint Barrage. Flood risk belief heterogeneity
and coastal home price dynamics: Going under water?. No. w23854. National
Bureau of Economic Research, 2017.

Baldwin, Elizabeth, Yongyang Cai, and Karlygash Kuralbayeva. \textquotedbl To
build or not to build? Capital stocks and climate policy\textasteriskcentered .\textquotedbl{}
Journal of Environmental Economics and Management 100 (2020): 102235.

Barnett, Michael, William Brock, and Lars Peter Hansen. \textquotedbl Pricing
uncertainty induced by climate change.\textquotedbl{} The Review of
Financial Studies 33, no. 3 (2020): 1024-1066.

Barrage, Lint, and Jacob Furst. \textquotedbl Housing investment,
sea level rise, and climate change beliefs.\textquotedbl{} Economics
letters 177 (2019): 105-108.

Bergemann, Dirk, and Philipp Strack. \textquotedbl Dynamic revenue
maximization: A continuous time approach.\textquotedbl{} Journal of
Economic Theory 159 (2015): 819-853.

Berger, David W., Konstantin Milbradt, Fabrice Tourre, and Joseph
Vavra. Mortgage prepayment and path-dependent effects of monetary
policy. No. w25157. National Bureau of Economic Research, 2018.

Blanchard, Olivier J., and Lawrence H. Summers. \textquotedbl Hysteresis
and the European unemployment problem.\textquotedbl{} NBER macroeconomics
annual 1 (1986): 15-78.Bhamra, Harjoat S., and Raman Uppal. \textquotedbl The
effect of introducing a non-redundant derivative on the volatility
of stock-market returns when agents differ in risk aversion.\textquotedbl{}
The Review of Financial Studies 22, no. 6 (2009): 2303-2330.

Borovi\v{c}ka, J. and Hansen, L.P., 2016. Term structure of uncertainty
in the macroeconomy. In Handbook of Macroeconomics (Vol. 2, pp. 1641-1696).
Elsevier.

Borovi\v{c}ka, Jaroslav, Lars Peter Hansen, and José A. Scheinkman.
\textquotedbl Shock elasticities and impulse responses.\textquotedbl{}
Mathematics and Financial Economics 8, no. 4 (2014): 333-354.

Borovi\v{c}ka, Jaroslav, Lars Peter Hansen, Mark Hendricks, and José
A. Scheinkman. \textquotedbl Risk-price dynamics.\textquotedbl{}
Journal of Financial Econometrics 9, no. 1 (2011): 3-65.

Boucher, O., P. R. Halloran, E. J. Burke, M. Doutriaux-Boucher, C.
D. Jones, J. Lowe, M. A. Ringer, E. Robertson, and P. Wu. \textquotedbl Reversibility
in an Earth System model in response to CO2 concentration changes.\textquotedbl{}
Environmental Research Letters 7, no. 2 (2012): 024013.

Boulatov, Alexei, Georgii Riabov, and Aleh Tsyvinski. ``Optimal trading
with dynamic stochastic price impact'', Working paper (2020).

Brian, Arthur, W. \textquotedbl Competing technologies, increasing
returns, and lock-in by historical events.\textquotedbl{} The economic
journal 99, no. 394 (1989): 116-131.

Brock, William A., and Lars Peter Hansen. \textquotedbl Wrestling
with uncertainty in climate economic models.\textquotedbl{} University
of Chicago, Becker Friedman Institute for Economics Working Paper
2019-71 (2018).

Cai, Yongyang, and Thomas S. Lontzek. \textquotedbl The social cost
of carbon with economic and climate risks.\textquotedbl{} Journal
of Political Economy 127, no. 6 (2019): 2684-2734.

Collins, Matthew, Reto Knutti, Julie Arblaster, Jean-Louis Dufresne,
Thierry Fichefet, Pierre Friedlingstein, Xuejie Gao et al. \textquotedbl Long-term
climate change: projections, commitments and irreversibility.\textquotedbl{}
In Climate Change 2013-The Physical Science Basis: Contribution of
Working Group I to the Fifth Assessment Report of the Intergovernmental
Panel on Climate Change, pp. 1029-1136. Cambridge University Press,
2013.

Cont, Rama, and David-Antoine Fournie. \textquotedbl Functional Itô
calculus and stochastic integral representation of martingales.\textquotedbl{}
The Annals of Probability 41, no. 1 (2013): 109-133.

Cosso, Andrea, Salvatore Federico, Fausto Gozzi, Mauro Rosestolato,
and Nizar Touzi. \textquotedbl Path-dependent equations and viscosity
solutions in infinite dimension.\textquotedbl{} The Annals of Probability
46, no. 1 (2018): 126-174.

Cvitani\'{c}, Jaksa, and Semyon Malamud. \textquotedbl Equilibrium
driven by discounted dividend volatility.\textquotedbl{} Swiss Finance
Institute Research Paper 09-34 (2009)

Cvitani\'{c}, Jakša, Dylan Possamaï, and Nizar Touzi. \textquotedbl Moral
hazard in dynamic risk management.\textquotedbl{} Management Science
63, no. 10 (2017): 3328-3346.

David, Paul A. \textquotedbl Clio and the Economics of QWERTY.\textquotedbl{}
The American Economic Review 75, no. 2 (1985): 332-337.

Detemple, Jerome B., and Fernando Zapatero. \textquotedbl Asset prices
in an exchange economy with habit formation.\textquotedbl{} Econometrica:
Journal of the Econometric Society (1991): 1633-1657.

Dietz, Simon, James Rising, Thomas Stoerk, and Gernot Wagner. \textquotedbl Tipping
Points in the Climate System and the Economics of Climate Change.\textquotedbl{}
In EGU General Assembly Conference Abstracts, p. 2959. 2020.

Dixit, Avinash. \textquotedbl Analytical approximations in models
of hysteresis.\textquotedbl{} The Review of Economic Studies 58, no.
1 (1991): 141-151.

Dixit, Avinash. \textquotedbl Investment and hysteresis.\textquotedbl{}
Journal of economic perspectives 6, no. 1 (1992): 107-132.

Dupire, Bruno. \textquotedbl Functional Itô calculus.\textquotedbl{}
2009. Bloomberg Working Paper.

Dupire, Bruno. \textquotedbl Functional Itô calculus.\textquotedbl{}
Quantitative Finance 19, no. 5 (2019): 721-729. 

Eggertsson, Gauti B., Neil R. Mehrotra, and Jacob A. Robbins. \textquotedbl A
model of secular stagnation: Theory and quantitative evaluation.\textquotedbl{}
American Economic Journal: Macroeconomics 11, no. 1 (2019): 1-48.

Egorov, Georgy, and Konstantin Sonin. The Political Economics of Non-democracy.
No. w27949. National Bureau of Economic Research, 2020.

Eliseev, Alexey V., Pavel F. Demchenko, Maxim M. Arzhanov, and Igor
I. Mokhov. \textquotedbl Transient hysteresis of near-surface permafrost
response to external forcing.\textquotedbl{} Climate dynamics 42,
no. 5-6 (2014): 1203-1215.

Fabbri, Giorgio, Fausto Gozzi, and Andrzej Swiech. \textquotedbl Stochastic
optimal control in infinite dimension.\textquotedbl{} Probability
and Stochastic Modelling. Springer (2017).

Fouquet, Roger. \textquotedbl Path dependence in energy systems and
economic development.\textquotedbl{} Nature Energy 1, no. 8 (2016):
1-5.

Fournié, Eric, Jean-Michel Lasry, Jérôme Lebuchoux, Pierre-Louis Lions,
and Nizar Touzi. \textquotedbl Applications of Malliavin calculus
to Monte Carlo methods in finance.\textquotedbl{} Finance and Stochastics
3, no. 4 (1999): 391-412.

Galí, Jordi. Insider-outsider labor markets, hysteresis and monetary
policy. No. w27385. National Bureau of Economic Research, 2020.

Garbe, Julius, Torsten Albrecht, Anders Levermann, Jonathan F. Donges,
and Ricarda Winkelmann. \textquotedbl The hysteresis of the Antarctic
ice sheet.\textquotedbl{} Nature 585, no. 7826 (2020): 538-544.

Gasser, Thomas, Mehdi Kechiar, Phillippe Ciais, E. J. Burke, Thomas
Kleinen, Dan Zhu, Yuanyuan Huang, Altug Ekici, and M. Obersteiner.
\textquotedbl Path-dependent reductions in CO 2 emission budgets
caused by permafrost carbon release.\textquotedbl{} Nature Geoscience
11, no. 11 (2018): 830-835.

Giglio, Stefano, Bryan T. Kelly, and Johannes Stroebel. Climate Finance.
No. w28226. National Bureau of Economic Research, 2020.

Golosov, Mikhail, John Hassler, Per Krusell, and Aleh Tsyvinski. \textquotedbl Optimal
taxes on fossil fuel in general equilibrium.\textquotedbl{} Econometrica
82, no. 1 (2014): 41-88.

Grubb, Michael, Rutger-Jan Lange, Nicolas Cerkez, Pablo Salas, Jean-Francois
Mercure, and Ida Sognnaes. \textquotedbl Taking Time Seriously: Implications
for Optimal Climate Policy.\textquotedbl{} (2020).

Hassler, John, Per Krusell, Conny Olovsson, and Michael Reiter. \textquotedbl On
the effectiveness of climate policies.\textquotedbl{} IIES WP (2020).

Huang, Zangbo and Michael Sockin ``Dynamic Taxation with Endogenous
Skill Premia'', 2018.

Hong, Harrison, Neng Wang, and Jinqiang Yang. Mitigating Disaster
Risks in the Age of Climate Change. National Bureau of Economic Research,
2020.

Jordà, Òscar, Sanjay R. Singh, and Alan M. Taylor. The long-run effects
of monetary policy. No. w26666. National Bureau of Economic Research,
2020.

Kotlikoff, L.J., Kubler, F., Polbin, A. and Scheidegger, S., 2020.
Pareto-Improving Carbon-Risk Taxation (No. w26919). National Bureau
of Economic Research.

Krasnosel'skii, Mark A., and Aleksei V. Pokrovskii. Systems with hysteresis.
Springer Science \& Business Media, 2012.

Lemoine, Derek. \textquotedbl The climate risk premium: how uncertainty
affects the social cost of carbon.\textquotedbl{} Journal of the Association
of Environmental and Resource Economists 8, no. 1 (2021): 27-57.

Lemoine, Derek, and Christian Traeger. \textquotedbl Watch your step:
optimal policy in a tipping climate.\textquotedbl{} American Economic
Journal: Economic Policy 6, no. 1 (2014): 137-66.

Li, Xin, Borghan Narajabad, and Ted Temzelides. \textquotedbl Robust
dynamic energy use and climate change.\textquotedbl{} Quantitative
Economics 7, no. 3 (2016): 821-857.

Ljungqvist, Lars, and Thomas J. Sargent. Recursive macroeconomic theory.
MIT press, 2018.

Lontzek, Thomas S., Yongyang Cai, Kenneth L. Judd, and Timothy M.
Lenton. \textquotedbl Stochastic integrated assessment of climate
tipping points indicates the need for strict climate policy.\textquotedbl{}
Nature Climate Change 5, no. 5 (2015): 441-444.

Mankiw, N. Gregory. \textquotedbl Small menu costs and large business
cycles: A macroeconomic model of monopoly.\textquotedbl{} The Quarterly
Journal of Economics 100, no. 2 (1985): 529-537.

Makris, Miltiadis, and Alessandro Pavan. Taxation under learning-by-doing.
Working Paper, 2017.

Makris, Miltiadis, and Alessandro Pavan. Wedge Dynamics with Endogenous
Private Information: A General Recursive Characterization. mimeo,
Northwestern University, 2018.

Meng, Kyle C. Estimating path dependence in energy transitions. No.
w22536. National Bureau of Economic Research, 2016.

Nohara, Daisuke, Yoshikatsu Yoshida, Kazuhiro Misumi, and Masamichi
Ohba. \textquotedbl Dependency of climate change and carbon cycle
on CO2 emission pathways.\textquotedbl{} Environmental Research Letters
8, no. 1 (2013): 014047.

Nordhaus, W.D., 1994. Managing the global commons: the economics of
climate change (Vol. 31). Cambridge, MA: MIT press.

Page, Scott E. \textquotedbl Path dependence.\textquotedbl{} Quarterly
Journal of Political Science 1, no. 1 (2006): 87-115.

Pavan, Alessandro. \textquotedbl Dynamic mechanism design: Robustness
and endogenous types.\textquotedbl{} In Advances in Economics and
Econometrics: Eleventh World Congress, pp. 1-62. 2017.

Pavan, Alessandro, Ilya Segal, and Juuso Toikka. \textquotedbl Dynamic
mechanism design: A myersonian approach.\textquotedbl{} Econometrica
82, no. 2 (2014): 601-653.

Reis, Ricardo. \textquotedbl Inattentive producers.\textquotedbl{}
The Review of Economic Studies 73, no. 3 (2006): 793-821.

Prescott, Edward C. \textquotedbl Should control theory be used for
economic stabilization?.\textquotedbl{} In Carnegie-Rochester Conference
Series on Public Policy, vol. 7, pp. 13-38. North-Holland, 1977.

Rogelj, J., D. Shindell, K. Jiang, S. Fifita, P. Forster, V. Ginzburg,
C. Handa, H. Kheshgi, S. Kobayashi, E. Kriegler, L. Mundaca, R. Séférian,
and M.V.Vilariño, 2018: Mitigation Pathways Compatible with 1.5°C
in the Context of Sustainable Development. In: Global Warming of 1.5°C.
An IPCC Special Report on the impacts of global warming of 1.5°C above
pre-industrial levels and related global greenhouse gas emission pathways,
in the context of strengthening the global response to the threat
of climate change, sustainable development, and efforts to eradicate
poverty {[}Masson-Delmotte, V., P. Zhai, H.-O. Pörtner, D. Roberts,
J. Skea, P.R. Shukla, A. Pirani, W. Moufouma-Okia, C. Péan, R. Pidcock,
S. Connors, J.B.R. Matthews, Y. Chen, X. Zhou, M.I. Gomis, E. Lonnoy,
T. Maycock, M. Tignor, and T. Waterfield (eds.){]}. In Press.

Sannikov, Yuliy. \textquotedbl Moral hazard and long-run incentives.\textquotedbl{}
Unpublished working paper, Princeton University (2014).

Serrat, Angel. \textquotedbl A dynamic equilibrium model of international
portfolio holdings.\textquotedbl{} Econometrica 69, no. 6 (2001):
1467-1489.

Summers, Lawrence H. \textquotedbl US economic prospects: Secular
stagnation, hysteresis, and the zero lower bound.\textquotedbl{} Business
economics 49, no. 2 (2014): 65-73.

Temzelides, Ted. \textquotedbl Needed: Robustness in Climate Economics.\textquotedbl{}
Coping with the climate crisis (2016): 104.

Traeger, Christian P. \textquotedbl Ace--analytic climate economy
(with temperature and uncertainty).\textquotedbl{} (2018).

Van Den Bremer, Ton, and Rick van der Ploeg. \textquotedbl Pricing
Carbon Under Economic and Climactic Risks: Leading-Order Results from
Asymptotic Analysis.\textquotedbl{} (2018).

van der Ploeg, Frederick, and Aart de Zeeuw. \textquotedbl Climate
tipping and economic growth: precautionary capital and the price of
carbon.\textquotedbl{} Journal of the European Economic Association
16, no. 5 (2018): 1577-1617.

Walsh, Conor. \textquotedbl Hysteresis Implies Scale Effects.\textquotedbl{}
(2020): Working paper.

Zickfeld, K., Arora, V.K. and Gillett, N.P., 2012. Is the climate
response to CO2 emissions path-dependent?. Geophysical Research Letters,
39(5).

\end{document}